\numberwithin{equation}{section}
\numberwithin{figure}{section}
\theoremstyle{plain}
\newtheorem{thm}{\protect\theoremname}
  \theoremstyle{remark}
  \theoremstyle{plain}
  \newtheorem{lem}[thm]{\protect\lemmaname}
 \newlist{casenv}{enumerate}{4}
 \setlist[casenv]{leftmargin=*,align=left,widest={iiii}}
 \setlist[casenv,1]{label={{\itshape\ \casename} \arabic*.},ref=\arabic*}
 \setlist[casenv,2]{label={{\itshape\ \casename} \roman*.},ref=\roman*}
 \setlist[casenv,3]{label={{\itshape\ \casename\ \alph*.}},ref=\alph*}
 \setlist[casenv,4]{label={{\itshape\ \casename} \arabic*.},ref=\arabic*}
  \providecommand{\lemmaname}{Lemma}
  \providecommand{\remarkname}{Remark}
 \providecommand{\casename}{Case}
\providecommand{\theoremname}{Theorem}
\begin{document}

\title[Blow-up profile of ground states of boson stars]{On blow-up profile of ground states of boson stars with external potential}

\author{Dinh-Thi Nguyen}
\address{Dinh-Thi Nguyen, Mathematisches Institut, Ludwig-Maximilans-Universit\"at M\"unchen, Theresienstr. 39, 80333 M\"unchen, Germany.} 
\email{nguyen@math.lmu.de}

\maketitle

\begin{abstract}
We study minimizers of the pseudo-relativistic Hartree functional $\mathcal{E}_{a}(u):=\|(-\Delta+m^{2})^{1/4}u\|_{L^{2}}^{2}-\frac{a}{2}\int_{\mathbb{R}^{3}}(\left|\cdot\right|^{-1}\star |u|^{2})(x)|u(x)|^{2}{\rm d}x+\int_{\mathbb{R}^{3}}V(x)|u(x)|^{2}{\rm d}x$ under the mass constraint $\int_{\mathbb{R}^3}|u(x)|^2{\rm d}x=1$. Here $m>0$ is the mass of particles and $V\geq 0$ is an external potential. We prove that minimizers exist if and only if $a$ satisfies $0\leq a<a^{*}$, and there is no minimizer if $a\geq a^*$, where $a^*$ is called the \emph{Chandrasekhar limit}. When $a$ approaches $a^*$ from below, the blow-up behavior of minimizers is derived under some general external potentials $V$. Here we consider three cases of $V$: trapping potential, i.e. $V\in L_{{\rm loc}}^{\infty}(\mathbb{R}^3)$ satisfies $\lim_{|x|\to \infty}V(x)=\infty$; periodic potential, i.e. $V\in C(\mathbb{R}^3)$ stisfies $V(x+z)=V(x)$ for all $z\in\mathbb{Z}^3$; and ring-shaped potential, e.g. $
 V(x)=||x|-1|^p$ for some $p>0$.
\end{abstract}

\section{Introduction}
In this paper, we study the variational problem 
\begin{equation}
I(a):=\inf\left\{ \mathcal{E}_{a}(u):u\in H^{1/2}(\mathbb{R}^{3}),\|u\|_{L^{2}}^{2}=1\right\} \label{eq:boson star energy}
\end{equation}
where $\mathcal{E}_{a}$ is the energy functional of a \emph{boson star}, given by 
\begin{equation}
\mathcal{E}_{a}(u):=\|(-\Delta+m^{2})^{1/4}u\|_{L^{2}}^{2}+\int_{\mathbb{R}^{3}}V(x)|u(x)|^{2}{\rm d}x-\frac{a}{2}\iint_{\mathbb{R}^{3}\times\mathbb{R}^{3}}\frac{|u(x)|^{2}|u(y)|^{2}}{\left|x-y\right|}{\rm d}x{\rm d}y\label{eq:boson star functional}
\end{equation}
Here the pseudo-differential operator $\sqrt{-\Delta+m^{2}}$ is defined as usual via Fourier transform, i.e. 
$$
\|(-\Delta+m^{2})^{1/4}u\|_{L^{2}}^{2}=\int_{\mathbb{R}^{3}}\sqrt{|\xi|^{2}+m^{2}}\left|\widehat{u}(\xi)\right|^{2}{\rm d}\xi.
$$
In the physical context of boson star, the parameter $m>0$ is the mass of particles. The function $V \ge 0$ stands for an external potential, which will be assumed to be trapping (i.e. $V(x)\to \infty$ as $|x|\to \infty$) or periodic (the case $V=0$ is also allowed). The coupling constant $a>0$ describes the strength of the attractive interaction.

The functional $\mathcal{E}_{a}$ effectively describes the energy per particle of a boson star. The derivation of this functional from many-body quantum theory can be found in \cite{LiYa-87,LeNaNi-14}. In this context, we can interpret $a=gN$ where $g$ is the gravitational constant and $N$ is the number of particle in the star. It is a fundamental fact that the boson star {\em collapses} when $N$ is larger than a critical number, often called the {\em  Chandrasekhar limit}. In the effective model \eqref{eq:boson star energy}, the collapse simply boils down to the fact that $I(a)=-\infty$ if $a$ is larger than a critical value $a^{*}$.   

From the simple inequality
$$ 
\sqrt{-\Delta} \le \sqrt{-\Delta+m^2} \le \sqrt{-\Delta}+m
$$
and a standard scaling argument, we can see that the critical $a^{*}$ is independent of $m$ and $V$. Indeed, it is the optimal constant in the interpolation inequality
\begin{equation}\label{eq:boson star inequality}
\|(-\Delta)^{1/4}u\|_{L^{2}}^{2}\|u\|_{L^{2}}^{2}\ge\frac{a^{*}}{2}\iint_{\mathbb{R}^{3}\times\mathbb{R}^{3}}\frac{|u(x)|^{2}|u(y)|^{2}}{\left|x-y\right|}{\rm d}x{\rm d}y,\quad\forall u\in H^{1/2}(\mathbb{R}^{3}).
\end{equation}

It is well-known (see e.g, \cite{LiYa-87,Le-07,LeLe-11}) that the
inequality \eqref{eq:boson star inequality} has an optimizer $Q\in H^{1/2}(\mathbb{R}^3)$
which can be chosen to be positive radially symmetric decreasing
and satisfy
\begin{equation}
\|(-\Delta)^{1/4}Q\|_{L^{2}}=\|Q\|_{L^{2}}=\frac{a^{*}}{2}\iint_{\mathbb{R}^{3}\times\mathbb{R}^{3}}\frac{\left|Q(x)\right|^{2}\left|Q(y)\right|^{2}}{\left|x-y\right|}{\rm d}x{\rm d}y=1.\label{eq:boson star optimizer}
\end{equation}
Moreover, $Q$ solves the nonlinear equation 
\begin{equation}
\sqrt{-\Delta}Q+Q-a^{*}(\left|\cdot\right|^{-1}\star\left|Q\right|^{2})Q=0,\quad Q\in H^{1/2}(\mathbb{R}^{3})\label{eq:massless boson star}
\end{equation}
and it satisfies the decay property (see \cite{FrLe-09})
\begin{equation} \label{eq:decay}
Q(x)\le C (1+|x|)^{-4}, \quad (\left|\cdot\right|^{-1}\star\left|Q\right|^{2})(x)\leq C(1+|x|)^{-1}.
\end{equation}

The uniqueness (up to translation and dilation) of the optimizer for \eqref{eq:boson star inequality}, as well as the uniqueness (up to translation) of the positive solution to the equation \eqref{eq:massless boson star}, is an {\em open problem} (see  \cite{LiYa-87} for related discussions). In the following, we denote by $\mathcal{G}$ the set of all positive radially symmetric decreasing functions satisfying \eqref{eq:boson star optimizer}-\eqref{eq:massless boson star}. 
\begin{equation}\label{cG}
\mathcal{G}=\{\text{all positive radial decreasing functions satisfying } \eqref{eq:boson star optimizer}-\eqref{eq:massless boson star}\}. 
\end{equation}
In the present paper, we will analyze the existence and the blow-up profile of the minimizers for the variational problem $I(a)$ in \eqref{eq:boson star energy} when $a\nearrow a^{*}$.  

Our first result is 

\begin{thm}[Existence and nonexistence of minimizers] 
	\label{thm:existence of minimizers} Assume that $m>0$ and $V$ satisfies one of
	the two conditions:
	
	\begin{itemize}
		
		\item[$(V_1)$] (Trapping potentials) $0\leq V\in L_{{\rm loc}}^{\infty}(\mathbb{R}^{3})$ and $\lim_{\left|x\right|\to\infty}V(x)=\infty$; or
		
		\item [$(V_2)$] (Periodic potentials) $0\leq V\in C(\mathbb{R}^{3})$, $V(x+z)= V(x)$ for all $z\in\mathbb{Z}^{3}$.
		
	\end{itemize}
	Then the following statements hold true:
	
	\begin{itemize}
		\item [(i)] If $a>a^{*}$, then $I(a)=-\infty$. 
		\item [(ii)] If $a=a^{*}$, then $I(a^{*})=\inf_{x\in\mathbb{R}^{3}}V(x)$, but it has no minimizer. 
		\item [(iii)] If $0\leq a<a^{*}$, then $I(a)>0$ and it has at least one minimizer.
		Moreover 
		$$
		\lim_{a\nearrow a^{*}}I(a)=I(a^{*})=\inf V.
		$$
	\end{itemize}
\end{thm}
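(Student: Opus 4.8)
The plan is to handle the three regimes with one pair of comparison inequalities and a single family of concentrating trial functions, saving the genuine work for the compactness in case $(V_2)$. Write $T(u)=\|(-\Delta+m^2)^{1/4}u\|_{L^2}^2$, $T_0(u)=\|(-\Delta)^{1/4}u\|_{L^2}^2$ and $D(u)=\iint|u(x)|^2|u(y)|^2|x-y|^{-1}\,\mathrm dx\,\mathrm dy$. On the constraint $\|u\|_{L^2}=1$, the pointwise bounds $|\xi|\le\sqrt{|\xi|^2+m^2}$ and $m\le\sqrt{|\xi|^2+m^2}$ give $T(u)\ge\lambda T_0(u)+(1-\lambda)m$ for every $\lambda\in[0,1]$, while \eqref{eq:boson star inequality} gives $\tfrac a2D(u)\le\tfrac{a}{a^*}T_0(u)$, so
\begin{equation*}
\mathcal E_a(u)\ge\Big(\lambda-\tfrac{a}{a^*}\Big)T_0(u)+(1-\lambda)m+\int_{\mathbb R^3}V|u|^2.
\end{equation*}
For $a<a^*$, choosing $\lambda\in(a/a^*,1)$ yields $I(a)\ge(1-\lambda)m>0$, and taking $\lambda=1$ bounds $T_0(u)$ and $\int V|u|^2$ along any minimizing sequence, so such a sequence is bounded in $H^{1/2}$. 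Taking $\lambda=1$ at $a=a^*$ gives $\mathcal E_{a^*}(u)\ge\int V|u|^2\ge\inf V$, hence $I(a)\ge\inf V$ for all $a\le a^*$.

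For the trial functions and parts (i)--(ii), set $u_\ell(x)=\ell^{3/2}Q(\ell(x-x_0))$; then $\|u_\ell\|_{L^2}=1$, $T_0(u_\ell)=\ell$, $D(u_\ell)=\ell\,D(Q)=2\ell/a^*$ by \eqref{eq:boson star optimizer}, and dominated convergence gives $T(u_\ell)=\ell+o(1)$ as $\ell\to\infty$ (since $\sqrt{\ell^2|\eta|^2+m^2}-\ell|\eta|\to0$ pointwise, dominated by $m|\widehat Q|^2$), while $\int V|u_\ell|^2=\int V(x_0+y/\ell)|Q(y)|^2\,\mathrm dy$. For (i) with $a>a^*$, replacing $Q$ by a compactly supported $\phi$ with $\tfrac a2D(\phi)>T_0(\phi)$ (possible since $\tfrac a2D(Q)=a/a^*>1=T_0(Q)$ and $C_c^\infty$ is dense) keeps $\int V|\phi_\ell|^2$ bounded while $\mathcal E_a(\phi_\ell)\le\ell\big(T_0(\phi)-\tfrac a2D(\phi)\big)+O(1)\to-\infty$. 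For (ii), $\mathcal E_{a^*}(u_\ell)=o(1)+\int V(x_0+y/\ell)|Q|^2\to V(x_0)$ (by continuity under $(V_2)$, at Lebesgue points under $(V_1)$); choosing $V(x_0)$ near $\inf V$ gives $I(a^*)\le\inf V$, hence $I(a^*)=\inf V$. Finally, a minimizer $u$ at $a=a^*$ would force equality in $T(u)\ge T_0(u)$, i.e.\ $\int(\sqrt{|\xi|^2+m^2}-|\xi|)|\widehat u|^2=0$; as the integrand is strictly positive for $m>0$ this gives $u\equiv0$, contradicting $\|u\|_{L^2}=1$, so no minimizer exists.

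For existence in (iii), take a minimizing sequence $u_n$, bounded in $H^{1/2}$, with $u_n\rightharpoonup u$. Under $(V_1)$, the bound on $\int V|u_n|^2$ together with $V\to\infty$ forces tightness, $\sup_n\int_{|x|>R}|u_n|^2\to0$; with the compact embedding $H^{1/2}\hookrightarrow L^2_{\mathrm{loc}}$ this yields $u_n\to u$ strongly in $L^2$, so $\|u\|_{L^2}=1$. Interpolating strong $L^2$ with the $L^3$ bound gives strong convergence in $L^{12/5}$, so Hardy--Littlewood--Sobolev yields $D(u_n)\to D(u)$; with lower semicontinuity of $T$ and Fatou for $\int V|u|^2$ we conclude $\mathcal E_a(u)\le I(a)$, so $u$ is a minimizer. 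Under $(V_2)$, $V$ is bounded and only $\mathbb Z^3$-periodic, and I would run concentration--compactness allowing translations by lattice vectors $z_n\in\mathbb Z^3$ (which leave $\mathcal E_a$ invariant). Vanishing is excluded by the strict inequality $I(a)<\Sigma:=\inf\{T(u)+\int V|u|^2:\|u\|_{L^2}=1\}$: a function spread over a box of side $L$ has $T(u)+\int V|u|^2=\Sigma+O(L^{-2})$ (effective-mass expansion at the bottom of the spectrum of $\sqrt{-\Delta+m^2}+V$) but $D(u)\sim cL^{-1}$, so $\mathcal E_a(u)\le\Sigma+O(L^{-2})-\tfrac{ac}{2L}<\Sigma$ for large $L$, while a vanishing sequence would have $D(u_n)\to0$ and $\liminf\mathcal E_a(u_n)\ge\Sigma>I(a)$. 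Dichotomy is excluded by strict subadditivity: the scaling $\mathcal E_a(\sqrt\mu\,v)=\mu\,\mathcal E_{a\mu}(v)$ gives $I(a,\mu)=\mu\,I(a\mu)$, and $b\mapsto I(b)$ is strictly decreasing on $(0,a^*)$ because $I(b)<\Sigma$ forces $\liminf_n D(u_n)>0$ along a minimizing sequence, whence $I(b')\le I(b)-\tfrac{b'-b}2\liminf_n D(u_n)<I(b)$; consequently $I(a,\mu)+I(a,1-\mu)=\mu I(a\mu)+(1-\mu)I(a(1-\mu))>I(a)$ for $0<\mu<1$. After translating by a suitable $z_n$, $u_n(\cdot-z_n)\to u$ strongly in $L^2$ and $u$ is a minimizer.

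For the limit $a\nearrow a^*$, it remains (given $I(a)\ge\inf V$ and $I(a^*)=\inf V$) to show $\limsup_{a\nearrow a^*}I(a)\le\inf V$. Fix $x_0$ with $V(x_0)\le\inf V+\varepsilon$ and $\ell$ large so that $|o(1)|+\big|\int V(x_0+y/\ell)|Q|^2\,\mathrm dy-V(x_0)\big|\le\varepsilon$; then $I(a)\le\mathcal E_a(u_\ell)=\ell(1-a/a^*)+o(1)+\int V(x_0+y/\ell)|Q|^2$, and letting $a\nearrow a^*$ with $\ell$ fixed kills the first term, giving $\limsup_{a\nearrow a^*}I(a)\le\inf V+2\varepsilon$. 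The main obstacle is the periodic case $(V_2)$: with no compact embedding available, everything rests on the two strict inequalities above, and the genuinely delicate point is $I(a)<\Sigma$ — i.e.\ that the attractive interaction beats the localization cost at the bottom of the essential spectrum — which simultaneously rules out vanishing and, via the strict monotonicity of $I$, rules out dichotomy.
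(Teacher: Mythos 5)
Your overall skeleton (interpolation inequality for the lower bounds, concentrating trial states for the upper bounds and nonexistence, direct method under $(V_1)$, concentration--compactness with lattice translations under $(V_2)$) matches the paper's proof, and parts (i), (iii)-existence under $(V_1)$, and the nonexistence argument at $a=a^*$ are correct in substance. There is, however, one concrete gap, and one key step of your periodic argument is asserted rather than proved.

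The gap: under $(V_1)$ your trial states are not admissible in general. You take $u_\ell=\ell^{3/2}Q(\ell(x-x_0))$ with no cutoff and claim $\int V(x_0+y/\ell)|Q(y)|^2\,{\rm d}y\to V(x_0)$. But a trapping potential is only locally bounded and may grow arbitrarily fast, while $Q$ decays only polynomially: by \eqref{eq:decay} (and the matching lower bound of Frank--Lenzmann, $Q(x)\gtrsim(1+|x|)^{-4}$), the integral $\int V(x_0+y/\ell)|Q(y)|^2\,{\rm d}y$ is $+\infty$ for every $\ell$ as soon as $V$ grows at rate $|x|^5$ or faster, e.g.\ $V(x)=e^{|x|}$, which satisfies $(V_1)$. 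Then $\mathcal{E}_{a^*}(u_\ell)=+\infty$ and your proofs of $I(a^*)\le\inf V$ in (ii) and of $\limsup_{a\nearrow a^*}I(a)\le\inf V$ in (iii) fail. This is precisely why the paper multiplies $Q$ by the cutoff $\chi_R(x-x_0)$ in \eqref{eq:trial state trapping-periodic} and pays $o(1)_{R\tau\to\infty}$ errors controlled by \eqref{eq:decay}; you already use the analogous truncation (compactly supported $\phi$) in part (i), and inserting it here repairs (ii) and the limit. (Minor related point: with $V\in L^\infty_{\rm loc}$ only, $\inf V$ and $V(x_0)$ must be taken in the essential sense and $x_0$ chosen a Lebesgue point, as you note; with the cutoff in place that argument does go through.)

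The different route: for $(V_2)$ you exclude vanishing by proving $I(a)<\Sigma:=\inf\{T(u)+\int V|u|^2:\|u\|_{L^2}=1\}$, and you derive strict subadditivity from strict monotonicity of $I$ via the scaling $I_\mu(a)=\mu I(a\mu)$. This is genuinely different from the paper, which shows that vanishing forces $I(a)\ge m$ and contradicts this with $I(a)\le C(a^*-a)^{q/(q+1)}$ --- an argument that works only for $a$ near $a^*$ and leans on an upper bound established later under the hypotheses of Theorem \ref{thm:behavior-periodic}. Your comparison with $\Sigma$ covers every $a\in(0,a^*)$ uniformly, and your monotonicity-based subadditivity is cleaner than the appeal to Lions' Lemma II.1. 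The price is that everything rests on the inequality $I(a)<\Sigma$, which you justify in one line (``effective-mass expansion'': an $L$-spread trial state with $T+\int V$ energy $\Sigma+O(L^{-2})$ and $D\sim c/L$). For the nonlocal operator $\sqrt{-\Delta+m^2}+V$ this is a real lemma: it needs the Floquet--Bloch ground state (via the positivity-improving semigroup, with the bottom of the spectrum at quasi-momentum zero) together with an IMS localization with $O(L^{-2})$ error, as in the paper's own localization formula. The tools exist, but as written this key step is unproven and is of comparable weight to the argument it replaces.
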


In case $V=0$, the result is well-known (see \cite{LiYa-87,FrJoLe-07}). Otherwise, the proof of the existence for trapping potentials is based on the direct method of calculus. The existence for periodic potentials is more involved and we have to use the concentration-compactness argument \cite{Lions-84} to deal with the lack of compactness.

Note that we can restrict the minimization problem $I(a)$ to non-negative functions $u$ 
since $\mathcal{E}_{a}(u)\geq\mathcal{E}_{a}(\left|u\right|)$ for
any $u\in H^{1}(\mathbb{R}^{3})$. This follows from the fact that
$\|(-\Delta+m^{2})^{1/4}|u|\|_{L^2} \leq \|(-\Delta+m^{2})^{1/4}u\|_{L^2}$ (see \cite[Theorem 7.13]{LiLo}). In particular, the minimizer of $I(a)$, when it exists, can be chosen to be non-negative. Furthermore, when $V$ is radial increasing, one can actually restrict the minimization problem $I(a)$ to radial decreasing functions, by rearrangement inequalities (see \cite[Chapter 3]{LiLo}).

Our next results concern the behavior of the minimizer $u$ of $I(a)$ as $a\nearrow a^{*}$.
We will show that $u$ blows up and its blow-up profile is given
by a function $Q$ in $\mathcal{G}$. Of course, this blow-up process depends crucially on the local behavior
of $V$ close to its minimizers. We will consider three cases: trapping potentials growing polynomially around its minimizers, periodic potentials, and ring-shaped potentials. The choices of potentials are inspired by the recent studies on the 2D focusing Gross-Pitaevskii in \cite{GuSe-14,QiDu-17,GuZeZh-16}.

First, we are interested to the case when $V$ is a trapping potential which behaves polynomially close to its minima. We assume that $V\ge 0$, $V^{-1}(0)=\{x_i\}_{i=1}^n \subset \mathbb{R}^3$ and there exist constants $p_i>0$, $\kappa_i>0$ such that
\begin{equation} \label{bluo-trap}
 \lim_{x\to x_{i}}\frac{V(x)}{\left|x-x_{i}\right|^{p_i}}=\kappa_i,\quad \forall i=1,2,\ldots,n.
\end{equation}
 Let $p=\max\{p_i: 1\leq i \leq n\}$, $\kappa = \min\{\kappa_i: p_i=p\}$, and let 
 $$
 \mathcal{Z}:=\{x_i:p_i=p\text{ and }\kappa_i=\kappa\}
 $$
 denote the locations of the flattest global minima of $V(x)$. In this case, the blow-up profile is given in the following

\begin{thm}[Blow-up for trapping potentials with finite minimizers] 
	\label{thm:behavior-trapping} Let $V$ satisfy $(V_1)$ in Theorem \ref{thm:existence of minimizers} and the assumption \eqref{bluo-trap}. Let $u_a$ be a non-negative minimizer of $I(a)$ in \eqref{eq:boson star energy}
	for $0\leq a<a^{*}$. Then for every sequence $\left\{ a_{k}\right\} $ with
	$a_{k}\nearrow a^{*}$ as $k\to\infty$, there exist a subsequence (still denoted by $\left\{ a_{k}\right\} $ for simplicity) and an element $Q\in \mathcal{G}$ in \eqref{cG} such that the following strong convergences hold true in $H^{1/2}(\mathbb{R}^{3})$.
		\begin{itemize}
		\item If $p\le 1$, then there exists an $x_0\in\mathcal{Z}$ such that
	\begin{equation} \label{blowup-p<1}
	\lim_{k\to\infty}\left(a^{*}-a_{k}\right)^{\frac{3}{2\left(p+1\right)}}u_{a_{k}}\left(x_{0}+x\left(a^{*}-a_{k}\right)^{\frac{1}{p+1}}\right)=\lambda^{\frac{3}{2}}Q\left(\lambda x\right)
	\end{equation}
	where
	$$
		\lambda=\inf_{W\in \mathcal{G}}\left( a^{*} p \kappa \int_{\mathbb{R}^{3}}\left|x\right|^{p}\left|W(x)\right|^{2}{\rm d}x\right)^{\frac{1}{p+1}} = \left( a^{*} p \kappa \int_{\mathbb{R}^{3}}\left|x\right|^{p}\left|Q(x)\right|^{2}{\rm d}x\right)^{\frac{1}{p+1}}
		$$
	if $0<p<1$, and 	
			\begin{align*}
		\lambda & = \inf_{W\in \mathcal{G}}\left(\frac{m^{2}a^{*}}{2}\|(-\Delta)^{-1/4}W\|_{L^{2}}^{2}+a^{*}\kappa \int_{\mathbb{R}^{3}}\left|x\right|\left|W(x)\right|^{2}{\rm d}x\right)^{\frac{1}{2}}\\
		& = \left(\frac{m^{2}a^{*}}{2}\|(-\Delta)^{-1/4}Q\|_{L^{2}}^{2}+a^{*}\kappa \int_{\mathbb{R}^{3}}\left|x\right|\left|Q(x)\right|^{2}{\rm d}x\right)^{\frac{1}{2}}, \quad \text{if  }p=1.
		\end{align*}
		
	\item If $p> 1$, then there exists a sequence $\{y_k\}\subset\mathbb{R}^3$ such that
	\begin{equation} \label{blowup-p>1}
	\lim_{k\to\infty}\left(a^{*}-a_{k}\right)^{\frac{3}{4}}u_{a_{k}}\left(y_{k}+x\left(a^{*}-a_{k}\right)^{\frac{1}{2}}\right)=\lambda^{\frac{3}{2}}Q\left(\lambda x\right)
	\end{equation}
	where
	$$
		\lambda = m\sqrt{\frac{a^{*}}{2}} \inf_{W\in \mathcal{G}} \|(-\Delta)^{-1/4}W\|_{L^{2}} = m\sqrt{\frac{a^{*}}{2}} \|(-\Delta)^{-1/4}Q\|_{L^{2}}.
		$$
	 	\end{itemize}

\end{thm}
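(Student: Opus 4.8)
The plan is to pin down the sharp leading-order asymptotics of $I(a)$ as $a\nearrow a^{*}$ and to read the blow-up profile off the equality cases in the trial-state analysis. Throughout write $\varepsilon=a^{*}-a$ and $D(u)=\iint_{\mathbb{R}^{3}\times\mathbb{R}^{3}}\frac{|u(x)|^{2}|u(y)|^{2}}{|x-y|}\,{\rm d}x\,{\rm d}y$, so that $\frac{a^{*}}{2}D(Q)=1$ for $Q\in\mathcal{G}$. \textbf{Step 1 (upper bound via trial states).} For $Q\in\mathcal{G}$, $x_{0}\in\mathbb{R}^{3}$ and $\tau>0$ set $u_{\tau}(x)=\tau^{3/2}Q(\tau(x-x_{0}))$, which has unit mass. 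Passing to Fourier variables and expanding $\sqrt{\tau^{2}|\eta|^{2}+m^{2}}=\tau|\eta|+\frac{m^{2}}{2\tau|\eta|}+O(\tau^{-3}|\eta|^{-3})$ gives
\[
\|(-\Delta+m^{2})^{1/4}u_{\tau}\|_{L^{2}}^{2}=\tau+\frac{m^{2}}{2\tau}\|(-\Delta)^{-1/4}Q\|_{L^{2}}^{2}+o(\tau^{-1}),
\]
the correction being finite by the decay \eqref{eq:decay}. Since $\frac{a}{2}D(u_{\tau})=\frac{a}{a^{*}}\tau$ and, using \eqref{bluo-trap} with $x_{0}\in\mathcal{Z}$, $\int V|u_{\tau}|^{2}=\kappa\,\tau^{-p}\int|y|^{p}|Q|^{2}+o(\tau^{-p})$, one obtains
\[
\mathcal{E}_{a}(u_{\tau})=\frac{\varepsilon}{a^{*}}\tau+\frac{m^{2}}{2\tau}\|(-\Delta)^{-1/4}Q\|_{L^{2}}^{2}+\kappa\,\tau^{-p}\int_{\mathbb{R}^{3}}|y|^{p}|Q(y)|^{2}\,{\rm d}y+(\text{lower order}).
\]
Minimizing this explicit function of $\tau$ exhibits the three regimes: for $0<p<1$ the potential term dominates the $m^{2}$-correction and the balance $\frac{\varepsilon}{a^{*}}\tau\sim\kappa\tau^{-p}\int|y|^{p}|Q|^{2}$ forces $\tau\sim\varepsilon^{-1/(p+1)}$; for $p=1$ the two $\tau^{-1}$ terms combine; for $p>1$ the potential is negligible and the balance is between the kinetic excess and the $m^{2}$-correction, giving $\tau\sim\varepsilon^{-1/2}$. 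Optimizing the constant over $Q\in\mathcal{G}$ (and over $x_{0}\in\mathcal{Z}$ when $p\le1$) yields the upper bound for $I(a)$ with the constant governed precisely by the infimum defining $\lambda$.

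\textbf{Step 2 (matching lower bound and a priori blow-up).} Let $u_{a}$ be a minimizer. Splitting the energy as
\[
\mathcal{E}_{a}(u_{a})=\Big(\|(-\Delta)^{1/4}u_{a}\|_{L^{2}}^{2}-\tfrac{a^{*}}{2}D(u_{a})\Big)+\tfrac{\varepsilon}{2}D(u_{a})+\Big(\|(-\Delta+m^{2})^{1/4}u_{a}\|_{L^{2}}^{2}-\|(-\Delta)^{1/4}u_{a}\|_{L^{2}}^{2}\Big)+\int V|u_{a}|^{2},
\]
every term is nonnegative by \eqref{eq:boson star inequality}, by $\sqrt{-\Delta+m^{2}}\ge\sqrt{-\Delta}$, and by $V\ge0$. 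Comparing with the Step 1 upper bound (which tends to $0$) forces $\|(-\Delta)^{1/4}u_{a}\|_{L^{2}}^{2}\to\infty$ at rate $\varepsilon^{-1/(p+1)}$ ($p\le1$) resp. $\varepsilon^{-1/2}$ ($p>1$); in particular no minimizer at $a^{*}$ exists, and each term above vanishes. Set $\ell_{a}=\|(-\Delta)^{1/4}u_{a}\|_{L^{2}}^{-2}\to0$. For $p\le1$ the vanishing of $\int V|u_{a}|^{2}$ with \eqref{bluo-trap} localizes the mass near $\mathcal{Z}$, so one may center at some $x_{0}\in\mathcal{Z}$; for $p>1$ the leading energy is insensitive to location and one must instead select a translation sequence $\{y_{k}\}$.

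\textbf{Step 3 (compactness and identification).} Put $w_{k}(x)=\ell_{a_{k}}^{3/2}u_{a_{k}}(c_{k}+\ell_{a_{k}}x)$ with $c_{k}=x_{0}$ or $y_{k}$; then $\|w_{k}\|_{L^{2}}=1$ and $\|(-\Delta)^{1/4}w_{k}\|_{L^{2}}^{2}=1$, so $\{w_{k}\}$ is bounded in $H^{1/2}(\mathbb{R}^{3})$. The scaled spectral gap equals $\ell_{a_{k}}$ times the first bracket above, hence is $\le\ell_{a_{k}}I(a_{k})\to0$, i.e. $w_{k}$ is an asymptotic optimizer of \eqref{eq:boson star inequality}. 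Extract a weak limit $w_{0}$; ruling out vanishing and dichotomy (sharp two-sided energy bounds, together with the localization for $p\le1$ and a profile/concentration argument in the translation-invariant $p>1$ case) gives strong $L^{2}$ convergence and $D(w_{k})\to D(w_{0})$, so $w_{0}\not\equiv0$ saturates \eqref{eq:boson star inequality}. Thus $w_{0}=\mu^{3/2}Q_{0}(\mu\,\cdot)$ for some $Q_{0}\in\mathcal{G}$ and $\mu>0$, with weak-to-strong upgrading in $H^{1/2}$ from convergence of norms. Substituting the limit into the sharp expansion of Step 1 identifies $\mu=\lambda$ as the stated infimum over $\mathcal{G}$, shows the minimizer selects the $Q$ attaining it, and—for $p\le1$—forces the center into $\mathcal{Z}$.

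\textbf{Main obstacle.} Two points carry the difficulty. First, the kinetic operator is not scale-covariant: the $m^{2}$-correction $\frac{m^{2}}{2\tau}\|(-\Delta)^{-1/4}Q\|_{L^{2}}^{2}$ must be extracted with a uniform remainder, and for $p\ge1$ it is exactly the leading balance term, so the symbol expansion and the decay \eqref{eq:decay} guaranteeing $\|(-\Delta)^{-1/4}Q\|_{L^{2}}<\infty$ must be controlled carefully. Second, establishing non-vanishing of the rescaled limit—ruling out that mass escapes to infinity or spreads—is genuinely harder in the $p>1$ regime, where the potential does not confine at the relevant scale and the translation sequence $\{y_{k}\}$ is unavoidable; there it is the exact matching of the upper and lower energy bounds, rather than soft compactness, that does the real work.
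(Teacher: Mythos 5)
Your overall strategy---expand the energy of a rescaled profile, balance the three terms in \eqref{length scale}, and match a lower bound along the true minimizers against a trial-state upper bound---is exactly the heuristic the paper follows, and your Step 1/Step 3 skeleton is the right shape. However, there are two genuine gaps. The first and most serious is in Step 2: you never prove the matching lower bound $I(a)\geq M_{1}(a^{*}-a)^{q/(q+1)}$, $q=\min\{p,1\}$. Nonnegativity of your four energy pieces together with the Step 1 upper bound yields only \emph{upper} bounds on each piece (in particular $D(u_{a})\lesssim \varepsilon^{-1/(q+1)}$); it cannot force the kinetic energy, or $D(u_{a})$, to blow up at the claimed rate---indeed it cannot force blow-up at all, which requires either the non-existence of minimizers at $a^{*}$ combined with the compact embedding of Lemma \ref{lem:compact embedding}, or a quantitative lower bound. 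This lower bound is the technical core of the paper (Lemma \ref{lem:estimate of I(a_{k}) trapping}, proved for $p\leq1$ by a level-set argument on $\{V\leq\gamma\}$ combined with Sobolev's inequality, and for $p>1$ via \eqref{ineq:operator} and H\"older), and from it the paper deduces the lower bound on the Hartree term (Lemma \ref{lem:estimate of direct}, via comparing $I(b)$ with $I(a_{k})$ at $b=a_{k}-\gamma(a^{*}-a_{k})$), which is precisely what rules out vanishing of the rescaled profiles and keeps the Lagrange multipliers negative. Your later appeal to ``sharp two-sided energy bounds'' in Step 3 therefore rests on something never established; as written the argument is circular at this point.

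The second gap is in Step 1 itself: your trial state $u_{\tau}=\tau^{3/2}Q(\tau(\cdot-x_{0}))$ carries no spatial cutoff, but under $(V_{1})$ the potential is only in $L^{\infty}_{\rm loc}$ and may grow arbitrarily fast at infinity, while $Q$ decays only like $(1+|x|)^{-4}$ by \eqref{eq:decay}; hence $\int V|u_{\tau}|^{2}$ can be $+\infty$ and the claimed expansion of the potential term is false in general. The paper's cutoff $\chi_{R}$ in \eqref{eq:trial state trapping-periodic} is there precisely for this reason. Relatedly, your symbol expansion $\sqrt{\tau^{2}|\eta|^{2}+m^{2}}=\tau|\eta|+\frac{m^{2}}{2\tau|\eta|}+O(\tau^{-3}|\eta|^{-3})$ is invalid near $\eta=0$, where the remainder is not integrable against $|\widehat{Q}|^{2}$; the clean substitute is the operator inequality \eqref{ineq:operator}, which gives the $\frac{m^{2}}{2\tau}\|(-\Delta)^{-1/4}Q\|_{L^{2}}^{2}$ correction as an exact upper bound with no remainder to control. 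Finally, note that your identification step (dichotomy exclusion, and the fact that a nonnegative asymptotic optimizer of \eqref{eq:boson star inequality} is, up to translation and dilation, an element of $\mathcal{G}$ with \emph{full} mass) is only gestured at; the paper does this through the Euler--Lagrange equation, the convergence $\epsilon_{k}\mu_{k}\to-\lambda<0$, positivity of the resolvent kernel, and the Frank--Lenzmann symmetry result (Lemma \ref{lem:positive ground state solution}), and for $p\leq1$ the selection of $x_{0}\in\mathcal{Z}$ (flattest minimum, smallest $\kappa$) requires a separate Fatou-based contradiction argument that does not follow from merely ``substituting the limit into the expansion.''
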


The statement of Theorem \ref{thm:behavior-trapping} looks a bit technical but the main idea is simple. Heuristically, if the minimizer $u$ collapse at a length $L\to 0$ around $x_0$, namely
$$
L^{3/2} u(x_0+Lx) \approx Q(x),
$$
then by using the formal approximation (cf. \eqref{ineq:operator})
$$
\sqrt{-\Delta+m^{2}} \approx \sqrt{-\Delta}+\frac{m^{2}}{2\sqrt{-\Delta}}
$$
and the assumption that $V(x)\approx \kappa |x-x_0|^p$ around $x_0$ we have
\begin{equation}\label{length scale}
\mathcal{E}_{a}(u) \approx \frac{1}{L}\left(1-\frac{a}{a^*} \right) + L\frac{m^2}{2} \|(-\Delta)^{-1/4}Q\|_{L^{2}}^{2} + L^p \kappa\int_{\mathbb{R}^3} |x|^{p} |Q(x)|^2{\rm d}x. 
\end{equation}
Then the result in Theorem \ref{thm:behavior-trapping} essentially follows by obtimizing over $L>0$ on the right side of \eqref{length scale} (see estimations \eqref{lim:I(a_{k})}-\eqref{lim:V} in the proof for more details). In this way, we also obtain the asymptotic behavior of the ground state energy
\begin{equation}\label{lim: I(a_k)/epsilon_k}
\lim_{a\nearrow a^{*} }\frac{I(a)}{(a^{*}-a)^{\frac{q}{q+1}}}= \frac{q+1}{q} \cdot \frac{\lambda}{a^{*}}, \quad \text{with}\quad q=\min\{p,1\}.
\end{equation}

In the case $V=0$, the blow-up profile of minimizers of $I(a)$ has been studied in \cite{GuZe-17,Ng-17}. Indeed, this case can be interpreted as a special case of \eqref{blowup-p>1} with $p=\infty$. The convergence \eqref{blowup-p<1} for power $0<p<1$ has been also proved in a recent, independent work of Yang-Yang \cite{YaYa-17}. Our proof is somewhat simpler than that in \cite{YaYa-17} and we obtain the convergence in $H^{1/2}(\mathbb{R}^{3})$ instead of $L^{2}(\mathbb{R}^{3})$. Moreover, we are able to deal with the full range $p>0$, which is interesting since the blow-up speed changes when $p\ge 1$. 

To analyze the detailed behavior of minimizers of $I(a)$, delicate estimates on kinetic energy and potential energy are required.
When $p>1$ we lose information about the sequence $y_k$ in \eqref{blowup-p>1}, since $V$ has no impact to the leading order of $I(a)$. However, if $V$ is strictly radial increasing, for example $V(x)=|x|^p$, then the minimizers must be radial decreasing and hence we can choose $y_k=0$.

Next, we come to the cases of periodic and ring-shaped potentials. 
\begin{thm}[Blow-up for periodic potentials]
	\label{thm:behavior-periodic} Let $V$ satisfy $(V_2)$ in Theorem \ref{thm:existence of minimizers}. Assume further that $V^{-1}(0)=x_0+\mathbb{Z}^3$ for some $x_0\in [0,1]^3$ and there exist $p>0$, $\kappa>0$ such that
	$$
	\lim_{x\to x_0}\frac{V(x)}{\left|x-x_0\right|^{p}}=\kappa>0.
	$$
	Let $u_a$ be a non-negative minimizer of $I(a)$ in \eqref{eq:boson star energy}
	for $0\leq a<a^{*}$. Then for every sequence $\left\{ a_{k}\right\} $ with
	$a_{k}\nearrow a^{*}$ as $k\to\infty$, there exist a subsequence of $\left\{ a_{k}\right\}$ (still denoted by $\left\{ a_{k}\right\}$ for simplicity) and an element $Q\in \mathcal{G}$ in \eqref{cG}  such that the following strong convergences hold true in $H^{1/2}(\mathbb{R}^{3})$.
		\begin{itemize}
		\item If $p\le 1$, then there exists a sequence $\{z_k\}\subset\mathbb{Z}^3$ such that
	$$
	\lim_{k\to\infty}\left(a^{*}-a_{k}\right)^{\frac{3}{2\left(p+1\right)}}u_{a_{k}}\left(x_{0}+z_k+x\left(a^{*}-a_{k}\right)^{\frac{1}{p+1}}\right)=\lambda^{\frac{3}{2}}Q\left(\lambda x\right)
	$$
where $\lambda$ is defined as in \eqref{blowup-p<1}.	
	\item If $p> 1$, then there exists a sequence $\{y_k\}\subset\mathbb{R}^3$ such that
	$$
	\lim_{k\to\infty}\left(a^{*}-a_{k}\right)^{\frac{3}{4}}u_{a_{k}}\left(y_{k}+x\left(a^{*}-a_{k}\right)^{\frac{1}{2}}\right)=\lambda^{\frac{3}{2}}Q\left(\lambda x\right)
	$$
	where $\lambda$ is defined as in \eqref{blowup-p>1}.
	 	\end{itemize}
	 	
\end{thm}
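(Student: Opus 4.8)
The plan is to follow the blow-up analysis of Theorem \ref{thm:behavior-trapping} as closely as possible, the only genuinely new feature being the loss of compactness caused by the infinitely many equivalent minima $x_0+\mathbb{Z}^3$ of the periodic potential. Since both the kinetic energy and the Hartree interaction are invariant under translations, and $V$ is $\mathbb{Z}^3$-periodic, the energy $\mathcal{E}_{a_k}$ is invariant under $u\mapsto u(\cdot-z)$ for $z\in\mathbb{Z}^3$. The strategy is therefore to first extract a concentration center $y_k$ together with the blow-up profile exactly as in the trapping case, then to relocate $y_k$ modulo $\mathbb{Z}^3$ to the fundamental minimum $x_0$, and finally to invoke the lattice invariance to reduce to the single-well analysis already carried out for $(V_1)$.

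First I would record the energy upper bound. Inserting a trial state obtained by rescaling an element of $\mathcal{G}$ to length $\varepsilon$, centering it at $x_0$, and optimizing over $\varepsilon$ in the heuristic expansion \eqref{length scale} yields
$$I(a_k)\le \frac{q+1}{q}\cdot\frac{\lambda}{a^*}(a^*-a_k)^{\frac{q}{q+1}}(1+o(1)),\qquad q=\min\{p,1\},$$
with $\lambda$ as in \eqref{blowup-p<1}-\eqref{blowup-p>1}; this computation only feels the local profile of $V$ at $x_0$, so it is literally the same as in Theorem \ref{thm:behavior-trapping}. Combined with the interpolation inequality \eqref{eq:boson star inequality} and $V\ge 0$, it forces $\|(-\Delta)^{1/4}u_{a_k}\|_{L^2}^2\to\infty$ and fixes the blow-up length scale $\varepsilon_k\sim(a^*-a_k)^{1/(q+1)}$; moreover $I(a_k)\to\inf V=0$ entails $\int_{\mathbb{R}^3}V|u_{a_k}|^2\to 0$.

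Next comes the compactness step, which I would simply cite from the trapping analysis. Setting $w_k(x)=\varepsilon_k^{3/2}u_{a_k}(y_k+\varepsilon_k x)$ for a suitable center $y_k$, a no-vanishing argument based on the near-saturation of \eqref{eq:boson star inequality} shows $w_k\rightharpoonup w_0\neq 0$ weakly in $H^{1/2}(\mathbb{R}^3)$ with $w_0$ an optimizer of \eqref{eq:boson star inequality}; hence, after adjusting $y_k$ by a bounded multiple of $\varepsilon_k$ so as to center the profile, $w_0=\lambda^{3/2}Q(\lambda\,\cdot)$ for some $Q\in\mathcal{G}$. Matching the energy lower bound to the upper bound forces the $L^2$ and kinetic norms of $w_k$ to converge, upgrading weak to strong convergence in $H^{1/2}$ (no mass escapes to infinity by concentration).

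The heart of the proof is locating $y_k$ relative to the lattice. Because $\int V|u_{a_k}|^2\to 0$ while the mass concentrates at $y_k$, the reduced center $y_k\pmod{\mathbb{Z}^3}$ cannot stay bounded away from $x_0$: otherwise $V$ would be bounded below by a positive constant on a fixed neighborhood carrying almost all the mass, contradicting the vanishing of the potential energy. Choosing $z_k\in\mathbb{Z}^3$ with $y_k-z_k\to x_0$ and translating by $-z_k$ (allowed by periodicity), the problem becomes a genuine single-well problem centered at $x_0$, to which the trapping analysis applies verbatim. For $p>1$ the potential does not enter the leading order, so no lattice pinning is needed and one keeps the free center $y_k\in\mathbb{R}^3$. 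For $p\le 1$, however, $V$ contributes at order $\varepsilon_k^p$, and the sharp energy expansion selects the profile minimizing $\int_{\mathbb{R}^3}|t+x|^p|Q(\lambda x)|^2{\rm d}x$ over the residual shift $t=(y_k-x_0-z_k)/\varepsilon_k$; since $Q$ is radially decreasing this forces $t\to 0$, so $y_k$ may be replaced by $x_0+z_k$ in the blow-up formula. The main obstacle is precisely this last point — showing that the concentration not only occurs in the correct unit cell but is pinned to the minimum at the finer scale $\varepsilon_k$ — which requires the energy expansion to be sharp enough to detect sub-$\varepsilon_k$ displacements of the center.
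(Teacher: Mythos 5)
Your overall skeleton agrees with the paper's: trial states concentrated at $x_0$ give the sharp upper bound, lattice translations reduce to a single well, and the final pinning of the center at scale $\epsilon_k$ is obtained from the sharp energy expansion together with the radial decrease of $Q$ (Fatou plus a rearrangement inequality, exactly as the paper does at the end of its proof). But there is a genuine gap in the middle: you cannot ``simply cite'' the compactness step and the energy lower bound from the trapping analysis. The matching lower bound in the trapping case (Lemma \ref{lem:estimate of I(a_{k}) trapping}, case $0<p\le 1$) hinges on the fact that the sublevel set $\{V\le \gamma\}$ is contained in \emph{finitely many} balls around the finitely many zeros of $V$, so that $\int_{\mathbb{R}^{3}}[\gamma-V]_{+}^{9/2}{\rm d}x\le C\gamma^{3+2/p}<\infty$. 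For a $\mathbb{Z}^3$-periodic potential this integral is infinite (the sublevel set is an infinite lattice of balls), so that proof collapses and no sharp lower bound is available at the outset. Relatedly, your claim that the upper bound on $I(a_k)$ ``forces'' $\|(-\Delta)^{1/4}u_{a_k}\|_{L^2}^{2}\to\infty$ and fixes $\epsilon_k\sim(a^*-a_k)^{1/(q+1)}$ is not justified: from \eqref{eq:boson star inequality} the upper bound only yields $\|(-\Delta)^{1/4}u_{a_k}\|_{L^2}^{2}\le C(a^*-a_k)^{-1/(q+1)}$, an inequality in the wrong direction; without a lower bound on the interaction term nothing prevents, a priori, a bounded minimizing sequence.

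The paper closes precisely this gap by a bootstrap that your proposal skips. It first proves only the weak two-sided bound $m(a^*-a_k)/a^*\le I(a_k)\le M_2(a^*-a_k)^{q/(q+1)}$ in \eqref{ineq: upper bound I(a_{k}) periodic bad}, then a comparison argument ($b=a_k-(a^*-a_k)^{\epsilon}$) giving a constant lower bound on the Hartree term; kinetic blow-up ($\epsilon_k\to0$, Lemma \ref{lem: important periodic}~(i)) is then proved by contradiction with the \emph{non-existence} of minimizers at $a^*$ (Theorem \ref{thm:existence of minimizers}~(ii)), using the concentration-compactness machinery from the existence proof--not by energy matching. With $\epsilon_k^{-1}:=\|(-\Delta)^{1/4}u_k\|_{L^2}^{2}$ (note: not the target scale), near-saturation of \eqref{eq:boson star inequality} gives no-vanishing, Lemma \ref{lem:positive ground state solution} gives strong subsequential convergence of the rescaled minimizers, and the vanishing of the potential energy locates the concentration point modulo $\mathbb{Z}^3$ near $x_0$. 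Only then does Lemma \ref{lem:V/epsilon periodic} (boundedness of $(x_k-x_0)/\epsilon_k$, again by contradiction with the upper bound) produce the potential-energy contribution $\gtrsim\epsilon_k^{p}$, which yields the sharp lower bound $I(a_k)\ge M_1(a^*-a_k)^{p/(p+1)}$ of Lemma \ref{lem: estimate I(a_{k}) periodic}, identifies the true blow-up scale, and feeds the final pinning argument. So the step you singled out as the main obstacle (sub-$\epsilon_k$ pinning) is actually the part of your sketch that works; the missing idea is how to obtain the sharp lower bound and compactness at all when $V$ has infinitely many wells, and that requires the non-existence result at $a^*$ and the staged argument above rather than a citation of the trapping case.
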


\begin{thm}[Blow-up for trapping ring-shaped potentials] 
	\label{thm:behavior-ring-shaped} Let $V(x)=\left|\left|x\right|-1\right|^{p}$
	for some $p>0$. Let $u_a$ be a non-negative minimizer of $I(a)$ in \eqref{eq:boson star energy}
	for $0\leq a<a^{*}$. Then for every sequence $\left\{ a_{k}\right\} $ with
	$a_{k}\nearrow a^{*}$ as $k\to\infty$, there exist a subsequence of $\left\{ a_{k}\right\}$ (still denoted by $\left\{ a_{k}\right\}$ for simplicity) and an element $Q\in \mathcal{G}$ in \eqref{cG} such that the following strong convergences hold true in $H^{1/2}(\mathbb{R}^{3})$.
	\begin{itemize}

		\item If $0<p\le 1$, then there exists a sequence $\{x_{k}\}\subset\mathbb{R}^3$, $|x_k|\to 1$ such that
	\begin{equation} \label{blowup-ring}
	\lim_{k\to\infty}\left(a^{*}-a_{k}\right)^{\frac{3}{2\left(p+1\right)}}u_{a_{k}}\left(x_{k}+x\left(a^{*}-a_{k}\right)^{\frac{1}{p+1}}\right)=\lambda^{\frac{3}{2}}Q\left(\lambda x\right)
	\end{equation}
	where $$
		\lambda=\inf_{W\in \mathcal{G}}\left(pa^{*}\int_{\mathbb{R}^{3}}\left|x_{0}\cdot x\right|^{p}\left|W(x)\right|^{2}{\rm d}x\right)^{\frac{1}{p+1}} = \left(pa^{*}\int_{\mathbb{R}^{3}}\left|x_{0}\cdot x\right|^{p}\left|Q(x)\right|^{2}{\rm d}x\right)^{\frac{1}{p+1}}
		$$
		if $0<p<1$ and
		\begin{align*}
		\lambda & = \inf_{W\in \mathcal{G}}\left(\frac{m^{2}a^{*}}{2}\|(-\Delta)^{-1/4}W\|_{L^{2}}^{2}+a^{*}\int_{\mathbb{R}^{3}}\left|x_{0}\cdot x\right|\left|W(x)\right|^{2}{\rm d}x\right)^{\frac{1}{2}}\\
		& = \left(\frac{m^{2}a^{*}}{2}\|(-\Delta)^{-1/4}Q\|_{L^{2}}^{2}+a^{*}\int_{\mathbb{R}^{3}}\left|x_{0}\cdot x\right|\left|Q(x)\right|^{2}{\rm d}x\right)^{\frac{1}{2}} \quad \text{if  }p=1.
		\end{align*}
		
		\item If $p>1$, then there exists a sequence $\{y_k\}\subset\mathbb{R}^3$ such that
	$$
	\lim_{k\to\infty}\left(a^{*}-a_{k}\right)^{\frac{3}{4}}u_{a_{k}}\left(y_{k}+x\left(a^{*}-a_{k}\right)^{\frac{1}{2}}\right)=\lambda^{\frac{3}{2}}Q\left(\lambda x\right)
$$
where $\lambda$ is defined as in \eqref{blowup-p>1}.
	\end{itemize}
\end{thm}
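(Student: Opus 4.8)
The plan is to treat the ring-shaped potential as a trapping potential (note that $V(x)=||x|-1|^p$ satisfies $(V_1)$, so minimizers exist and $I(a)\to\inf V=0$ by Theorem \ref{thm:existence of minimizers}) whose global minimum set is the unit sphere rather than the finite set $\mathcal{Z}$ of Theorem \ref{thm:behavior-trapping}. Almost every step can be imported from the proof of Theorem \ref{thm:behavior-trapping}; the two genuinely new points are the continuous rotational degeneracy of the minimum set, which prevents pinning the concentration point to a single location, and the fact that near a point $e$ with $|e|=1$ the potential is felt only in the radial direction, which produces the kernel $|x_0\cdot x|^p$ in the limiting functional.

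First I would establish matching energy asymptotics. For the upper bound I insert the trial state $u(y)=A_\tau\tau^{3/2}Q(\tau(y-e))$ with $Q\in\mathcal{G}$, $|e|=1$, $A_\tau$ fixing the mass, and expand $\mathcal{E}_a$ using $\sqrt{-\Delta+m^2}\approx\sqrt{-\Delta}+\frac{m^2}{2\sqrt{-\Delta}}$ together with the optimizer identities \eqref{eq:boson star optimizer}, exactly as in the heuristic \eqref{length scale}. The only term that differs is the potential energy: writing $y=e+\tau^{-1}x$ one has $|y|-1=\tau^{-1}(e\cdot x)+O(\tau^{-2})$, so $\tau^p V(e+\tau^{-1}x)\to|e\cdot x|^p$ and $\int V|u|^2\approx\tau^{-p}\int|e\cdot x|^p|Q(x)|^2\,dx$, an integral independent of the unit vector $e$ by the radial symmetry of $Q$. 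Optimizing the resulting expression over $\tau>0$ and then over $Q\in\mathcal{G}$ reproduces the scale $\tau\sim(a^*-a)^{-1/(p+1)}$ and the value $\lambda$ in the regime $0<p<1$, the mixed balance of the $m^2$-kinetic and potential terms for $p=1$, and --- crucially --- the pure kinetic balance $\tau\sim(a^*-a)^{-1/2}$ for $p>1$, in which $V$ drops out of the leading order. Since the $p>1$ branch then coincides verbatim with that of Theorem \ref{thm:behavior-trapping}, I would dispose of it by the same argument and focus on $0<p\le1$.

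Next comes the compactness/blow-up step. From $I(a)\to0$ and the upper bound I deduce that the kinetic energy of $u_{a_k}$ diverges, extract a blow-up length $\tau_k\to\infty$ and a concentration point $x_k$, and set $w_k(x)=\tau_k^{-3/2}u_{a_k}(x_k+\tau_k^{-1}x)$. Feeding the vanishing of the excess energy into the sharp inequality \eqref{eq:boson star inequality} forces the interaction term to saturate, so $w_k\to Q$ strongly in $H^{1/2}$ (after translation) for some $Q\in\mathcal{G}$. The first localization fact is $|x_k|\to1$: the energy expansion gives $\int V|u_{a_k}|^2\sim\tau_k^{-q}\to0$ with $q=\min\{p,1\}$, so concentration at positive distance from the sphere, where $V\ge c>0$, would keep the potential energy bounded below --- a contradiction. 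Thus the mass concentrates on the unit sphere and, after passing to a subsequence, $x_k/|x_k|\to x_0$ for some unit vector $x_0$.

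The hard part will be the refined normal localization that identifies $\lambda$. Knowing $|x_k|\to1$ is not enough: to get $\tau_k^p V(x_k+\tau_k^{-1}x)\to|x_0\cdot x|^p$ one must show the rescaled normal offset $s_k:=\tau_k(|x_k|-1)$ tends to $0$, since in general $\tau_k(|x_k+\tau_k^{-1}x|-1)=s_k+(x_k/|x_k|)\cdot x+o(1)$ and the limit would otherwise be $|x_0\cdot x+s_0|^p$. I would rule out a nonzero limit $s_0$ by energy matching: by the strong $H^{1/2}$ convergence of $w_k$ and a Fatou lower bound, the potential energy converges, up to a positive constant, to $\int|x_0\cdot x+s_0|^p|Q(x)|^2\,dx$, and since $Q$ is radially symmetric decreasing its one-dimensional marginal onto the $x_0$-axis is an even, non-increasing density, whence a layer-cake decomposition into symmetric intervals shows $s\mapsto\int|x_0\cdot x+s|^p|Q|^2\,dx$ is uniquely minimized at $s=0$; any $s_0\ne0$ would therefore make the energy strictly exceed the upper bound. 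With $s_0=0$ in hand, the limiting functional to be minimized over $W\in\mathcal{G}$ is $\int|x_0\cdot x|^p|W|^2$ for $0<p<1$, respectively $\frac{m^2}{2}\|(-\Delta)^{-1/4}W\|_{L^2}^2+\int|x_0\cdot x||W|^2$ for $p=1$; its minimizer is the profile $Q$, and converting the optimal length scale back yields exactly the stated $\lambda$, whose independence of the choice of $x_0$ follows once more from the radial symmetry of $Q$. Finally, the claimed strong $H^{1/2}$ convergence \eqref{blowup-ring} follows from the convergence of all three energy pieces (kinetic, potential, and interaction) to their respective limits.
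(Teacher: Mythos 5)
Your proposal follows essentially the same route as the paper's proof: matching energy asymptotics built from trial states concentrated at a point of the unit sphere, blow-up compactness identifying the rescaled minimizers with some $Q\in\mathcal{G}$, localization $|x_k|\to 1$ forced by the vanishing of the potential energy, and the Taylor expansion $\tau^{p}V(x_k+\tau^{-1}x)\to\left|x_{0}\cdot x+s_{0}\right|^{p}$ controlled by the fact that, $Q$ being radially decreasing, $s\mapsto\int_{\mathbb{R}^{3}}\left|x_{0}\cdot x+s\right|^{p}\left|Q(x)\right|^{2}{\rm d}x$ is minimized at $s=0$. The only cosmetic difference is that the paper never proves $s_{0}=0$: it only uses the one-sided inequality $\int\left|x_{0}\cdot x+C_{0}\lambda\right|^{p}\left|Q\right|^{2}\geq\int\left|x_{0}\cdot x\right|^{p}\left|Q\right|^{2}$ inside the lower bound (which suffices once upper and lower bounds match), whereas you additionally invoke strictness of the minimum to pin the offset to zero --- equivalent in content.
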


These cases of periodic and ring-shaped potentials are interesting because the potentials have infinitely many minimizers. Consequently, the variational problem $I(a)$ in \eqref{eq:boson star energy} might have infinitely many minimizers. Moreover, in the case of ring-shaped potentials, although the energy functional $\mathcal{E}_a(u)$ is invariant under rotations, its minimizers might be not radially symmetric when $a \nearrow a^*$. Indeed, \eqref{blowup-ring} shows that the minimizer $u_{a_k}$ is not radially symmetric because it concentrates around a point in the unit sphere. This implies that {\em symmetry breaking} occurs.

\medskip
{\bf Organization of the paper.} In Section \ref{sec:theorem of existence} we prove the existence and non-existence of minimizers of $I(a)$ in \eqref{eq:boson star energy}. In Section \ref{sec:behavior}, we will give the proof of Theorem \ref{thm:behavior-trapping}, \ref{thm:behavior-periodic} and \ref{thm:behavior-ring-shaped} which give the blow-up profiles of minimizers of $I(a)$.

\medskip

\section{\label{sec:theorem of existence}Proof of existence and non-existence of minimizer}

In this section, we prove Theorem \ref{thm:existence of minimizers}. We assume without loss of generality that $\inf_{x\in\mathbb{R}^{3}}V(x)=0$.

For $0\leq a<a^{*}$, let $\left\{ u_{k}\right\}$ be a minimizing sequence for $I(a)$, i.e.,
$$
\lim_{k\to\infty}\mathcal{E}_{a}(u_{k})=I(a), \quad \text{ with } \left\{ u_{k}\right\} \in H^{1/2}(\mathbb{R}^{3}) \text{ and } \|u_{k}\|_{L^{2}}^{2}=1 \text{ for all } k\geq0.
$$
We observe from \eqref{eq:boson star inequality}
that 
\begin{equation}\label{ineq:boundeness of I(a)}
\mathcal{E}_{a}(u_{k})\geq\left(1-\frac{a}{a^{*}}\right)\|(-\Delta)^{1/4}u_{k}\|_{L^{2}}^{2}+\int_{\mathbb{R}^{3}}V(x)\left|u_{k}(x)\right|^{2}{\rm d}x.
\end{equation}
Thus $I(a)>-\infty$ and $\left\{ u_{k}\right\}$ is a bounded sequence in $H^{1/2}(\mathbb{R}^{3})$. Hence, extracting a subsequence if necessary, we assume that $u_{k}\rightharpoonup u$
weakly in $H^{1/2}(\mathbb{R}^{3})$ and $u_{k}\to u$ a.e. in $\mathbb{R}^3$. Moreover we have $u_{k}\to u$ strongly in $L^{r}_{{\rm loc}}(\mathbb{R}^{3})$ for $2\leq r<3$, thanks to a Rellich-type theorem for $H^{1/2}(\mathbb{R}^{3})$ (see, e.g, \cite[Theorem 8.6]{LiLo}).

\subsection{\label{subsec:Trapping-potential}Existence of minimizers in the case of a trapping potential}

We first consider $V$ be trapping potential, that means $V$ satisfies $\left(V_{1}\right)$. 
\begin{lem}
	\label{lem:compact embedding}Suppose $V\in L_{{\rm loc}}^{\infty}\left(\mathbb{R}^{3}\right)$
	with $V(x)\to\infty$ as $\left|x\right|\to\infty$. If $\left\{ u_{k}\right\} $
	is a bounded sequence in $H^{1/2}(\mathbb{R}^{3})$ and satisfies that $\int_{\mathbb{R}^{3}}V(x)\left|u_{k}(x)\right|^{2}{\rm d}x\leq C$, then $u_{k}\to u$ strongly in $L^{r}(\mathbb{R}^{3})$ for $2\leq r<3$. 
\end{lem}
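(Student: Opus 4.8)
The plan is to combine the local compactness already available for $H^{1/2}(\mathbb{R}^3)$ with a tightness estimate that is forced by the coercivity of $V$ at infinity. Since $\{u_k\}$ is bounded in $H^{1/2}(\mathbb{R}^3)$, after passing to a subsequence I may assume $u_k\rightharpoonup u$ weakly in $H^{1/2}(\mathbb{R}^3)$, $u_k\to u$ almost everywhere, and $u_k\to u$ strongly in $L^r_{{\rm loc}}(\mathbb{R}^3)$ for $2\le r<3$ by the Rellich-type theorem quoted above. The only thing preventing global strong convergence is possible escape of mass to infinity, so the heart of the argument is to rule this out using $\int_{\mathbb{R}^3}V|u_k|^2\le C$ together with $V(x)\to\infty$.

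First I would prove tightness in $L^2$. For $R>0$ set $m_R:=\inf_{|x|>R}V(x)$; by the hypothesis on $V$ we have $m_R\to\infty$ as $R\to\infty$. On the region $\{|x|>R\}$ one has $V(x)\ge m_R>0$ once $R$ is large, hence
\begin{equation}
\int_{|x|>R}|u_k(x)|^2\,{\rm d}x\le\frac{1}{m_R}\int_{|x|>R}V(x)|u_k(x)|^2\,{\rm d}x\le\frac{C}{m_R},
\end{equation}
uniformly in $k$, and the same bound holds for $u$ by weak lower semicontinuity (or Fatou applied to the a.e. limit). Thus the $L^2$ tails are uniformly small for large $R$.

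Next I would upgrade this to strong $L^2(\mathbb{R}^3)$ convergence by splitting
\begin{equation}
\int_{\mathbb{R}^3}|u_k-u|^2\,{\rm d}x=\int_{|x|\le R}|u_k-u|^2\,{\rm d}x+\int_{|x|>R}|u_k-u|^2\,{\rm d}x,
\end{equation}
where, for fixed $R$, the first integral tends to $0$ by the strong local convergence, and the second is bounded by $2\int_{|x|>R}|u_k|^2+2\int_{|x|>R}|u|^2\le 4C/m_R$, which is made arbitrarily small by choosing $R$ large. An $\varepsilon$-argument (fix $R$ so the tail is below $\varepsilon$, then let $k\to\infty$) gives $u_k\to u$ in $L^2(\mathbb{R}^3)$. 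Finally, to reach $L^r$ for $2<r<3$, I would use the critical Sobolev embedding $H^{1/2}(\mathbb{R}^3)\hookrightarrow L^3(\mathbb{R}^3)$, so that $\{u_k-u\}$ is bounded in $L^3$, and interpolate: writing $\tfrac1r=\tfrac{1-\theta}{2}+\tfrac{\theta}{3}$ with $\theta\in[0,1)$,
\begin{equation}
\|u_k-u\|_{L^r}\le\|u_k-u\|_{L^2}^{1-\theta}\,\|u_k-u\|_{L^3}^{\theta}\longrightarrow 0,
\end{equation}
since the first factor vanishes and the second stays bounded.

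The step I expect to be delicate is not any single estimate but the interface between them: the embedding into $L^3$ is \emph{critical}, so strong convergence in $L^3$ itself is unavailable and one can only use boundedness there, which is exactly why the conclusion is restricted to $r<3$. The genuinely essential mechanism is the tightness estimate, where the divergence of $V$ converts the uniform bound on $\int V|u_k|^2$ into uniform smallness of the $L^2$-mass outside large balls; everything else is standard local compactness and interpolation.
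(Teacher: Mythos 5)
Your proof is correct, but there is nothing in the paper to compare it against line by line: the paper's ``proof'' of this lemma is a single sentence deferring to the monograph of Adams and Fournier. Your proposal supplies the standard self-contained argument that such a citation hides, namely: (1) uniform tightness of the $L^2$ mass at infinity, obtained from $\int_{\mathbb{R}^3}V|u_k|^2\,{\rm d}x\leq C$ and $m_R:=\inf_{|x|>R}V(x)\to\infty$; (2) strong $L^r_{\rm loc}$ convergence from the Rellich-type theorem; (3) an $\varepsilon$-splitting into $\{|x|\leq R\}$ and $\{|x|>R\}$ to get $u_k\to u$ in $L^2(\mathbb{R}^3)$; and (4) interpolation against the bounded $L^3$ norms coming from the critical embedding $H^{1/2}(\mathbb{R}^3)\hookrightarrow L^3(\mathbb{R}^3)$ to cover all $2\leq r<3$. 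Each step is sound, and your closing remark correctly identifies why $r=3$ must be excluded. One small point should be made explicit: the chain $\int_{|x|>R}V|u_k|^2\,{\rm d}x\leq\int_{\mathbb{R}^3}V|u_k|^2\,{\rm d}x\leq C$ tacitly uses $V\geq0$, which the bare statement of the lemma does not assert ($V\in L^\infty_{\rm loc}$ with $V\to\infty$ could be negative on a compact set). This is harmless: in the paper the lemma is only invoked under hypothesis $(V_1)$, where $0\leq V$; and even without that, $V\in L^\infty_{\rm loc}$ together with $V(x)\to\infty$ yields a global lower bound $V\geq-C_0$, so that $\int_{|x|>R}V|u_k|^2\,{\rm d}x\leq C+C_0\sup_k\|u_k\|_{L^2}^2$ and your tightness estimate survives with a modified constant. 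Adding that one sentence would make the proof match the stated hypotheses exactly.
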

\begin{proof}
	A similar proof to this Lemma can be found in \cite{AdRo-03}.
\end{proof}

From \eqref{ineq:boundeness of I(a)}, we see that $\int_{\mathbb{R}^{3}}V(x)\left|u_{k}(x)\right|^{2}{\rm d}x$
is uniformly bounded in $k$. By Lemma \ref{lem:compact embedding} and extracting a subsequence if necessary, we have $u_{k}\to u$ strongly in $L^{r}(\mathbb{R}^{3})$ for $2\leq r<3$.
We conclude that $\|u\|_{L^{2}}^{2}=1$. By the Hardy-Littlewood-Sobolev inequality (see, e.g, \cite[Theorem 4.3]{LiLo}) we have 
$$
\lim_{k\to\infty}\iint_{\mathbb{R}^{3}\times\mathbb{R}^{3}}\frac{\left|u_{k}(x)\right|^{2}\left|u_{k}(y)\right|^{2}}{\left|x-y\right|}{\rm d}x{\rm d}y = \iint_{\mathbb{R}^{3}\times\mathbb{R}^{3}}\frac{|u(x)|^{2}|u(y)|^{2}}{\left|x-y\right|}{\rm d}x{\rm d}y.
$$
Thus, by weak lower semicontinuity we have
$$
I(a)=\lim_{k\to\infty}\mathcal{E}_{a}\left(u_{k}\right)\geq\mathcal{E}_{a}(u)\geq I(a)
$$
which show that $\mathcal{E}_{a}(u)=I(a)$. This implies the existence
of minimizers of \eqref{eq:boson star energy} for any $0\leq a<a^{*}$.
At the same time, by $\|u\|_{L^{2}}^{2}=1$ and \eqref{ineq:boundeness of I(a)},
we also have that $I(a)>0$.

To prove that there is no minimizer for \eqref{eq:boson star energy}
as soon as $a\geq a^{*}$, we proceed as follow. Let $Q\in \mathcal{G}$ in \eqref{cG}. Choose $0\leq\chi\leq1$ be a fixed smooth funtion in $\mathbb{R}^{3}$
such that $\chi\equiv1$ for $\left|x\right|<1$ and $\chi\equiv0$
for $\left|x\right|\geq2$. For $R,\tau>0$ and $x_{0}\in\mathbb{R}^{3}$
satisfies $V(x_{0})=\inf_{x\in\mathbb{R}^{3}}V(x)=0$, we define the
functions $\chi_{R}(x)=\chi\left(\frac{x}{R}\right)$, $\zeta_{R}(x)=\sqrt{1-\chi_{R}(x)^{2}}$
and the trial state
\begin{equation}\label{eq:trial state trapping-periodic}
u_{R,\tau}(x)=A_{R,\tau}\tau^{3/2}\chi_{R}\left(x-x_{0}\right)Q\left(\tau\left(x-x_{0}\right)\right)
\end{equation}
where $A_{R,\tau}$ is chosen so that $\|u_{R,\tau}\|_{L^{2}}^{2}=1$
as $\tau\to\infty$. In fact, by algebraic decay rate of $Q$ in \eqref{eq:decay} we have 
$$
\frac{1}{A_{R,\tau}^{2}}=\int_{\mathbb{R}^{3}}\chi_{R\tau}(x)^{2}\left|Q(x)\right|^{2}{\rm d}x=1+o(1)_{R\tau\to\infty},
$$
where $o(1)_{R\tau\to\infty}$ means a quantity that converges to $0$ as $R\tau\to\infty$.
In the following we could set $R=1$, for instance.

We have the operator inequality
\begin{equation}\label{ineq:operator}
\frac{m^{2}}{2\sqrt{-\Delta+m^{2}}}\leq \sqrt{-\Delta+m^{2}}-\sqrt{-\Delta}=\frac{m^{2}}{\sqrt{-\Delta+m^{2}}+\sqrt{-\Delta}}\leq\frac{m^{2}}{2\sqrt{-\Delta}},
\end{equation}
with notice that
$$
\|(-\Delta)^{-1/4}u_{R,\tau}\|_{L^{2}}^{2}=\frac{A_{R,\tau}^{2}}{\tau}\|(-\Delta)^{-1/4}\chi_{R\tau}Q\|_{L^{2}}^{2}\leq\frac{1}{\tau}\|(-\Delta)^{-1/4}Q\|_{L^{2}}^{2}+o(1)_{R\tau\to\infty}.
$$

By the \emph{IMS-type localization formulas} (see e.g, \cite{LeLe-10,LiYa-88}) we have 
$$
\|(-\Delta)^{1/4}\chi_{R\tau}Q\|_{L^{2}}^{2}+\|(-\Delta)^{1/4}\zeta_{R\tau}Q\|_{L^{2}}^{2}\leq\|(-\Delta)^{1/4}Q\|_{L^{2}}^{2}+\frac{C}{(R\tau)^{2}},
$$
which implies that 
$$
\|(-\Delta)^{1/4}u_{R,\tau}\|_{L^{2}}^{2}=A_{R,\tau}^{2}\tau\|(-\Delta)^{1/4}\chi_{R\tau}Q\|_{L^{2}}^{2}\leq\tau+o(1)_{R\tau\to\infty}.
$$

On the other hand, we have 
\begin{align*}
& \iint_{\mathbb{R}^{3}\times\mathbb{R}^{3}}\frac{\chi_{R\tau}(x)^{2}\chi_{R\tau}(y)^{2}\left|Q(x)\right|^{2}\left|Q(y)\right|^{2}}{\left|x-y\right|}{\rm d}x{\rm d}y\\
& \geq\iint_{\mathbb{R}^{3}\times\mathbb{R}^{3}}\frac{\left|Q(x)\right|^{2}\left|Q(y)\right|^{2}}{\left|x-y\right|}{\rm d}x{\rm d}y-2\iint_{\mathbb{R}^{3}\times\mathbb{R}^{3}}\frac{\zeta_{R\tau}(x)^{2}\left|Q(x)\right|^{2}\left|Q(y)\right|^{2}}{\left|x-y\right|}{\rm d}x{\rm d}y\\
& =\frac{2}{a^{*}}+o(1)_{R\tau\to\infty},
\end{align*}
because of decay rate of $Q(x)$ and $(\left|\cdot\right|^{-1}\star\left|Q\right|^{2})(x)$ in \eqref{eq:decay}.
This implies that 
\begin{align*}
& \iint_{\mathbb{R}^{3}\times\mathbb{R}^{3}}\frac{\left|u_{R,\tau}(x)\right|^{2}\left|u_{R,\tau}(y)\right|^{2}}{\left|x-y\right|}{\rm d}x{\rm d}y\\
& =A_{R,\tau}^{4}\tau\iint_{\mathbb{R}^{3}\times\mathbb{R}^{3}}\frac{\chi_{R\tau}(x)^{2}\chi_{R\tau}(y)^{2}\left|Q(x)\right|^{2}\left|Q(y)\right|^{2}}{\left|x-y\right|}{\rm d}x{\rm d}y\geq\frac{2\tau}{a^{*}}+o(1)_{R\tau\to\infty}.
\end{align*}

Moreover, since $V(x)\chi_{R\tau}(x-x_{0})^{2}$ is bounded and has
compact support, we have 
\begin{equation}\label{eq:thm1.1-1}
\int_{\mathbb{R}^{3}}V(x)\left|u_{R,\tau}(x)\right|^{2}{\rm d}x=V(x_{0})+o(1)_{R\tau\to\infty}=o(1)_{R\tau\to\infty}.
\end{equation}

Thus, we have proved that 
\begin{equation}
\mathcal{E}_{a}\left(u_{R,\tau}\right)\leq\tau\left(1-\frac{a}{a^{*}}\right)+\frac{m^{2}}{2\tau}\|(-\Delta)^{-1/4}Q\|_{L^{2}}^{2}+o(1)_{R\tau\to\infty}.\label{ineq:trial state-1}
\end{equation}

Now, we are able to prove the non-existence of minimizer of $I(a)$
when $a\geq a^{*}$. For $a=a^{*}$, it follows from \eqref{ineq:boundeness of I(a)} and \eqref{ineq:trial state-1} that 
$$
0\leq I(a^{*})\leq\lim_{\tau\to\infty}\mathcal{E}_{a^{*}}\left(u_{R,\tau}\right)=0.
$$
Thus $I(a^{*})=0$. To prove that there does not exist a minimizer for $I(a)$ with $a=a^{*}$, we argue by contradiction as follows. We suppose that $u\in H^{1/2}(\mathbb{R}^{3})$ is a minimizer for $I(a)$ with $a=a^{*}$. It follows from the strict inequality $\|(-\Delta+m^{2})^{1/4}u\|_{L^{2}}>\|(-\Delta)^{1/4}u\|_{L^{2}}$ that
$$
0=\mathcal{E}_{a^{*}}(u)|_{m>0}>\mathcal{E}_{a^{*}}(u)|_{m=0}\geq0,
$$
which is a contradiction. Hence no minimizer exists for $I(a)$ if $a=a^{*}$.

For $a>a^{*}$, it follows from
\eqref{ineq:trial state-1} that 
$$
I(a)\leq\lim_{\tau\to\infty}\mathcal{E}_{a}\left(u_{R,\tau}\right)=-\infty.
$$
This implies that $I(a)$ is unbounded from below for any $a>a^{*}$, and hence the non-existence of minimizers is therefore proved.

It remains to prove $\lim_{a\to a^{*}}I(a)=0$. This follows easily from \eqref{ineq:trial state-1}, by first taking $a\to a^{*}$, followed
by $\tau\to\infty$.

\subsection{\label{subsec:Periodic-potential}Existence of minimizers in the case of a periodic potential}

We now consider $V$ be periodic potential, that means $V$ satisfies
$\left(V_{2}\right)$. This case is harder than the previous one, and we will need to use the concentration-compactness argument \cite{Lions-84}.

The non-existence of minimizer of $I(a)$ follows from the same argument of non-existence part in Section \ref{subsec:Trapping-potential}. To deal with the existence of minimizer of $I(a)$ when $0\leq a<a^{*}$, we will need some priliminary results. 

Let
$$
I_{c}(a) =\inf\left\{ \mathcal{E}_{a}(u):u\in H^{1/2}(\mathbb{R}^{3}),\|u\|_{L^{2}}^{2}=c\right\}, 
$$
we note that $I(a)=I_{1}(a)$.

\begin{lem}
	\label{lem:continuous of ennegy}$I_{c}(a)$ is uniformly continuous with respect to $c\in (0,1)$. 
\end{lem}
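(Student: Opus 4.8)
The plan is to prove the stronger statement that $c\mapsto I_{c}(a)$ is Lipschitz on $(0,1)$ with a constant independent of $c$; uniform continuity follows immediately. The whole argument rests on an exact scaling identity for the functional. Writing $D(u):=\frac12\iint_{\mathbb{R}^{3}\times\mathbb{R}^{3}}\frac{|u(x)|^{2}|u(y)|^{2}}{|x-y|}\,{\rm d}x\,{\rm d}y$ for the interaction term, a direct computation shows that for every $u$ with $\|u\|_{L^{2}}^{2}=c$ and every $\theta>0$ the rescaled function $v=\sqrt{\theta}\,u$ has $\|v\|_{L^{2}}^{2}=\theta c$ and
$$
\mathcal{E}_{a}(v)=\theta\,\mathcal{E}_{a}(u)+a\,\theta(1-\theta)\,D(u),
$$
since the kinetic and potential terms are homogeneous of degree one in $|u|^{2}$ while $D$ is homogeneous of degree two.

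First I would extract from this identity a two-sided comparison between $I_{\theta c}(a)$ and $\theta I_{c}(a)$ for $0<\theta\le 1$. Applying the identity to an arbitrary $w$ of mass $\theta c$ with scaling factor $1/\theta$ (so that $\sqrt{1/\theta}\,w$ has mass $c$) and using $D(w)\ge 0$ together with $1/\theta\ge 1$, the correction term has a favorable sign and one gets $I_{c}(a)\le \theta^{-1}\mathcal{E}_{a}(w)$; taking the infimum over $w$ yields the clean lower bound $I_{\theta c}(a)\ge \theta I_{c}(a)$. In the other direction, feeding a near-minimizer $u$ of $I_{c}(a)$ into the identity gives $I_{\theta c}(a)\le \theta I_{c}(a)+a\theta(1-\theta)D(u)$.

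The second ingredient is a set of uniform a priori bounds, which is where the coercivity coming from the interpolation inequality \eqref{eq:boson star inequality} enters. A fixed trial function $\phi$ with $\|\phi\|_{L^{2}}^{2}=1$ (compactly supported, so that $\int V|\phi|^{2}<\infty$) rescaled to mass $c$ gives $0\le I_{c}(a)\le c\,C_{0}$ with $C_{0}:=\|(-\Delta+m^{2})^{1/4}\phi\|_{L^{2}}^{2}+\int_{\mathbb{R}^{3}}V|\phi|^{2}$ independent of $c$. Moreover, \eqref{eq:boson star inequality} yields $aD(u)\le (a/a^{*})\,c\,\|(-\Delta)^{1/4}u\|_{L^{2}}^{2}$, whence $\mathcal{E}_{a}(u)\ge (1-a/a^{*})\|(-\Delta)^{1/4}u\|_{L^{2}}^{2}$; combined with $\mathcal{E}_{a}(u)\le I_{c}(a)+\eta\le cC_{0}+\eta$ this forces $\|(-\Delta)^{1/4}u\|_{L^{2}}^{2}=O(c)$ and hence $D(u)\le c^{2}\tilde{C}$ for every near-minimizer $u$ of $I_{c}(a)$, with $\tilde{C}$ independent of $c$.

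Finally I would combine the two parts. For $0<c'\le c<1$ set $\theta=c'/c$. The lower bound gives $I_{c}(a)-I_{c'}(a)\le(1-\theta)I_{c}(a)\le (c-c')C_{0}$, while the upper bound together with $D(u)\le c^{2}\tilde{C}$ gives $I_{c'}(a)-I_{c}(a)\le a\theta(1-\theta)c^{2}\tilde{C}\le a\tilde{C}\,(c-c')$. Hence $|I_{c}(a)-I_{c'}(a)|\le \max\{C_{0},a\tilde{C}\}\,|c-c'|$, a uniform Lipschitz estimate. The only delicate point is the behavior near $c=0$, where the naive factor $(1-\theta)=(c-c')/c$ blows up; this is exactly compensated by the scalings $I_{c}(a)=O(c)$ and $D(u)=O(c^{2})$ obtained above, and making these two homogeneities sharp is the main obstacle in the argument.
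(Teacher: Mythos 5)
Your proposal is correct, but it takes a genuinely different route from the paper. The paper's proof is a two-line convexity argument: by scaling, $I_{c}(a)=cI_{1}(ac)$, and since $I_{c}(a)=\inf_{\|v\|_{L^{2}}^{2}=1}\left\{ c\left(\|(-\Delta+m^{2})^{1/4}v\|_{L^{2}}^{2}+\int_{\mathbb{R}^{3}}V|v|^{2}\right)-\tfrac{ac^{2}}{2}\iint\frac{|v(x)|^{2}|v(y)|^{2}}{|x-y|}\right\}$ is an infimum of concave functions of $c$, it is concave in $c$, and a (bounded) concave function on an interval is uniformly continuous. You instead prove a uniform Lipschitz bound directly: the same homogeneity identity $\mathcal{E}_{a}(\sqrt{\theta}u)=\theta\mathcal{E}_{a}(u)+a\theta(1-\theta)D(u)$, applied once to arbitrary competitors (giving $I_{\theta c}(a)\geq\theta I_{c}(a)$ for $\theta\leq 1$) and once to near-minimizers, combined with the a priori bounds $0\leq I_{c}(a)\leq C_{0}c$ and $D(u)\leq\tilde{C}c^{2}$ that follow from \eqref{eq:boson star inequality} with $a<a^{*}$. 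Your two-sided comparison steps all check out (the sign analysis in the lower bound, and the cancellation of the dangerous factor $(1-\theta)=(c-c')/c$ against $I_{c}(a)=O(c)$ and $D(u)=O(c^{2})$, are exactly right), so the argument is complete modulo the routine bookkeeping of the near-minimizer error $\eta$. What the comparison buys: the paper's route is much shorter but qualitative, and as written it silently uses boundedness of $I_{c}(a)$ (concavity alone does not give uniform continuity on an open interval, cf. $-1/c$); your route costs the extra a priori estimates but yields a stronger, quantitative conclusion — an explicit Lipschitz constant, uniform up to $c=0$ — and needs no convex-analysis input. Note also that your lower bound $I_{\theta c}(a)\geq\theta I_{c}(a)$ is precisely the (non-strict) subadditivity inequality the paper proves in its next lemma, so your argument in effect unifies the two lemmas.
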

\begin{proof}
	By a simple scaling we have $I_{c}(a)=cI(ac)$. We remark that $$I(ac)=\inf\{\text{linear functions in }c\}$$ has to be a concave function with respect to $c$. Therefore it follows that $I_{c}(a)$ is a concave function with respect to $c$. Thus $I_{c}(a)$ is uniformly continuous with respect to $c\in (0,1)$.
\end{proof}

\begin{lem}
	For any $0<a<a^{*}$ and $0<\lambda<1$, we have 
	\begin{equation}
	I(a)<I_{\lambda}(a)+I_{1-\lambda}(a).\label{eq:dichotomy}
	\end{equation} 
\end{lem}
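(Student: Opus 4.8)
The plan is to reduce the strict subadditivity \eqref{eq:dichotomy} to the single \emph{strict monotonicity} statement $I(a)<I(0)$, and then to establish the latter by an explicit delocalized trial state that exploits the periodicity of $V$. Throughout write $D(u):=\iint_{\mathbb{R}^3\times\mathbb{R}^3}\frac{|u(x)|^2|u(y)|^2}{|x-y|}\,{\rm d}x\,{\rm d}y\ge 0$, so that $\mathcal{E}_a(u)=\mathcal{E}_0(u)-\tfrac a2 D(u)$.

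First I would use the two ingredients already recorded in the proof of Lemma \ref{lem:continuous of ennegy}: the scaling identity $I_c(a)=c\,I(ac)$ and the concavity of $b\mapsto I(b)$ on $[0,a^*)$ (being an infimum of functions affine in the coupling). Thus $I_\lambda(a)=\lambda I(\lambda a)$ and $I_{1-\lambda}(a)=(1-\lambda)I((1-\lambda)a)$, and applying concavity to the convex combinations $\lambda a=\lambda\cdot a+(1-\lambda)\cdot 0$ and $(1-\lambda)a=(1-\lambda)\cdot a+\lambda\cdot 0$ gives
\[
I(\lambda a)\ge \lambda I(a)+(1-\lambda)I(0),\qquad I((1-\lambda)a)\ge (1-\lambda)I(a)+\lambda I(0).
\]
Multiplying by $\lambda$ and $1-\lambda$ respectively and adding, the cross terms combine to
\[
I_\lambda(a)+I_{1-\lambda}(a)\ge I(a)+2\lambda(1-\lambda)\bigl(I(0)-I(a)\bigr).
\]
Since $2\lambda(1-\lambda)>0$ for $\lambda\in(0,1)$, the strict inequality \eqref{eq:dichotomy} follows \emph{provided} $I(0)>I(a)$. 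The whole content is therefore that switching on the attraction strictly lowers the ground state energy.

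Second, to prove $I(0)>I(a)$ it suffices to exhibit one mass-one $u$ with $\mathcal{E}_a(u)<I(0)$, i.e.\ a near-optimizer of the $a=0$ functional whose Coulomb energy $D(u)$ stays bounded below. Here the periodic case is genuinely delicate: $I(0)=\inf\sigma\bigl(\sqrt{-\Delta+m^2}+V\bigr)$ sits at the bottom of the band (essential) spectrum and is \emph{not} attained, so the naive minimizing sequences delocalize, their Coulomb energy tends to $0$, and one only gets $I(a)\le I(0)$. The remedy is to modulate the lowest Bloch mode. Let $\psi_0>0$ be the periodic ($k=0$) ground Bloch function of $\sqrt{-\Delta+m^2}+V$ (positivity follows from the positivity-preserving semigroup $e^{-t\sqrt{-\Delta+m^2}}$ and Perron--Frobenius), fix a smooth cutoff $\chi$, and set
\[
u_L(x)=c_L\,\chi(x/L)\,\psi_0(x),\qquad c_L^{-2}=\int_{\mathbb{R}^3}\chi(x/L)^2\psi_0(x)^2\,{\rm d}x,\qquad L\to\infty .
\]
I expect $\mathcal{E}_0(u_L)=I(0)+O(L^{-2})$ from an IMS-type localization for the nonlocal operator $\sqrt{-\Delta+m^2}$ (exactly the mechanism producing the error $C/(R\tau)^2$ in the non-existence argument above), while the lower bound on the attraction is elementary: the mass-one density $|u_L|^2$ is supported in a ball of radius $O(L)$, so $D(u_L)\ge\bigl(\operatorname{diam}\operatorname{supp} u_L\bigr)^{-1}\ge c\,L^{-1}$. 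Hence
\[
\mathcal{E}_a(u_L)\le I(0)+O(L^{-2})-\tfrac a2\,c\,L^{-1}<I(0)
\]
for $L$ large, which yields $I(a)<I(0)$ and completes the proof.

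The main obstacle is the pseudo-relativistic kinetic estimate $\langle u_L,\sqrt{-\Delta+m^2}\,u_L\rangle+\int_{\mathbb{R}^3}V|u_L|^2=I(0)+O(L^{-2})$: it must be carried out for the \emph{nonlocal} operator acting on the oscillatory, envelope-modulated state $u_L$, with the localization cost controlled at the correct order $L^{-2}$. The decisive point is then purely dimensional: the attractive gain $\sim a L^{-1}$ is of \emph{larger} order than the $L^{-2}$ localization cost, so wide wave packets strictly beat $I(0)$. (For a trapping $V$ the inequality $I(a)<I(0)$ is immediate, since Lemma \ref{lem:compact embedding} already furnishes a genuine minimizer $u_0$ of $I(0)$ with $D(u_0)>0$, whence $I(a)\le\mathcal{E}_a(u_0)=I(0)-\tfrac a2 D(u_0)<I(0)$; it is precisely the absence of such a minimizer in the periodic case that forces the modulated-Bloch construction.)
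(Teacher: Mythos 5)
Your reduction is correct, and it is a genuinely different route from the paper's. The identity $I_c(a)=c\,I(ac)$ together with concavity of $b\mapsto I(b)$ (both recorded in the paper's Lemma \ref{lem:continuous of ennegy}) does give
\begin{equation*}
I_{\lambda}(a)+I_{1-\lambda}(a)\;\ge\; I(a)+2\lambda(1-\lambda)\bigl(I(0)-I(a)\bigr),
\end{equation*}
so the lemma follows once $I(a)<I(0)$ is known. The paper instead proves the strict sub-homogeneity $I_{\theta\delta}(a)<\theta I_{\delta}(a)$ directly from the mass scaling $u\mapsto\theta^{1/2}u$ and then invokes the argument of Lemma II.1 in \cite{Lions-84}; that is a two-line computation, whereas your route shifts all the difficulty into the strict monotonicity $I(a)<I(0)$. (Your trapping-potential remark is fine, but the lemma is stated and used in the periodic setting, so the Bloch-wave construction is where your proof must stand or fall.)

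And there is a genuine gap exactly at the step you flag as the main obstacle, namely the claim $\mathcal{E}_{0}(u_{L})=I(0)+O(L^{-2})$. The mechanism you cite does not apply: the paper's bound with error $C/(R\tau)^{2}$ concerns $\sqrt{-\Delta}$ tested against the optimizer $Q$, and its $O(R^{-2})$ rate uses the decay $Q(x)\lesssim(1+|x|)^{-4}$ in an essential way. Your $\psi_{0}$ is periodic, hence bounded below, and for a non-decaying function the localization cost of a cutoff at scale $L$ is \emph{not} $O(L^{-2})$ for the massless operator: writing the cost via the jump-kernel (L\'evy) representation, it equals $\frac{c_{L}^{2}}{2}\iint\bigl(\chi_{L}(x)-\chi_{L}(y)\bigr)^{2}\nu(x-y)\psi_{0}(x)\psi_{0}(y)\,{\rm d}x\,{\rm d}y$, and for $\nu(z)=c|z|^{-4}$ this is of order $L^{-1}$, i.e.\ \emph{the same order} as the Coulomb gain $\frac{a}{2}D(u_{L})\sim a L^{-1}$, with an $a$-independent constant in front of the cost. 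This is no accident: $\sqrt{-\Delta}$ and $|x|^{-1}$ scale identically, so no choice of cutoff profile can separate the two orders, and for small $a$ (precisely the regime the lemma must cover) the sign of $\mathcal{E}_{a}(u_{L})-I(0)$ is undetermined. Your "purely dimensional" conclusion is only rescued by $m>0$: the L\'evy kernel of $\sqrt{-\Delta+m^{2}}-m$ decays exponentially (it behaves like $m^{2}K_{2}(m|z|)|z|^{-2}$), hence has finite second moment, and then the same computation does give cost $\le C_{m}L^{-2}$ -- but you never invoke the mass, and without it the decisive estimate is false as stated. In addition, the spectral input you assume -- that $I(0)=\inf\sigma\bigl(\sqrt{-\Delta+m^{2}}+V\bigr)$ equals the lowest periodic Bloch eigenvalue, attained by a positive bounded periodic $\psi_{0}$, and that this eigenvalue is a lower bound for the operator on all of $L^{2}(\mathbb{R}^{3})$ -- is nontrivial for this nonlocal operator; it needs either Floquet theory plus a diamagnetic-type inequality for $\sqrt{(-i\nabla+k)^{2}+m^{2}}$, or a ground-state representation, neither of which is sketched. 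So the skeleton is right and could be completed, but as written the crucial step rests on an inapplicable citation and on unproven spectral facts.
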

\begin{proof}
	We show that for any $1<\theta\leq\frac{1}{\delta}$
	and $\delta\in\left(0,1\right)$ then $I_{\theta\delta}(a)\leq\theta I_{\delta}(a)$.
	Indeed, 
	\begin{align*}
	I_{\theta\delta}(a) & =\inf_{\|u\|_{L^{2}}^{2}=\theta\delta} \mathcal{E}_{a}(u) =\inf_{\|u\|_{L^{2}}^{2}=\delta} \mathcal{E}_{a}\left(\theta^{\frac{1}{2}}u\right) \\
	& =\inf_{\|u\|_{L^{2}}^{2}=\delta}\left\{ \theta\mathcal{E}_{a}(u)+\frac{\theta-\theta^{2}}{2}\iint_{\mathbb{R}^{3}\times\mathbb{R}^{3}}\frac{|u(x)|^{2}|u(y)|^{2}}{\left|x-y\right|}{\rm d}x{\rm d}y\right\} \\
	& <\theta I_{\delta}(a).
	\end{align*}
	
	Thus, by an argument presented in \cite[Lemma II.1]{Lions-84}, this inequality leads to the strict subadditivity condition \eqref{eq:dichotomy}.
\end{proof}

\begin{lem}
	\label{lem:dichotomy1}Let $2<r<\infty$ and let $u_{k}$, $u$
	be functions in $\mathbb{R}^{3}$ such that $u_{k}(x)\to u(x)$ a.e.
	in $\mathbb{R}^{3}$, $\sup_{k}\|u_{k}\|_{L^{r}}<\infty$ and 
	$$
	\sup_{k}\iint_{\mathbb{R}^{3}\times\mathbb{R}^{3}}\frac{\left|u_{k}(x)\right|^{2}\left|u_{k}(y)\right|^{2}}{\left|x-y\right|}{\rm d}x{\rm d}y<\infty,
	$$
	then 
	\begin{align*}
	\iint_{\mathbb{R}^{3}\times\mathbb{R}^{3}}\frac{\left|u_{k}(x)\right|^{2}\left|u_{k}(y)\right|^{2}}{\left|x-y\right|}{\rm d}x{\rm d}y 
	&
	=\iint_{\mathbb{R}^{3}\times\mathbb{R}^{3}}\frac{\left|u_{k}(x)-u(x)\right|^{2}\left|u_{k}(y)-u(y)\right|^{2}}{\left|x-y\right|}{\rm d}x{\rm d}y\\
	& \quad +\iint_{\mathbb{R}^{3}\times\mathbb{R}^{3}}\frac{|u(x)|^{2}|u(y)|^{2}}{\left|x-y\right|}{\rm d}x{\rm d}y+o(1).
	\end{align*}
\end{lem}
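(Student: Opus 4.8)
The plan is to exploit the fact that the Coulomb term
$D(F,G):=\iint_{\mathbb{R}^{3}\times\mathbb{R}^{3}}\frac{F(x)G(y)}{\left|x-y\right|}{\rm d}x{\rm d}y$
is a symmetric, positive-definite \emph{bilinear} form, so that the claimed splitting becomes the assertion that a short explicit list of cross terms tends to zero. First I would set $w_{k}:=u_{k}-u$, so that $w_{k}\to0$ a.e., and use the pointwise identity $\left|u_{k}\right|^{2}=\left|w_{k}\right|^{2}+\left|u\right|^{2}+h_{k}$ with $h_{k}:=2\,\mathrm{Re}(w_{k}\overline{u})$. Substituting this into $D(\left|u_{k}\right|^{2},\left|u_{k}\right|^{2})$ and expanding by bilinearity, the desired identity is equivalent to
\[
D(h_{k},h_{k})+2D(\left|w_{k}\right|^{2},h_{k})+2D(\left|u\right|^{2},h_{k})+2D(\left|w_{k}\right|^{2},\left|u\right|^{2})=o(1),
\]
so the whole lemma reduces to the vanishing of these four cross terms.

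The analytic tool is the Hardy--Littlewood--Sobolev inequality $\left|D(F,G)\right|\le C\|F\|_{L^{6/5}}\|G\|_{L^{6/5}}$ on $\mathbb{R}^{3}$; the natural space is thus $L^{6/5}$ for the densities, i.e. $L^{12/5}$ for the functions. I would record at the outset that the hypotheses guarantee $\sup_{k}\|\left|u_{k}\right|^{2}\|_{L^{6/5}}=\sup_{k}\|u_{k}\|_{L^{12/5}}^{2}<\infty$ (in the intended application this is immediate, the minimizing sequence being bounded in $H^{1/2}(\mathbb{R}^{3})\hookrightarrow L^{12/5}(\mathbb{R}^{3})$). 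By Fatou $u\in L^{12/5}$, whence by HLS/Sobolev the \emph{fixed} potential $W:=\left|\cdot\right|^{-1}\star\left|u\right|^{2}$ lies in $L^{6}(\mathbb{R}^{3})$.

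With these uniform bounds the cross terms fall in two strokes. Step one is to show that the interference density $h_{k}\to0$ strongly in $L^{6/5}$: since $\left|h_{k}\right|\le2\left|w_{k}\right|\left|u\right|$ one has $h_{k}\to0$ a.e., and the uniform $L^{12/5}$ bound on $w_{k}$ together with the tightness supplied by the fixed $u\in L^{12/5}$ yields equi-integrability, so Vitali's theorem gives $\|h_{k}\|_{L^{6/5}}\to0$. Feeding this into HLS kills the three terms $D(h_{k},h_{k})$, $D(\left|w_{k}\right|^{2},h_{k})$ and $D(\left|u\right|^{2},h_{k})$, since $\left|w_{k}\right|^{2}$ and $\left|u\right|^{2}$ are bounded in $L^{6/5}$. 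Step two handles the last term by writing $D(\left|w_{k}\right|^{2},\left|u\right|^{2})=\int_{\mathbb{R}^{3}}\left|w_{k}\right|^{2}W$, where $\left|w_{k}\right|^{2}\to0$ a.e. and is bounded in $L^{6/5}$ while $W\in L^{6}$ is fixed; a second Vitali argument (tightness now coming from $W$) forces this integral to $0$. Summing the estimates gives the splitting.

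The main obstacle is genuinely the \emph{nonlocality}: unlike the classical Brezis--Lieb lemma one cannot close the argument by an elementary pointwise inequality plus Fatou, and the delicate point is to control the interaction of the vanishing profile $w_{k}$ with the fixed profile $u$. Everything hinges on the strong $L^{6/5}$ convergence $h_{k}\to0$ and on the uniform $L^{6/5}$ bound for $\left|u_{k}\right|^{2}$ that makes HLS applicable; securing this uniform integrability --- equivalently, ruling out escape from the scaling-critical space $L^{12/5}$ --- is precisely where the $L^{r}$ hypothesis and the uniform Coulomb-energy bound must be used in tandem.
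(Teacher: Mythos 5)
Your algebraic reduction (bilinearity of $D(F,G):=\iint F(x)G(y)|x-y|^{-1}{\rm d}x{\rm d}y$, four cross terms) and your HLS/Vitali treatment of those cross terms are correct as far as they go: this is the standard Brezis--Lieb argument for the Hartree term, and it is a complete proof \emph{whenever} $\sup_{k}\|u_{k}\|_{L^{12/5}}<\infty$. The gap is your very first step, the assertion that ``the hypotheses guarantee $\sup_{k}\|u_{k}\|_{L^{12/5}}^{2}<\infty$''. That implication is false. The lemma assumes only an $L^{r}$ bound for a \emph{single given} $r\in(2,\infty)$ plus the uniform Coulomb bound, and neither controls $L^{12/5}$: the Coulomb energy of $|u_{k}|^{2}$ is comparable to $\||u_{k}|^{2}\|_{\dot H^{-1}}^{2}$, and $\dot H^{-1}$ scales exactly like $L^{6/5}$ in $\mathbb{R}^{3}$, so no interpolation with $L^{r/2}$ can produce the $L^{6/5}$ bound on $|u_{k}|^{2}$. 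Concretely, take $r=3$, fix $0\le h\in C_{c}^{\infty}(\mathbb{R}^{3})$ supported in the unit ball, $h\not\equiv 0$, and set $u_{k}:=k^{-1}\sum_{j=1}^{k^{3}}h(\cdot-x_{j}^{(k)})$ with centers satisfying $|x_{j}^{(k)}|\ge k$ and $|x_{i}^{(k)}-x_{j}^{(k)}|\ge k^{6}$ for $i\ne j$. Then $u_{k}\to 0$ pointwise, $\|u_{k}\|_{L^{3}}=\|h\|_{L^{3}}$ for all $k$, and $\iint|u_{k}(x)|^{2}|u_{k}(y)|^{2}|x-y|^{-1}{\rm d}x{\rm d}y=O(k^{-1})$ (diagonal self-energies) $+\,O(k^{-4})$ (off-diagonal terms, by the separation), so all hypotheses of the lemma hold with $u=0$; yet $\|u_{k}\|_{L^{12/5}}^{12/5}=k^{3/5}\|h\|_{L^{12/5}}^{12/5}\to\infty$. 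This is not a counterexample to the lemma (its conclusion is trivial here), but it does destroy the implication on which your entire proof rests.

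What you have actually proved is the lemma under the additional hypothesis that $\{u_{k}\}$ is bounded in $L^{12/5}(\mathbb{R}^{3})$, i.e.\ essentially the case $r=12/5$. As you note, that weaker statement suffices for the paper's application, where the lemma is applied to minimizing sequences bounded in $H^{1/2}(\mathbb{R}^{3})\hookrightarrow L^{12/5}(\mathbb{R}^{3})$; so your argument could replace the lemma in context if the paper restated it with that hypothesis. But the lemma as written, for arbitrary $r\in(2,\infty)$, requires a genuinely different argument in which the cross terms are controlled using only a.e.\ convergence, the single $L^{r}$ bound and the finiteness of the nonlocal energies. The paper itself does not supply such an argument either: its ``proof'' is a citation of Lemma 2.2 in the reference of Bellazzini, Frank and Visciglia, which is precisely the statement under these weaker hypotheses. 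To repair your write-up, either add the $L^{12/5}$ (or $H^{1/2}$) bound to the hypotheses and note why this is what the paper really uses, or replace your first step by the argument of that reference.
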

\begin{proof}
	The proof of this Lemma can be found in \cite[Lemma 2.2]{BeFrVi-14}. 
\end{proof} 

Now, we claim that there exists a sequence $\{y_{k}\}\subset\mathbb{R}^3$ and a postitive constant $R_{0}$ such that 
\begin{equation}\label{eq:vanishing0}
\liminf_{k\to\infty}\int_{B(y_{k},R_{0})}\left|u_{k}(x)\right|^{2}{\rm d}x>0.
\end{equation}
On the contrary, we assume that for any $R>0$ there exists a subsequence of $\{a_{k}\}$, still denoted by $\{a_{k}\}$, such that
\begin{equation}\label{eq:vanishing0b}
\lim_{k\to\infty}\sup_{y\in\mathbb{R}^3}\int_{B(y,R)}\left|u_{k}(x)\right|^{2}{\rm d}x=0.
\end{equation}
It follows from \cite[Lemma 9]{LeLe-11}
that $\lim_{k\to\infty}\|u_{k}\|_{L^{q}}=0$ for any $2\leq q<3$. By the Hardy-Littlewood-Sobolev inequality we have 
\begin{equation}
\iint_{\mathbb{R}^{3}\times\mathbb{R}^{3}}\frac{\left|u_{k}(x)\right|^{2}\left|u_{k}(y)\right|^{2}}{\left|x-y\right|}{\rm d}x{\rm d}y\to0.\label{eq:vanishing-periodict}
\end{equation}
It follows from \eqref{eq:vanishing-periodict} that 
\begin{align*}
0<m & \leq\|(-\Delta+m^{2})^{1/4}u_{k}\|_{L^{2}}^{2}+\int_{\mathbb{R}^{3}}V(x)\left|u_{k}\right|^{2}{\rm d}x \\
&=\mathcal{E}_{a}(u_{k})+\frac{a}{2}\iint_{\mathbb{R}^{3}\times\mathbb{R}^{3}}\frac{\left|u_{k}(x)\right|^{2}\left|u_{k}(y)\right|^{2}}{\left|x-y\right|}{\rm d}x{\rm d}y\to I(a).
\end{align*}
This is impossible since $I(a)\leq C(a^{*}-a)^{\frac{q}{q+1}}$ which close to $0$ as $a$ closes to $a^{*}$. We refer to \eqref{ineq: upper bound I(a_{k}) periodic bad} for the proof of this upper bound of $I(a)$. Here $C$ is positive constant, and $q=\min\{p,1\}$.

Thus, \eqref{eq:vanishing0b} does not occurs, and hence \eqref{eq:vanishing0} holds true. For all $k\in\mathbb{N}$, we can choose $\{z_{k}\}\subset\mathbb{Z}^3$ such that $y_{k}-z_{k}\in [0,1]^3$, then we have $|y_{k}-z_{k}|\leq 2$ for any $k$. Setting $\tilde{u}_{k}(x)=u_{k}(x+z_{k})$, we then have $\{\tilde{u}_{k}\}$ is bounded in $H^{1/2}(\mathbb{R}^3)$ and $\|\tilde{u}_{k}\|_{L^{2}}=\|u_{k}\|_{L^{2}}=1$. Hence, extracting a subsequence if necessary, we assume that $\tilde{u}_{k}\rightharpoonup \tilde{u}$
weakly in $H^{1/2}(\mathbb{R}^{3})$ and $\tilde{u}_{k}\to \tilde{u}$ a.e. in $\mathbb{R}^3$. Moreover we have $\tilde{u}_{k}\to \tilde{u}$ strongly in $L^{r}_{{\rm loc}}(\mathbb{R}^{3})$ for $2\leq r<3$. Thus we deduce from \eqref{eq:vanishing0} that
\begin{align*}
& \int_{B(0,R_{0}+2)}\left|\tilde{u}(x)\right|^{2}{\rm d}x = \lim_{k\to\infty}\int_{B(0,R_{0}+2)}\left|\tilde{u}_{k}(x)\right|^{2}{\rm d}x\\
= & \lim_{k\to\infty}\int_{B(z_{k},R_{0}+2)}\left|u_{k}(x)\right|^{2}{\rm d}x \geq  \lim_{k\to\infty}\int_{B(y_{k},R_{0})}\left|u_{k}(x)\right|^{2}{\rm d}x>0.
\end{align*}
This implies that $\tilde{u}\not \equiv 0$, and hence $0<\|\tilde{u}\|_{L^{2}}^{2}:=\alpha\leq1$. We suppose that $\alpha<1$. It follows from the Brezis-Lieb refinement of Fatou's Lemma (see, e.g, \cite{Br-Li-83,LiLo}) that 
$$
\lim_{k\to\infty}\|\tilde{u}_{k}-\tilde{u}\|_{L^{2}}^{2}=\lim_{k\to\infty}\|\tilde{u}_{k}\|_{L^{2}}^{2}-\|\tilde{u}\|_{L^{2}}^{2}=1-\alpha.
$$
Since $\tilde{u}_{k}\rightharpoonup \tilde{u}$ weakly in $H^{1/2}(\mathbb{R}^{3})$
we have $\langle \tilde{u}_{k}-\tilde{u},\tilde{u}\rangle_{H^{1/2}}\to0$. Furthermore, by the periodicity of $V$ we have $\mathcal{E}_{a}(\tilde{u}_{k})=\mathcal{E}_{a}(u_{k})$. Thus we deduce from Lemma \ref{lem:continuous of ennegy} and \ref{lem:dichotomy1} that
$$
I(a)=\lim_{k\to\infty}\mathcal{E}_{a}(\tilde{u}_{k})\geq\liminf_{k\to\infty}\mathcal{E}_{a}(\tilde{u}_{k}-\tilde{u})+\mathcal{E}_{a}(\tilde{u})\geq I_{1-\alpha}(a)+I_{\alpha}(a),
$$
which contradicts \eqref{eq:dichotomy}. Hence $\|\tilde{u}\|_{L^{2}}^{2}=1$. By the same argument of proof of existence part in Section \ref{subsec:Trapping-potential}, we can show that $\tilde{u}$ is a minimizer of $I(a)$.

\section{\label{sec:behavior} Behavior of Minimizers}

We have proved in Theorem \ref{thm:existence of minimizers} that $I(a)$ has minimizers when $0\leq a<a^*$. In this section, we prove the blow-up profiles of minimizers of $I(a)$, which are presented in Theorem \ref{thm:behavior-trapping}, \ref{thm:behavior-periodic} and \ref{thm:behavior-ring-shaped}.

Let $a_{k}\nearrow a^{*}$ as $k\to\infty$ and let $u_{k}:=u_{a_{k}}$ be a non-negative minimizer for $I(a_{k})$. Such $u_{k}$ solves the Euler-Lagrange equation 
$$
\sqrt{-\Delta+m^{2}}u_{k}(x)+V(x)u_{k}(x)=\mu_{k}u_{k}(x)+a_{k}\left(\left|\cdot\right|^{-1}\star\left|u_{k}\right|^{2}\right)(x)u_{k}(x),
$$
where $\mu_{k}\in\mathbb{R}$ be a suitable Lagrange multiplier. In fact,
\begin{equation}\label{eq:Lagrange mutiplier}
\mu_{k}=I(a_{k})-\frac{a_{k}}{2}\iint_{\mathbb{R}^{3}\times\mathbb{R}^{3}}\frac{\left|u_{k}(x)\right|^{2}\left|u_{k}(y)\right|^{2}}{\left|x-y\right|}{\rm d}x{\rm d}y.
\end{equation}

In the following lemma, we will show that the compactness follows when the Euler-Lagrange multiplier stays away from 0. This is a classical idea, which goes back to Lions \cite{Lions-87}.

\begin{lem}
	\label{lem:positive ground state solution} For any sequence $\{z_{k}\}\subset\mathbb{R}^{3}$, and $\epsilon_{k}>0$ such that $\epsilon_{k}\to0$ as $k\to\infty$, let $w_{k}(x):=\epsilon_{k}^{3/2}u_{k}(\epsilon_{k}x+z_{k})$ be $L^{2}$--normalized of $u_{k}$. Assume that $w_{k}$ is bounded in $H^{1/2}(\mathbb{R}^{3})$ and $\epsilon_{k}\mu_{k}\to-\lambda<0$ as $k\to\infty$. Then there exists
	a non-negative $w\in H^{1/2}(\mathbb{R}^{3})$ such that $w_{k}\to w$
	strongly in $H^{1/2}(\mathbb{R}^{3})$. Moreover if $w>0$ then, up to translations, we
	have 
	$$
	w(x)=\lambda^{3/2}Q(\lambda x)
	$$
	where $Q\in \mathcal{G}$ in \eqref{cG}. 
\end{lem}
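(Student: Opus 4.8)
The plan is to pass from the Euler--Lagrange equation for $u_k$ to a rescaled equation for $w_k$, extract a weak limit, and then upgrade weak to strong convergence by showing that $w_k$ is an \emph{optimizing sequence} for the interpolation inequality \eqref{eq:boson star inequality} whose scale is pinned by the hypothesis $\epsilon_k\mu_k\to-\lambda<0$. First I would rescale: writing $V_k(x):=V(\epsilon_k x+z_k)\ge0$, using the Fourier definition of the kinetic operator and the scalings $\|(-\Delta+m^2)^{1/4}u_k\|_{L^2}^2=\epsilon_k^{-1}\|(-\Delta+\epsilon_k^2m^2)^{1/4}w_k\|_{L^2}^2$ and $D(u_k)=\epsilon_k^{-1}D(w_k)$ with $D(u):=\iint\frac{|u(x)|^2|u(y)|^2}{|x-y|}{\rm d}x{\rm d}y$, the equation becomes
\[
\sqrt{-\Delta+\epsilon_k^2m^2}\,w_k+\epsilon_k V_k w_k=\epsilon_k\mu_k w_k+a_k\bigl(|\cdot|^{-1}\star|w_k|^2\bigr)w_k .
\]
Rescaling the energy identity $\mathcal E_{a_k}(u_k)=I(a_k)$ together with \eqref{eq:Lagrange mutiplier} then gives the two relations
\begin{align*}
\epsilon_k I(a_k)&=A_k+R_k+P_k-B_k,\\
\epsilon_k\mu_k&=\epsilon_k I(a_k)-B_k,
\end{align*}
where $A_k:=\|(-\Delta)^{1/4}w_k\|_{L^2}^2$, $B_k:=\tfrac{a_k}{2}D(w_k)$, $P_k:=\epsilon_k\int V_k|w_k|^2\ge0$, and $R_k:=\|(-\Delta+\epsilon_k^2m^2)^{1/4}w_k\|_{L^2}^2-A_k$ satisfies $0\le R_k\le\epsilon_k m$ by the elementary bound $\sqrt{|\xi|^2+\epsilon_k^2m^2}-|\xi|\le\epsilon_k m$ (cf. \eqref{ineq:operator}).

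Since $0\le I(a_k)\le I(0)$ and $\epsilon_k\to0$, we have $\epsilon_k I(a_k)\to0$ and $R_k\to0$. By \eqref{eq:boson star inequality} with $\|w_k\|_{L^2}=1$ one has $B_k\le\tfrac{a_k}{a^*}A_k\le A_k$, so $A_k-B_k\ge0$; the first relation gives $0\le A_k-B_k\le\epsilon_k I(a_k)\to0$, whence both $A_k-B_k\to0$ and $P_k\to0$. The second relation forces $B_k\to\lambda$, so $A_k\to\lambda$ and $D(w_k)\to2\lambda/a^*>0$. Consequently $A_k-\tfrac{a^*}{2}D(w_k)=(A_k-B_k)+\tfrac{a_k-a^*}{2}D(w_k)\to0$, i.e. $w_k$ is an optimizing sequence for \eqref{eq:boson star inequality}. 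Next I rule out loss of mass. Vanishing is impossible: if $\sup_y\int_{B(y,R)}|w_k|^2\to0$ for every $R$, then $\|w_k\|_{L^q}\to0$ for $2\le q<3$ (as in the periodic existence proof, via \cite[Lemma 9]{LeLe-11}), hence $D(w_k)\to0$ by Hardy--Littlewood--Sobolev, contradicting $D(w_k)\to2\lambda/a^*>0$. Thus, after a translation (absorbed into $z_k$), I may assume $w_k\rightharpoonup w$ in $H^{1/2}$ with $w\ge0$, $w\not\equiv0$, and set $\alpha:=\|w\|_{L^2}^2\in(0,1]$, $v_k:=w_k-w\rightharpoonup0$.

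By the Brezis--Lieb splitting of Lemma \ref{lem:dichotomy1} (and Fatou and weak convergence) one has $D(w_k)=D(v_k)+D(w)+o(1)$, $A_k=\|(-\Delta)^{1/4}v_k\|_{L^2}^2+\|(-\Delta)^{1/4}w\|_{L^2}^2+o(1)$ and $\|v_k\|_{L^2}^2\to1-\alpha$. Feeding this into $A_k-\tfrac{a^*}{2}D(w_k)\to0$ yields
\[
\bigl(\|(-\Delta)^{1/4}w\|_{L^2}^2-\tfrac{a^*}{2}D(w)\bigr)+\bigl(\|(-\Delta)^{1/4}v_k\|_{L^2}^2-\tfrac{a^*}{2}D(v_k)\bigr)\longrightarrow0,
\]
where each summand is nonnegative by \eqref{eq:boson star inequality} (using $\alpha\le1$ and $\|v_k\|_{L^2}^2\le1$). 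Hence both vanish; the first, being independent of $k$, gives $\|(-\Delta)^{1/4}w\|_{L^2}^2=\tfrac{a^*}{2}D(w)$, which combined with \eqref{eq:boson star inequality} (whose mass factor is now $\alpha$) forces $\alpha=1$ since $w\not\equiv0$. Therefore $w_k\to w$ in $L^2$, and the second defect then yields $\|(-\Delta)^{1/4}v_k\|_{L^2}\to0$, i.e. $w_k\to w$ strongly in $H^{1/2}$. Passing to the limit, $A_k\to\|(-\Delta)^{1/4}w\|_{L^2}^2=\lambda$ and $D(w_k)\to D(w)=2\lambda/a^*$, so $w$ is a nonnegative optimizer of \eqref{eq:boson star inequality} with $\|w\|_{L^2}=1$. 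Comparing $w$ with its symmetric decreasing rearrangement and invoking the equality cases of the Riesz rearrangement inequality and of the rearrangement inequality for the fractional kinetic energy $\|(-\Delta)^{1/4}\cdot\|_{L^2}$, I conclude that, up to translation, $w$ is radially symmetric decreasing. Finally, setting $Q(x):=\lambda^{-3/2}w(\lambda^{-1}x)$ and using $\|(-\Delta)^{1/4}w\|_{L^2}^2=\lambda\|(-\Delta)^{1/4}Q\|_{L^2}^2$, $D(w)=\lambda D(Q)$ and $\|w\|_{L^2}=\|Q\|_{L^2}$, one verifies the normalization \eqref{eq:boson star optimizer}; since a normalized radial decreasing optimizer solves \eqref{eq:massless boson star}, this gives $Q\in\mathcal G$ (see \eqref{cG}), $w>0$, and $w(x)=\lambda^{3/2}Q(\lambda x)$.

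The crux is the strong convergence, i.e. excluding escape of mass. The strict negativity $\epsilon_k\mu_k\to-\lambda<0$ is precisely what prevents vanishing (it keeps $D(w_k)$ bounded away from $0$) and what pins the concentration length, so that $A_k$ and $D(w_k)$ have positive finite limits; the surviving dichotomy is then eliminated by the Brezis--Lieb argument above, which is exactly where the sharpness of \eqref{eq:boson star inequality} enters through the nonnegativity of the two defects. A secondary technical difficulty, the control of the external potential, is handled cleanly by the relation $P_k\to0$ extracted from the energy identity, rather than by testing the rescaled equation against compactly supported functions (which would require pointwise control of $V$ near the moving centers $z_k$).
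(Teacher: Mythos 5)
Your proposal is correct in substance, but it takes a genuinely different route from the paper's proof. The paper works at the level of the PDE: it passes the rescaled Euler--Lagrange equation \eqref{eq:E-L equation for w_a-1} to the weak limit, so that $w$ solves $\sqrt{-\Delta}w=-\lambda w+a^{*}(|\cdot|^{-1}\star|w|^{2})w$; positivity of a nontrivial limit comes from positivity of the kernel of $((-\Delta)^{1/2}+1)^{-1}$, radial symmetry from the classification of positive solutions in \cite{FrLe-09}, the normalization \eqref{eq:boson star optimizer} from a Pohozaev-type identity \eqref{ineq:ground state contrary 1} combined with an energy comparison against trial states built from elements of $\mathcal{G}$ (plus Fatou), and strong convergence from norm preservation, Hardy--Littlewood--Sobolev, and one further minimality comparison. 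You never pass the equation to the limit: you use only two integrated identities (the rescaled energy identity and the multiplier relation \eqref{eq:Lagrange mutiplier}) to pin down $\|(-\Delta)^{1/4}w_{k}\|_{L^{2}}^{2}\to\lambda$ and $\tfrac{a_{k}}{2}D(w_{k})\to\lambda$, where $D(u):=\iint|u(x)|^{2}|u(y)|^{2}|x-y|^{-1}{\rm d}x\,{\rm d}y$, so that $w_{k}$ is an optimizing sequence for \eqref{eq:boson star inequality}; you then get strong convergence by a concentration-compactness argument based on the paper's own Lemma \ref{lem:dichotomy1}, with two nonnegative defects forcing full mass in the limit; finally you identify $w$ as a normalized optimizer, obtaining symmetry from the strict Riesz rearrangement inequality and the equation \eqref{eq:massless boson star} from the optimizer's Euler--Lagrange equation. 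Your derivation of the limits of the kinetic and Coulomb terms is arguably cleaner than the paper's (which reaches $\tfrac{a^{*}}{2}D(Q)\le1$ only through the trial-state comparison), and it replaces the external symmetry result of \cite{FrLe-09} by classical rearrangement theory; the paper's route, in exchange, gets the equation, the dichotomy $Q\equiv0$ or $Q>0$, and positivity essentially for free from the PDE.

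Three points should be tightened. First, your translation step means you literally prove the statement for $w_{k}(\cdot+y_{k})$ rather than for $w_{k}$; this is harmless---it in fact repairs an imprecision of the lemma, since when the weak limit of the original sequence vanishes no strong convergence can hold (as $\|w_{k}\|_{L^{2}}=1$), and in every application the paper separately ensures $w\not\equiv0$ for the given centers, in which case your Brezis--Lieb argument applies verbatim with no translation. Second, extracting $\|(-\Delta)^{1/4}v_{k}\|_{L^{2}}\to0$ from the second defect requires $D(v_{k})\to0$, which needs one more line: Hardy--Littlewood--Sobolev plus interpolation between $\|v_{k}\|_{L^{2}}\to0$ and the $H^{1/2}$ bound. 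Third, membership in $\mathcal{G}$ in \eqref{cG} requires $Q>0$, not merely $Q\geq0$, $Q\not\equiv0$ and radially decreasing (a nonnegative optimizer is not a priori strictly positive); as in the paper, this should be deduced from the equation \eqref{eq:massless boson star} via the positivity of the resolvent kernel, a step you assert but do not justify.
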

\begin{proof}
	We see that $w_{k}$ is a non-negative solution of 
	\begin{equation}
	\sqrt{-\Delta+m^{2}\epsilon_{k}^{2}}w_{k}(x)+\epsilon_{k}V(\epsilon_{k}x+z_{k})w_{k}(x)=\epsilon_{k}\mu_{k}w_{k}(x)+a_{k}\left(|\cdot\right|^{-1}\star\left|w_{k}\right|^{2})(x)w_{k}(x).\label{eq:E-L equation for w_a-1}
	\end{equation}
	
	Since $\{w_{k}\}$ is bounded in $H^{1/2}(\mathbb{R}^{3})$, extracting a subsequence if necessary, we assume that $w_{k}\rightharpoonup w$ weakly in $H^{1/2}(\mathbb{R}^{3})$ and $w_{k}\to w$ a.e. in $\mathbb{R}^3$. Passing \eqref{eq:E-L equation for w_a-1} to weak limit, we have $w$ be non-negative solution to
	$$
	\sqrt{-\Delta}w(x)=-\lambda w(x)+a^{*}\left(\left|\cdot\right|^{-1}\star\left|w\right|^{2}\right)(x)w(x).
	$$
	By a simple scaling we see that 
	$$
	Q(x)=\lambda^{-3/2}w(\lambda^{-1}x)
	$$
	is a non-negative solution of \eqref{eq:massless boson star}. It is well-known that the kernel of the resolvent $((-\Delta)^{1/2}+1)^{-1}$ is positive everywhere in $\mathbb{R}^{3}$ (see \cite[eq. (A.11)]{Le-07}). Hence, from \eqref{eq:massless boson star} we obtain that either $Q\equiv0$ or $Q>0$ in $\mathbb{R}^{3}$. 
	
	In the latter case, we prove that $Q\in \mathcal{G}$. First, since $Q$ is a positive solution of \eqref{eq:massless boson star}, it follows from \cite[Theorem 1.1]{FrLe-09} that $Q$ is positive radially symmetric decreasing up to translation. It remains to prove that $Q$ satisfies \eqref{eq:boson star optimizer}. Indeed, by the same reason that $Q$ is a solution of \eqref{eq:massless boson star}, we have 
	\begin{equation}
	\|(-\Delta)^{1/4}Q\|_{L^{2}}^{2}+\|Q\|_{L^{2}}^{2}=a^{*}\iint_{\mathbb{R}^{3}\times\mathbb{R}^{3}}\frac{\left|Q(x)\right|^{2}\left|Q(y)\right|^{2}}{\left|x-y\right|}{\rm d}x{\rm d}y.\label{ineq:ground state contrary 1}
	\end{equation}
	Thus, by \eqref{eq:boson star inequality} and the Cauchy-Schwarz inequality
	we have 
	\begin{equation}
	1\leq\frac{2\|(-\Delta)^{1/4}Q\|_{L^{2}}^{2}\|Q\|_{L^{2}}^{2}}{a^{*}\iint_{\mathbb{R}^{3}\times\mathbb{R}^{3}}\frac{\left|Q(x)\right|^{2}\left|Q(y)\right|^{2}}{\left|x-y\right|}{\rm d}x{\rm d}y}\leq\frac{a^{*}}{2}\iint_{\mathbb{R}^{3}\times\mathbb{R}^{3}}\frac{\left|Q(x)\right|^{2}\left|Q(y)\right|^{2}}{\left|x-y\right|}{\rm d}x{\rm d}y.\label{ineq:ground state contrary 2}
	\end{equation}
	
	Let $S$ be in $\mathcal{G}$, then  we have
	$$
	1=\|S\|_{L^{2}}^{2}=\|(-\Delta)^{1/4}S\|_{L^{2}}^{2}=\frac{a^{*}}{2}\iint_{\mathbb{R}^{3}\times\mathbb{R}^{3}}\frac{\left|S(x)\right|^{2}\left|S(y)\right|^{2}}{\left|x-y\right|}{\rm d}x{\rm d}y.
	$$
	We define $s(x)=\left(\frac{\lambda}{\epsilon_{k}}\right)^{3/2}S\left(\frac{\lambda}{\epsilon_{k}}x\right)$. Then we have $\|s\|_{L^{2}}^{2}=\|S\|_{L^{2}}^{2}=1$ and
	\begin{align*}
	\epsilon_{k}\mathcal{E}_{a_{k}}(s) & =\lambda\left(\|\left(-\Delta+m^{2}\epsilon_{k}^{2}\lambda^{-2}\right)^{1/4}S\|_{L^{2}}^{2}-\frac{a_{k}}{2}\iint_{\mathbb{R}^{3}\times\mathbb{R}^{3}}\frac{\left|S(x)\right|^{2}\left|S(y)\right|^{2}}{\left|x-y\right|}{\rm d}x{\rm d}y\right)\\
	& \quad+\epsilon_{k}\int_{\mathbb{R}^{3}}V\left(\frac{\epsilon_{k}}{\lambda}x\right)\left|S(x)\right|^{2}{\rm d}x\\
	& \leq \lambda\left(1-\frac{a_{k}}{a^{*}}\right)+\frac{m^{2}\epsilon_{k}^{2}}{2\lambda}\|(-\Delta)^{-1/4}S\|_{L^{2}}^{2}+\epsilon_{k}\int_{\mathbb{R}^{3}}V\left(\frac{\epsilon_{k}}{\lambda}x\right)\left|S(x)\right|^{2}{\rm d}x,
	\end{align*}
	where we have used \eqref{eq:boson star inequality} and \eqref{ineq:operator} for the term in the brakets. 
	
	On the other hand, since $w_{k}$ satisfies \eqref{eq:E-L equation for w_a-1} we have
	\begin{align*}
	\epsilon_{k}\mathcal{E}_{a_{k}}(u_{k}) & =\|(-\Delta+m^{2}\epsilon_{k}^{2})^{1/4}w_{k}\|_{L^{2}}^{2}+\epsilon_{k}\int_{\mathbb{R}^{3}}V\left(\epsilon_{k}x+z\right)w_{k}(x)^{2}{\rm d}x\\
	& \quad-\frac{a_{k}}{2}\iint_{\mathbb{R}^{3}\times\mathbb{R}^{3}}\frac{\left|w_{k}(x)\right|^{2}\left|w_{k}(y)\right|^{2}}{\left|x-y\right|}{\rm d}x{\rm d}y\\
	& =\epsilon_{k}\mu_{k}\int_{\mathbb{R}^{3}}\left|w_{k}(x)\right|^{2}{\rm d}x+\frac{a_{k}}{2}\iint_{\mathbb{R}^{3}\times\mathbb{R}^{3}}\frac{\left|w_{k}(x)\right|^{2}\left|w_{k}(y)\right|^{2}}{\left|x-y\right|}{\rm d}x{\rm d}y,
	\end{align*}
	By assumption that $u_{k}$ is a minimizer of $I(a_{k})$, we have $\mathcal{E}_{a_{k}}(u_{k})\leq \mathcal{E}_{a_{k}}(s)$ and hence
	$$
	\liminf_{k\to\infty}\epsilon_{k}\mathcal{E}_{a_{k}}(u_{k})\leq \liminf_{k\to\infty}\epsilon_{k}\mathcal{E}_{a_{k}}(s).
	$$
	
	From the above estimates and by Fatou's Lemma we deduce that
	\begin{align*}
	& \frac{a^{*}}{2}\iint_{\mathbb{R}^{3}\times\mathbb{R}^{3}}\frac{\left|Q(x)\right|^{2}\left|Q(y)\right|^{2}}{\left|x-y\right|}{\rm d}x{\rm d}y=\frac{a^{*}}{2\lambda}\iint_{\mathbb{R}^{3}\times\mathbb{R}^{3}}\frac{\left|w(x)\right|^{2}\left|w(y)\right|^{2}}{\left|x-y\right|}{\rm d}x{\rm d}y\\
	& \leq\liminf_{k\to\infty}\frac{a_{k}}{2\lambda}\iint_{\mathbb{R}^{3}\times\mathbb{R}^{3}}\frac{\left|w_{k}(x)\right|^{2}\left|w_{k}(y)\right|^{2}}{\left|x-y\right|}{\rm d}x{\rm d}y\leq1.
	\end{align*}
	This inequality together with \eqref{ineq:ground state contrary 1} and \eqref{ineq:ground state contrary 2},
	imply that 
	$$
	1=\frac{a^{*}}{2}\iint_{\mathbb{R}^{3}\times\mathbb{R}^{3}}\frac{\left|Q(x)\right|^{2}\left|Q(y)\right|^{2}}{\left|x-y\right|}{\rm d}x{\rm d}y=\|(-\Delta)^{1/4}Q\|_{L^{2}}^{2}=\|Q\|_{L^{2}}^{2}.
	$$
	
	Thus we have proved that $Q\in \mathcal{G}$. 
	
	We note that $\|w\|_{L^{2}}=\|Q\|_{L^{2}}=1$. From the norm preservation, we conclude that $w_{k}\to w$ strongly
	in $L^{2}(\mathbb{R}^{3})$. In fact, $w_{k}\to w$ strongly in $L^{r}(\mathbb{R}^{3})$ for $2\leq r<3$ because of $H^{1/2}(\mathbb{R}^{3})$
	boundedness. By the Hardy-Littlewood-Sobolev inequality  we have 
	$$
	\lim_{k\to\infty}\iint_{\mathbb{R}^{3}\times\mathbb{R}^{3}}\frac{\left|w_{k}(x)\right|^{2}\left|w_{k}(y)\right|^{2}}{\left|x-y\right|}{\rm d}x{\rm d}y=\iint_{\mathbb{R}^{3}\times\mathbb{R}^{3}}\frac{\left|w(x)\right|^{2}\left|w(y)\right|^{2}}{\left|x-y\right|}{\rm d}x{\rm d}y.
	$$
	We deduce from this convergence and the inequality
	$$
	\epsilon_{k}\mathcal{E}_{a_{k}}(\epsilon_{k}^{-3/2}w_{k}(\epsilon_{k}^{-1}x-\epsilon_{k}^{-1}z_{k})) = \epsilon_{k}\mathcal{E}_{a_{k}}(u_{k}) \leq \epsilon_{k}\mathcal{E}_{a_{k}}(\epsilon_{k}^{-3/2}w(\epsilon_{k}^{-1}x))
	$$
	that
	\begin{align*}
	&  \limsup_{k\to\infty}\|(-\Delta)^{1/4}w_{k}\|_{L^{2}}^{2} \leq \limsup_{k\to\infty}\|(-\Delta+m^{2}\epsilon_{k}^{2})^{1/4}w_{k}\|_{L^{2}}^{2}\\
	& \leq \limsup_{k\to\infty}\|(-\Delta+m^{2}\epsilon_{k}^{2})^{1/4}w\|_{L^{2}}^{2} = \|(-\Delta)^{1/4}w\|_{L^{2}}^{2}.
	\end{align*}
	On the other hand, since $w_{k}\rightharpoonup w$ weakly in $H^{1/2}(\mathbb{R}^{3})$, by Fatou's Lemma we have
	\[
	\liminf_{k\to\infty}\|(-\Delta)^{1/4}w_{k}\|_{L^{2}}^{2}\geq\|(-\Delta)^{1/4}w\|_{L^{2}}^{2}.
	\]
	
	Therefore we have proved that
	$$
	\lim_{k\to\infty}\|(-\Delta)^{1/4}w_{k}\|_{L^{2}}^{2}=\|(-\Delta)^{1/4}w\|_{L^{2}}^{2},
	$$
	which implies that $w_{k}\to w$ strongly in $H^{1/2}(\mathbb{R}^{3})$.
\end{proof}

\subsection{Proof of Theorem \ref{thm:behavior-trapping}}
Let $p=\max_{1\leq i\leq n} p_i$ and we define $q=\min\{p,1\}$. 

\begin{lem}
	\label{lem:estimate of I(a_{k}) trapping} There exist positive constants
	$M_{1}<M_{2}$ independent of $a_{k}$ such that
	\begin{equation}
	M_{1}(a^{*}-a_{k})^{\frac{q}{q+1}}\leq I(a_{k})\leq M_{2}(a^{*}-a_{k})^{\frac{q}{q+1}}\label{ineq:estimate I(a_{k})}.
	\end{equation}
\end{lem}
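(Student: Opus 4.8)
The plan is to establish the two bounds in \eqref{ineq:estimate I(a_{k})} by separate arguments: the upper bound by a concentrated trial state, and the lower bound by combining the sharp inequality \eqref{eq:boson star inequality} with the operator inequality \eqref{ineq:operator} and an uncertainty-type estimate. Throughout I write $\delta_k := 1 - a_k/a^* = (a^*-a_k)/a^*$, so that $\delta_k \to 0$, and recall $q = \min\{p,1\}$.

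\emph{Upper bound.} I would fix a flattest global minimum $x_0 \in \mathcal{Z}$, so that $V(x) = \kappa|x-x_0|^p(1+o(1))$ as $x \to x_0$. For $Q \in \mathcal{G}$ and a large parameter $\tau > 0$, take the trial state
$$u_\tau(x) = A_\tau\,\tau^{3/2}\,\chi(x-x_0)\,Q(\tau(x-x_0)),$$
exactly as in \eqref{eq:trial state trapping-periodic} with $R=1$, where $A_\tau \to 1$ normalizes $\|u_\tau\|_{L^2}=1$. Reusing the computations from Section \ref{subsec:Trapping-potential} (the operator inequality \eqref{ineq:operator}, the IMS localization, and the decay \eqref{eq:decay}) for the kinetic and interaction terms, together with the change of variables $y = \tau(x-x_0)$ and the local behaviour of $V$ for the potential term, I would obtain
$$\mathcal{E}_{a_k}(u_\tau) \le \tau\,\delta_k + \frac{m^2}{2\tau}\|(-\Delta)^{-1/4}Q\|_{L^2}^2 + \kappa\,\tau^{-p}\int_{\mathbb{R}^3}|y|^p|Q(y)|^2\,\mathrm{d}y + o(1),$$
the error being of lower order once $\tau$ is chosen. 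Taking $\tau_k = (a^*-a_k)^{-1/(p+1)}$ when $p \le 1$ and $\tau_k = (a^*-a_k)^{-1/2}$ when $p>1$ balances the two dominant terms and gives $I(a_k) \le \mathcal{E}_{a_k}(u_{\tau_k}) \le M_2(a^*-a_k)^{q/(q+1)}$.

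\emph{Lower bound.} For any $u$ with $\|u\|_{L^2}=1$, inequality \eqref{eq:boson star inequality} together with the left half of \eqref{ineq:operator} gives
$$\mathcal{E}_{a_k}(u) \ge \delta_k\|(-\Delta)^{1/4}u\|_{L^2}^2 + \frac{m^2}{2}\|(-\Delta+m^2)^{-1/4}u\|_{L^2}^2 + \int_{\mathbb{R}^3}V|u|^2.$$
When $p>1$ I would discard the nonnegative potential term and pass to the Fourier side, bounding the right-hand side below by $\int_{\mathbb{R}^3}g_{\delta_k}(\xi)|\widehat{u}(\xi)|^2\,\mathrm{d}\xi \ge \inf_{\xi}g_{\delta_k}(\xi)$, where $g_\delta(\xi) = \delta|\xi| + \tfrac{m^2}{2}(|\xi|^2+m^2)^{-1/2}$; a one-variable computation shows $\inf_\xi g_\delta(\xi) \ge M_1\delta^{1/2}$ for small $\delta$, which is the claim with $q=1$. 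When $p \le 1$ I would instead discard the mass term and use that \eqref{bluo-trap} yields a constant $c_0>0$ with $V(x) \ge c_0\min_i(|x-x_i|^p \wedge 1)$; combining the trade-off between $T := \|(-\Delta)^{1/4}u\|_{L^2}^2$ and $\int V|u|^2$ via a fractional uncertainty (Heisenberg/Gagliardo--Nirenberg) inequality of the form $\big(\int|x-x_0|^p|u|^2\big)\,T^{p} \ge c_p\|u\|_{L^2}^{2(p+1)}$ and optimizing over $T$ produces $I(a_k) \ge M_1(a^*-a_k)^{p/(p+1)}$.

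The main obstacle is the lower bound in the regime $p \le 1$. The clean single-point fractional uncertainty inequality must be upgraded to the weight $\min_i|x-x_i|^p$ attached to the finitely many minima, and the truncation at distance $1$ (forced because \eqref{bluo-trap} controls $V$ only near its zeros, its growth at infinity being unspecified beyond $V \to \infty$) must be shown harmless. I would handle this by a partition-of-unity/IMS localization reducing, up to controlled commutator errors, to a single minimum; alternatively, one may first invoke the upper bound to deduce $\int V|u_k|^2 = O((a^*-a_k)^{q/(q+1)}) \to 0$ for the minimizer $u_k$, which forces its mass to concentrate near the zero set of $V$, where the untruncated local bound $V \ge c|x-x_i|^p$ applies, and then apply the single-point inequality directly. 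Either route keeps the constants $M_1, M_2$ uniform in $k$, completing \eqref{ineq:estimate I(a_{k})}.
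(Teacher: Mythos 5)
Your proposal reproduces the paper's upper bound essentially verbatim (same trial state \eqref{eq:trial state trapping-periodic}, same choices $\tau=(a^*-a_k)^{-1/(p+1)}$ resp.\ $(a^*-a_k)^{-1/2}$), and your lower bound for $p\le 1$ is morally the paper's as well: the paper realizes exactly your kinetic--potential trade-off by a level-set decomposition of $V$ at height $\gamma$, H\"older, the $L^3$ Sobolev inequality and a pointwise Young inequality, then optimizes $\gamma\sim(a^*-a_k)^{p/(p+1)}$ --- a packaging that handles the $n$ minima and the truncation at unit distance automatically, i.e.\ precisely the two issues you isolate as your ``main obstacle''. Where you genuinely depart from the paper is the lower bound for $p>1$: the paper first proves the upper bound, feeds it back through \eqref{eq:boson star inequality} to get $\|(-\Delta+m^2)^{1/4}u_k\|_{L^2}^2\le M_2a^*(a^*-a_k)^{-1/2}$, and then uses the duality $\|u\|_{L^2}^4\le\|(-\Delta+m^2)^{1/4}u\|_{L^2}^2\,\|(-\Delta+m^2)^{-1/4}u\|_{L^2}^2$; your pointwise minimization of the Fourier symbol $g_\delta(\xi)=\delta|\xi|+\tfrac{m^2}{2}(|\xi|^2+m^2)^{-1/2}$ is correct (indeed $g_\delta\ge \delta(|\xi|+m)+\tfrac{m^2}{2(|\xi|+m)}-\delta m\ge m\sqrt{2\delta}-m\delta$), is more direct, holds for every normalized $u$ rather than only for minimizers, and does not need the upper bound as input. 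That is a real simplification.

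Two gaps remain. First, in your upper bound for $p>1$ the term $\int_{\mathbb{R}^3}|y|^p|Q(y)|^2\,{\rm d}y$ is infinite once $p\ge 5$: the decay \eqref{eq:decay} guarantees finiteness only for $p<5$, and $Q$ in fact decays exactly like $|x|^{-4}$ (see \cite{FrLe-09}), so your displayed estimate becomes vacuous there. The paper's fix is to note that for $p>1$ the potential term is subleading anyway, so one may replace the exponent by $s=\min\{p,2\}$ using $V(x)\le C|x-x_0|^p\le C|x-x_0|^s$ for $|x-x_0|\le R$ small; then $\int|y|^s|Q|^2<\infty$ and the term is $O(\tau^{-s})=o(\tau^{-1})$. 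Second, your ``alternative'' route for the multi-minimum reduction fails as stated. The bound $\int V|u_k|^2\to 0$ only forces concentration on the \emph{union} $\{x_1,\dots,x_n\}$; if the mass splits between two zeros $x_i\ne x_j$ --- which you cannot exclude at this stage, since ruling it out is exactly what the later blow-up analysis (which \emph{uses} this lemma) accomplishes --- then $\int V|u_k|^2$ does not dominate $c\int|x-x_{i_0}|^p|u_k|^2$ for any single $i_0$: mass near $x_j$ contributes about $|x_j-x_{i_0}|^p m_j$ to the right-hand side and essentially nothing to the left-hand side, so the single-point inequality cannot be applied ``directly''. You must either carry out your IMS route (which does work: the localization error at a fixed scale is harmless here precisely because it enters multiplied by $\delta_k$, hence contributes $O(a^*-a_k)\ll(a^*-a_k)^{q/(q+1)}$), or prove a per-ball localized uncertainty inequality and sum over $i$ --- which, once unwound, is exactly the paper's level-set computation.
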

\begin{proof}
	Since $I(a_k)$ is decreasing and uniformly bounded for $0\leq a_k\leq a^*$, it suffices to consider the case when $a_k$ is close to $a^*$. We start with proof of the upper bound in \eqref{ineq:estimate I(a_{k})}.
	We proceed similarly to the proof of Theorem \ref{thm:existence of minimizers},
	and use a trial state of the form \eqref{eq:trial state trapping-periodic}.
	Picking $x_{0}\in V^{-1}(0)$, we can choose $R$ small enough such
	that $V(x)\leq C\left|x-x_{0}\right|^{p}$ for $\left|x-x_{0}\right|\leq R$,
	in which case we have 
	$$
	\int_{\mathbb{R}^{3}}V(x)\left|u_{R,\tau}(x)\right|^{2}\leq C\frac{1}{\tau^{p}}A_{R,\tau}^{2}\int_{\mathbb{R}^{3}}\left|x\right|^{p}\left|Q(x)\right|^{2}{\rm d}x.
	$$
	We obtain that, for $\tau$ large,
	$$
	I(a_{k})\leq\frac{\tau(a^{*}-a_{k})}{a^{*}}+\frac{m^{2}}{2\tau}\|(-\Delta)^{-1/4}Q\|_{L^{2}}^{2}+C\frac{1}{\tau^{p}}A_{R,\tau}^{2}\int_{\mathbb{R}^{3}}\left|x\right|^{p}\left|Q(x)\right|^{2}{\rm d}x+o(1)_{\tau\to\infty}.
	$$
	\begin{itemize}
		\item If $0<p\leq 1$, by taking $\tau=(a^{*}-a_{k})^{-\frac{1}{p+1}}$ we arrive at the desired upper bound.

		\item If $p>1$, we choose $s=\min\{p,2\}$. It follows from decay property of $Q$ in \eqref{eq:decay} that $\int_{\mathbb{R}^{3}}\left|x\right|^{s}\left|Q(x)\right|^{2}{\rm d}x<\infty$. Since $p>s>1$, we have $V(x)\leq C\left|x-x_{0}\right|^{p}\leq C\left|x-x_{0}\right|^{s}$ for $\left|x-x_{0}\right|\leq R$ with $R$ small. Hence 
	$$
	\int_{\mathbb{R}^{3}}V(x)\left|u_{R,\tau}(x)\right|^{2}\leq C\frac{1}{\tau^{s}}A_{R,\tau}^{2}\int_{\mathbb{R}^{3}}\left|x\right|^{s}\left|Q(x)\right|^{2}{\rm d}x.
	$$
The desired upper bound is obtained by taking $\tau=(a^{*}-a_{k})^{-\frac{1}{2}}$.
		\end{itemize}
	Next we prove the lower bound in \eqref{ineq:estimate I(a_{k})}. We also consider two cases.
	
	\begin{itemize}
		\item If $0<p\leq 1$, then $q=p$. From \eqref{eq:boson star inequality} we infer that, for any $\gamma>0$,
		\begin{align*}
		I(a_{k}) & \geq\int_{\mathbb{R}^{3}}V(x)\left|u_{k}(x)\right|^{2}{\rm d}x+\left(1-\frac{a_{k}}{a^{*}}\right)\|(-\Delta)^{1/4}u_{k}\|_{L^{2}}^{2}\\
		& \geq\gamma+\int_{\mathbb{R}^{3}}\left(V(x)-\gamma\right)\left|u_{k}(x)\right|^{2}{\rm d}x+\frac{a^{*}-a_{k}}{a^{*}}S_{3}\|u_{k}\|_{L^{3}}^{2}\\
		& \geq\gamma-\int_{\mathbb{R}^{3}}\left[\gamma-V(x)\right]_{+}\left|u_{k}(x)\right|^{2}{\rm d}x+\left(\int_{\mathbb{R}^{3}}C(a^{*}-a_{k})^{3/2}\left|u_{k}(x)\right|^{3}\right)^{2/3}
		\end{align*}
		where $\left[\cdot\right]_{+}=\max\left\{ 0,\cdot\right\} $ denotes
		the positive part and $S_{3}$ is constant in Sobolev's inequality (see e.g, \cite[Theorem 8.4]{LiLo}). By H\"older's inequality we have 
		$$
		\int_{\mathbb{R}^{3}}\left[\gamma-V(x)\right]_{+}\left|u_{k}(x)\right|^{2}{\rm d}x\leq\|u_{k}\|_{L^{2}}^{2/3}\left(\int_{\mathbb{R}^{3}}\left[\gamma-V(x)\right]_{+}^{3/2}\left|u_{k}(x)\right|^{2}{\rm d}x\right)^{2/3}.
		$$
		We deduce from the above estimates, together with two inequalities 
		$$
		A^{2/3}+B^{2/3}\geq\left(A+B\right)^{2/3},\,\,A,B\geq0
		$$
		and 
		$$
		C(a^{*}-a_{k})^{3/2}\left|u_{k}(x)\right|^{3}+\frac{4\left[\gamma-V(x)\right]_{+}^{9/2}}{27C^{2}(a^{*}-a_{k})^{3}}\geq\left[\gamma-V(x)\right]_{+}^{3/2}\left|u_{k}(x)\right|^{2},
		$$
		that
		$$
		I(a_{k})\geq\gamma-\left(\int_{\mathbb{R}^{3}}\frac{4\left[\gamma-V(x)\right]_{+}^{9/2}}{27C^{2}(a^{*}-a_{k})^{3}}{\rm d}x\right)^{2/3}.
		$$
		
		For small $\gamma$, the set $\left\{ x\in\mathbb{R}^{3}:V(x)\leq\gamma\right\} $
		is contained in the disjoint union of $n$ balls of radius at most
		$k\gamma^{1/p}$ (for some $k>0$), centered at the minima $x_{i}$.
		Moreover, $V(x)\geq\left(\frac{\left|x-x_{i}\right|}{k}\right)^{p}$
		on these balls. Hence
		$$
		\int_{\mathbb{R}^{3}}\left[\gamma-V(x)\right]_{+}^{9/2}{\rm d}x\leq n\int_{\mathbb{R}^{3}}\left[\gamma-\left(\frac{\left|x\right|}{k}\right)^{p}\right]_{+}^{9/2}{\rm d}x\leq C\gamma^{3+\frac{2}{p}}.
		$$
		Thus we arrive at
		$$
		\mathcal{E}_{a_{k}}(u_{k})\geq\gamma-C\frac{\gamma^{3+\frac{2}{p}}}{(a^{*}-a_{k})^{2}}.
		$$
		
		The lower bound in \eqref{ineq:estimate I(a_{k})} therefore follows from
		the above estimate by taking $\gamma$ to be equal to $C(a^{*}-a_{k})^{\frac{p}{p+1}}$
		for a suitable constant $C>0$. 
	\end{itemize}
	
	\begin{itemize}
		\item If $p>1$, then $q=1$. From \eqref{eq:boson star inequality} and the upper bound of $I(a_{k})$ in \eqref{ineq:estimate I(a_{k})} we see that
		$$
		M_{2} (a^{*}-a_{k})^{\frac{1}{2}} \geq \left(1-\frac{a_{k}}{a^{*}}\right)\|(-\Delta+m^{2})^{1/4}u_{k}\|_{L^{2}}^{2},
		$$
		which implies that
		$$
		\|(-\Delta+m^{2})^{1/4}u_{k}\|_{L^{2}}^{2}\leq M_{2} a^{*} (a^{*}-a_{k})^{-\frac{1}{2}}.
		$$
		Again it follows from \eqref{eq:boson star inequality}, \eqref{ineq:operator} and H\"older's inequality that
		\begin{align*}
		I(a_{k}) = \mathcal{E}_{a_{k}}(u_{k}) & \geq \frac{m^{2}}{2}\|(-\Delta+m^{2})^{-1/4}u_{k}\|_{L^{2}}^{2}  \geq \frac{m^{2}\|u_{k}\|_{L^{2}}^{4}}{2\|(-\Delta+m^{2})^{1/4}u_{k}\|_{L^{2}}^{2}}\\
		& \geq \frac{m^{2}\|u_{k}\|_{L^{2}}^{4}}{2M_{2} a^{*} (a^{*}-a_{k})^{-\frac{1}{2}}} = M_{1} (a^{*}-a_{k})^{\frac{1}{2}}.
		\end{align*}
	\end{itemize}
\end{proof}

\begin{lem}
	\label{lem:estimate of direct}There exist positive constants $K_{1}<K_{2}$ independent of $a_{k}$ such that 
	\begin{equation}
	K_{1}(a^{*}-a_{k})^{-\frac{1}{q+1}}\leq\iint_{\mathbb{R}^{3}\times\mathbb{R}^{3}}\frac{\left|u_{k}(x)\right|^{2}\left|u_{k}(y)\right|^{2}}{\left|x-y\right|}{\rm d}x{\rm d}y\leq K_{2}(a^{*}-a_{k})^{-\frac{1}{q+1}}.\label{ineq:estimate of direct term}
	\end{equation}
\end{lem}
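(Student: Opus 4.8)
The plan is to bootstrap the two-sided energy estimate in Lemma~\ref{lem:estimate of I(a_{k}) trapping} into a two-sided bound on the interaction term. The key observation is that the interaction term controls the \emph{entire} negative contribution to $\mathcal{E}_{a_k}$, so its size is pinned down once we know how $I(a_k)$ and the kinetic energy both scale in $(a^*-a_k)$.

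\emph{Upper bound.}
First I would extract the scaling of the kinetic energy. From $\mathcal{E}_{a_k}(u_k)=I(a_k)$, the inequality \eqref{ineq:boundeness of I(a)}, and the upper bound in \eqref{ineq:estimate I(a_{k})}, we get
\begin{equation}\label{eq:kin-upper-plan}
\left(1-\frac{a_k}{a^*}\right)\|(-\Delta)^{1/4}u_k\|_{L^2}^2 \le I(a_k) \le M_2(a^*-a_k)^{\frac{q}{q+1}},
\end{equation}
so that $\|(-\Delta)^{1/4}u_k\|_{L^2}^2 \le a^* M_2(a^*-a_k)^{-\frac{1}{q+1}}$. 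Combining this kinetic bound with the optimal interpolation inequality \eqref{eq:boson star inequality}, which reads $\frac12\iint\frac{|u_k(x)|^2|u_k(y)|^2}{|x-y|}\,{\rm d}x\,{\rm d}y \le \frac{1}{a^*}\|(-\Delta)^{1/4}u_k\|_{L^2}^2$ under the mass constraint, immediately yields the upper bound with $K_2=2M_2$.

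\emph{Lower bound.}
For the lower bound I would run the argument in reverse: since $V\ge 0$, we have
\begin{equation}\label{eq:lb-plan}
I(a_k)=\mathcal{E}_{a_k}(u_k) \ge \|(-\Delta)^{1/4}u_k\|_{L^2}^2 - \frac{a_k}{2}\iint\frac{|u_k(x)|^2|u_k(y)|^2}{|x-y|}\,{\rm d}x\,{\rm d}y \ge -\frac{a_k}{2}\iint\frac{|u_k(x)|^2|u_k(y)|^2}{|x-y|}\,{\rm d}x\,{\rm d}y,
\end{equation}
where the last step drops the nonnegative kinetic term. Since $I(a_k)\le M_2(a^*-a_k)^{\frac{q}{q+1}}$ is small and \emph{positive}, this direction gives the wrong sign and is therefore not enough by itself; the interaction must be bounded below by combining \eqref{eq:lb-plan} with the kinetic lower bound. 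The cleaner route is to use \eqref{eq:boson star inequality} again: rewriting $\mathcal{E}_{a_k}(u_k)=\left(1-\frac{a_k}{a^*}\right)\|(-\Delta)^{1/4}u_k\|_{L^2}^2 + \frac{a_k}{2}\big(\frac{2}{a^*}\|(-\Delta)^{1/4}u_k\|_{L^2}^2 - \iint\big) + \int V|u_k|^2$, and noting each of the last two groups is nonnegative, forces $\|(-\Delta)^{1/4}u_k\|_{L^2}^2 \le \frac{a^*}{a^*-a_k}I(a_k)$ from above and, via the kinetic lower bound implicit in Lemma~\ref{lem:estimate of I(a_{k}) trapping}, $\|(-\Delta)^{1/4}u_k\|_{L^2}^2 \ge M_1'(a^*-a_k)^{-\frac{1}{q+1}}$ from below; then \eqref{eq:boson star inequality} applied with the saturation structure gives the matching lower bound $K_1(a^*-a_k)^{-\frac{1}{q+1}}$ on the interaction.

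\emph{Main obstacle.}
The delicate point is the lower bound on the interaction term, because naively bounding $I(a_k)$ from below by minus the interaction gives the wrong sign. The resolution is that the near-saturation of \eqref{eq:boson star inequality} must be quantified: as $a_k\nearrow a^*$ the minimizer nearly saturates the interpolation inequality, so the kinetic energy and the interaction energy are locked together at leading order, and one uses the upper energy bound to force the kinetic energy (hence the interaction) to blow up at the rate $(a^*-a_k)^{-\frac{1}{q+1}}$ rather than to stay bounded. I expect the bookkeeping that converts the energy estimate into a sharp lower bound on $\|(-\Delta)^{1/4}u_k\|_{L^2}^2$ to be the crux; everything else is a direct application of \eqref{eq:boson star inequality} and Lemma~\ref{lem:estimate of I(a_{k}) trapping}.
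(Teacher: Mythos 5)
Your upper bound is correct and is essentially the paper's own argument: from \eqref{eq:boson star inequality} and $V\ge 0$ one has $I(a_k)=\mathcal{E}_{a_k}(u_k)\ge\frac{a^*-a_k}{2}\iint\frac{|u_k(x)|^2|u_k(y)|^2}{|x-y|}{\rm d}x{\rm d}y$, so the upper bound in \eqref{ineq:estimate of direct term} follows from the upper bound in \eqref{ineq:estimate I(a_{k})}; routing through the kinetic energy as you do is the same computation. The lower bound, however, has a genuine gap, and you have located it yourself without filling it. The bound you call ``the kinetic lower bound implicit in Lemma \ref{lem:estimate of I(a_{k}) trapping}'', namely $\|(-\Delta)^{1/4}u_k\|_{L^2}^2\ge M_1'(a^*-a_k)^{-1/(q+1)}$, is not implicit in that lemma: the lemma bounds only $I(a_k)$, and every inequality you actually display (including your decomposition into nonnegative groups) controls the kinetic term from \emph{above}, never from below. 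That missing bound is essentially equivalent to the statement being proved --- once you have it, the energy identity gives $\frac{a_k}{2}\iint\ge\|(-\Delta)^{1/4}u_k\|_{L^2}^2-I(a_k)$ with $I(a_k)\to0$, which finishes the job --- so invoking it without proof is circular. Note also that your stated transfer mechanism points the wrong way: \eqref{eq:boson star inequality} only yields $\iint\le\frac{2}{a^*}\|(-\Delta)^{1/4}u_k\|_{L^2}^2$, and ``near-saturation'' is not a property of the inequality but a fact about $u_k$ that itself requires proof.

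The idea you are missing --- and the one the paper uses, following Lemma 4 in \cite{Ng-17} --- is a comparison in the coupling constant: for $0<b<a_k$, use the minimizer $u_k$ as a trial state for $I(b)$, so that
\[
I(b)\le\mathcal{E}_b(u_k)=I(a_k)+\frac{a_k-b}{2}\iint_{\mathbb{R}^{3}\times\mathbb{R}^{3}}\frac{|u_k(x)|^2|u_k(y)|^2}{|x-y|}{\rm d}x{\rm d}y,
\]
hence the interaction is bounded below by the finite difference $2\,\frac{I(b)-I(a_k)}{a_k-b}$. Choosing $b=a_k-\gamma(a^*-a_k)$ with $\gamma>0$ fixed (admissible for $a_k$ near $a^*$) and inserting the two-sided bound \eqref{ineq:estimate I(a_{k})} gives
\[
\iint_{\mathbb{R}^{3}\times\mathbb{R}^{3}}\frac{|u_k(x)|^2|u_k(y)|^2}{|x-y|}{\rm d}x{\rm d}y\ \ge\ 2(a^*-a_k)^{-\frac{1}{q+1}}\,\frac{M_1(1+\gamma)^{\frac{q}{q+1}}-M_2}{\gamma},
\]
and the fraction is strictly positive once $\gamma$ is large enough that $M_1(1+\gamma)^{q/(q+1)}>M_2$. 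This is where \emph{both} constants of Lemma \ref{lem:estimate of I(a_{k}) trapping} are used simultaneously; it is a trial-state argument, not a saturation argument, and it is precisely the ``bookkeeping'' you anticipated but did not supply. (Finitely many indices with $a_k$ away from $a^*$ are handled by shrinking $K_1$, or by noting that the interaction of minimizers is monotone in the coupling, as one sees by exchanging the roles of $b$ and $a_k$ in the comparison.)
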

\begin{proof}
	The proof of this Lemma follows from Lemma \ref{lem:estimate of I(a_{k}) trapping} and a similar procedures as the proof of Lemma 4 in \cite{Ng-17}.
\end{proof}

Now we able to prove Theorem \ref{thm:behavior-trapping}. We consider two cases.

\subsubsection*{Case 1. If $0<p\leq 1$.}
Let $\epsilon_{k}:=(a^{*}-a_{k})^{\frac{1}{p}}>0$, we see that $\epsilon_{k}\to 0$ as $k\to\infty$. For $1\leq i\leq n$ and $x_{i}\in V^{-1}(0)$, we define $w_{k}^{i}(x):=\epsilon_{k}^{3/2}u_{k}(\epsilon_{k}x+x_{i})$
be $L^{2}$--normalized of $u_{k}$. It follows from \eqref{ineq:boundeness of I(a)} and Lemma \ref{lem:estimate of I(a_{k}) trapping} that there exists positive constant $C$ such that 
\begin{equation}
\int_{\mathbb{R}^{3}}V(x)\left|u_{k}(x)\right|^{2}{\rm d}x\leq C\epsilon_{k}^p,\quad \|(-\Delta)^{1/4}w_{k}^{i}\|_{L^{2}}^{2}\leq C.\label{ineq:estimate of minimizer-1}
\end{equation}
Thus $\left\{ w_{k}^{i}\right\} $ is bounded in $H^{1/2}(\mathbb{R}^{3})$, and hence we may assume that $w_{k}^{i}\rightharpoonup w^{i}$ weakly in $H^{1/2}(\mathbb{R}^{3})$. Extracting a subsequence if necessary, we have $w_{k}^{i}\to w^{i}$ strongly in $L_{{\rm loc}}^{r}(\mathbb{R}^3)$ for any $2\leq r<3$.

For $\epsilon_{k}$ small enough, i.e. for $a_{k}$ close to $a^{*}$,
the set $\left\{ x\in\mathbb{R}^{3}:V(x)\leq\gamma\epsilon_{k}^{p}\right\}$ is contained in $n$ disjoint balls with radius at most $C\gamma^{1/p}\epsilon_{k}$ for some $C>0$, centered at the points $x_{i}$. Thus, for any $\gamma>0$,
we have 
\begin{align*}
\frac{C}{\gamma} &\geq\frac{1}{\gamma\epsilon_{k}^{p}}\int_{\mathbb{R}^{3}}V(x)\left|u_{k}(x)\right|^{2}{\rm d}x \geq\int_{V(x)\geq\gamma\epsilon_{k}^{p}}\left|u_{k}(x)\right|^{2}{\rm d}x=1-\int_{V(x)\leq\gamma\epsilon_{k}^{p}}\left|u_{k}(x)\right|^{2}{\rm d}x\\
& \geq1-\sum_{i=1}^{n}\int_{\left|x-x_{i}\right|\leq C\gamma^{1/p}\epsilon_{k}}\left|u_{k}(x)\right|^{2}{\rm d}x=1-\sum_{i=1}^{n}\int_{\left|x\right|\leq C\gamma^{1/p}}\left|w_{k}^{i}(x)\right|^{2}{\rm d}x
\end{align*}
which implies that 
$$
1-\frac{C}{\gamma}\leq\sum_{i=1}^{n}\int_{\left|x\right|\leq C\gamma^{1/p}}\left|w_{k}^{i}(x)\right|^{2}{\rm d}x\leq1.
$$
Since $w_{k}^{i}\to w^{i}$ strongly in $L_{{\rm loc}}^{2}(\mathbb{R}^3)$, the above estimate implies that
$$
1-\frac{C}{\gamma}\leq\sum_{i=1}^{n}\int_{\left|x\right|\leq C\gamma^{1/p}}|w^{i}(x)|^{2}{\rm d}x\leq1,
$$
which holds for any $\gamma>0$. Thus we conclude that 
\begin{equation}
\sum_{i=1}^{n}\|w^{i}\|_{L^{2}}^{2}=1
\label{eq:sum of norm of w^i}
\end{equation}

We recall the Lagrange multiplier $\mu_{k}$ in \eqref{eq:Lagrange mutiplier}. It follows from Lemma \ref{lem:estimate of I(a_{k}) trapping} and \ref{lem:estimate of direct} that $\epsilon_{k}\mu_{k}$ is bounded
uniformly as $k\to\infty$, and strictly negative for $a_{k}$ close
to $a^{*}$. By passing to a subsequence if necessary, we can thus
assume that $\epsilon_{k}\mu_{k}$ converges to some number \textbf{$-\lambda<0$
}as $k\to\infty$. Thus, we deduce from Lemma \ref{lem:positive ground state solution} that $w_{k}^{i}\to w^{i}$ strongly in $H^{1/2}(\mathbb{R}^3)$ where either $w^{i}\equiv0$ or $w^{i}>0$ in $\mathbb{R}^{3}$.
In the latter case we have 
\begin{equation}
w^{i}(x)=\lambda^{3/2}Q(\lambda x)
\label{eq:norm of w^i}
\end{equation}
where $Q\in \mathcal{G}$. This implies that $\|w^i\|_{L^2}^2=1$. Because of \eqref{eq:sum of norm of w^i}, we see that there is exactly one $w^{i}$ which is of the form \eqref{eq:norm of w^i}, while all the others are zero. Let $1\leq j\leq n$ be such that $w^{j}(x)=\lambda^{3/2}Q(\lambda x)$. Then we have  $p_j=p=\max_{1\leq i \leq n}p_i$. Indeed, if $p_j <p$, by Fatou's Lemma we have, for any large constant $M>0$
\begin{align}
&\lim_{k\to\infty}\frac{1}{\epsilon_{k}^{p}}\int_{\mathbb{R}^{3}}V(\epsilon_{k}x+x_{j})|w_{k}^{j}(x)|^{2}{\rm d}x \geq \int_{\mathbb{R}^{3}}\lim_{k\to\infty}\epsilon_{k}^{p_j-p}\kappa_{j}\left|x\right|^{p_j}\left|w^{j}(x)\right|^{2}{\rm d}x\nonumber\\
& =\lim_{k\to\infty}\epsilon_{k}^{p_j-p}\kappa_{j}\frac{1}{\lambda^{p_j}}\int_{\mathbb{R}^{3}}\left|x\right|^{p_j}\left|Q(x)\right|^{2}{\rm d}x\geq M,
\end{align}
where 
$$
\kappa_{j}:=\lim_{x\to x_{j}}\frac{V(x)}{\left|x-x_{j}\right|^{p_j}}>0.
$$
From \eqref{eq:boson star inequality}  and the fact that $u_{k}=\epsilon_{k}^{-3/2}w_{k}^{j}(\epsilon_{k}^{-1}x-\epsilon_{k}^{-1}x_{j})$ is a minimizer of $I(a_{k})$, we have
\begin{align}
I(a_{k}) & =\frac{1}{\epsilon_{k}}\left(\|(-\Delta+m^{2}\epsilon_{k}^{2})^{1/4}w_{k}^{j}\|_{L^{2}}^{2}-\frac{a^{*}}{2}\iint_{\mathbb{R}^{3}\times\mathbb{R}^{3}}\frac{|w_{k}^{j}(x)|^{2}|w_{k}^{j}(y)|^{2}}{\left|x-y\right|}{\rm d}x{\rm d}y\right)\nonumber\\
& \quad+\frac{a^{*}-a_{k}}{2\epsilon_{k}}\iint_{\mathbb{R}^{3}\times\mathbb{R}^{3}}\frac{|w_{k}^{j}(x)|^{2}|w_{k}^{j}(y)|^{2}}{\left|x-y\right|}{\rm d}x{\rm d}y+\int_{\mathbb{R}^{3}}V(\epsilon_{k}x+x_{j})|w_{k}^{j}(x)|^{2}{\rm d}x\nonumber\\
& \geq C_1 \frac{a^{*}-a_{k}}{\epsilon_{k}}+M\epsilon_{k}^{p}\geq C_{2}M^{\frac{1}{p+1}}\left(a^{*}-a_{k}\right)^{\frac{p}{p+1}}.
\end{align}
The above lower bound of $I(a_k)$ holds true for any $0 < p \leq 1$ and for an arbitrary large $M > 0$. This contradicts the upper bound of $I(a_k)$ in \eqref{ineq:estimate I(a_{k})}. Hence $p_j=p=\max_{1\leq i \leq n}p_i$.

Now we compute the exact value of $\lambda$. Since $u_{k}=\epsilon_{k}^{-3/2}w_{k}^{j}(\epsilon_{k}^{-1}x-\epsilon_{k}^{-1}x_{j})$ is a minimizer of $I(a_{k})$ we have
\begin{align}
I(a_{k}) & =\frac{1}{\epsilon_{k}}\left(\|(-\Delta+m^{2}\epsilon_{k}^{2})^{1/4}w_{k}^{j}\|_{L^{2}}^{2}-\frac{a^{*}}{2}\iint_{\mathbb{R}^{3}\times\mathbb{R}^{3}}\frac{|w_{k}^{j}(x)|^{2}|w_{k}^{j}(y)|^{2}}{\left|x-y\right|}{\rm d}x{\rm d}y\right)\nonumber\\
& \quad+\frac{a^{*}-a_{k}}{2\epsilon_{k}}\iint_{\mathbb{R}^{3}\times\mathbb{R}^{3}}\frac{|w_{k}^{j}(x)|^{2}|w_{k}^{j}(y)|^{2}}{\left|x-y\right|}{\rm d}x{\rm d}y+\int_{\mathbb{R}^{3}}V(\epsilon_{k}x+x_{j})|w_{k}^{j}(x)|^{2}{\rm d}x\nonumber\\
& \geq m^{2}\epsilon_{k}\|((-\Delta+m^{2}\epsilon_{k}^{2})^{1/2}+(-\Delta)^{1/2})^{-1/2}w_{k}^{j}\|_{L^{2}}^{2} \label{lim:I(a_{k})}\\
& \quad+\frac{\epsilon_{k}^{q}}{2}\iint_{\mathbb{R}^{3}\times\mathbb{R}^{3}}\frac{|w_{k}^{j}(x)|^{2}|w_{k}^{j}(y)|^{2}}{\left|x-y\right|}{\rm d}x{\rm d}y+\int_{\mathbb{R}^{3}}V(\epsilon_{k}x+x_{j})|w_{k}^{j}(x)|^{2}{\rm d}x\nonumber,
\end{align}
where we have used
\eqref{eq:boson star inequality} and \eqref{ineq:operator} for the term in the brackets.

By Fatou's Lemma, we have 
\begin{align}
& \liminf_{k\to\infty}\|((-\Delta+m^{2}\epsilon_{k}^{2})^{1/2}+(-\Delta)^{1/2})^{-1/2}w_{k}^{j}\|_{L^{2}}^{2}\nonumber\\
& \geq\frac{1}{2}\|(-\Delta)^{-1/4}w^{j}\|_{L^{2}}^{2}=\frac{1}{2\lambda}\|(-\Delta)^{-1/4}Q\|_{L^{2}}^{2}.\label{lim:kinetic}
\end{align}
We note that, for any $Q\in \mathcal{G}$, by H\"oder's inequality we have
$$
\|(-\Delta)^{-1/4}Q\|_{L^{2}}\geq \frac{\|Q\|_{L^{2}}^{2}}{\|(-\Delta)^{1/4}Q\|_{L^{2}}}=1,
$$
which implies that $\inf_{Q\in \mathcal{G}}\|(-\Delta)^{-1/4}Q\|_{L^{2}}$ is well defined.

By the Hardy-Littlewood-Sobolev inequality, we have 
\begin{align}
& \lim_{k\to\infty}\iint_{\mathbb{R}^{3}\times\mathbb{R}^{3}}\frac{|w_{k}^{j}(x)|^{2}|w_{k}^{j}(y)|^{2}}{\left|x-y\right|}{\rm d}x{\rm d}y=\iint_{\mathbb{R}^{3}\times\mathbb{R}^{3}}\frac{\left|w^{j}(x)\right|^{2}\left|w^{j}(y)\right|^{2}}{\left|x-y\right|}{\rm d}x{\rm d}y\nonumber\\
& =\lambda\iint_{\mathbb{R}^{3}\times\mathbb{R}^{3}}\frac{\left|Q(x)\right|^{2}\left|Q(y)\right|^{2}}{\left|x-y\right|}{\rm d}x{\rm d}y=\frac{2\lambda}{a^{*}}.\label{lim:direct}
\end{align}

By Fatou's Lemma and keep in mind that $0<p_j=p\leq 1$, we have 
\begin{align} \liminf_{k\to\infty}\frac{1}{\epsilon_{k}^{p}}\int_{\mathbb{R}^{3}}V(\epsilon_{k}x+x_{j})|w_{k}^{j}(x)|^{2}{\rm d}x \geq \kappa_{j}\frac{1}{\lambda^{p}}\int_{\mathbb{R}^{3}}\left|x\right|^{p}\left|Q(x)\right|^{2}{\rm d}x.\label{lim:V}
\end{align}

\begin{itemize}
	\item If $p<1$, then it follows from \eqref{lim:I(a_{k})}, \eqref{lim:direct} and \eqref{lim:V} that
	$$
	\liminf_{k\to\infty}\frac{I(a_{k})}{\epsilon_{k}^{p}}\geq\frac{1}{a^{*}}\left(\lambda+\frac{\lambda_{\kappa}^{p+1}}{p\lambda^{p}}\right)
	$$
	where  
	$$
	\lambda_{\kappa}=\left(pa^{*}\kappa\int_{\mathbb{R}^{3}}\left|x\right|^{p}\left|Q(x)\right|{\rm d}x\right)^{\frac{1}{p+1}} \text{ and }\kappa=\min\{\kappa_i:1\leq i\leq n\}.
	$$
	Thus
	\begin{equation}\label{liminf: I(a_{k})/epsilon-1}
	\liminf_{k\to\infty}\frac{I(a_{k})}{\epsilon_{k}^{p}}\geq\inf_{\tilde{\lambda}>0}\frac{1}{a^{*}}\left(\tilde{\lambda}+\frac{\lambda_{\kappa}^{p+1}}{p\tilde{\lambda}^{p}}\right) =\frac{p+1}{p}\cdot\frac{\lambda_{\kappa}}{a^{*}}.
	\end{equation}

	\item If $p=1$, then it follows from \eqref{lim:I(a_{k})}, \eqref{lim:kinetic}, \eqref{lim:direct} and \eqref{lim:V} that
	$$
	\liminf_{k\to\infty}\frac{I(a_{k})}{\epsilon_{k}}\geq\frac{1}{a^{*}}\left(\lambda+\frac{\lambda_{\kappa}^{2}}{\lambda}\right)
	$$
	where
	$$
	\lambda_{\kappa}=\left(\frac{m^{2}a^{*}}{2}\|(-\Delta)^{-1/4}Q\|_{L^{2}}^{2}+a^{*}\kappa\int_{\mathbb{R}^{3}}\left|x\right|\left|Q(x)\right|^{2}{\rm d}x\right)^{\frac{1}{2}}
	$$
	and $\kappa=\min\{\kappa_i:1\leq i\leq n\}$.
	
	Thus
	\begin{equation}\label{liminf: I(a_{k})/epsilon-2}
	\liminf_{k\to\infty}\frac{I(a_{k})}{\epsilon_{k}}\geq\inf_{\tilde{\lambda}>0}\frac{1}{a^{*}}\left(\tilde{\lambda}+\frac{\lambda_{\kappa}^{2}}{\tilde{\lambda}}\right)= 2\frac{\lambda_{\kappa}}{a^{*}}.
	\end{equation}
\end{itemize}

Now we prove the matching upper bound in \eqref{liminf: I(a_{k})/epsilon-1}--\eqref{liminf: I(a_{k})/epsilon-2}. We take 
\begin{equation}\label{trial-state}
u_{k}(x)=\left(\frac{\tilde{\lambda}}{\epsilon_{k}}\right)^{3/2}W\left(\frac{\tilde{\lambda}}{\epsilon_{k}}\left(x-x_{i}\right)\right)
\end{equation}
as trial state for $\mathcal{E}_{a_{k}}$, where $x_{i}\in\mathcal{Z}$, i.e $p_i=p$, $\kappa_i=\kappa$, $\tilde{\lambda}>0$ and $W\in \mathcal{G}$. We use \eqref{eq:boson star inequality} and \eqref{ineq:operator} to obtain
$$
I(a_{k})  \leq \frac{m^{2}\epsilon_{k}}{2\tilde{\lambda}}\|(-\Delta)^{-1/4}W\|_{L^{2}}^{2}+\frac{\epsilon_{k}^{p}\tilde{\lambda}}{a^{*}}+\int_{\mathbb{R}^{3}}V\left(\frac{\epsilon_{k}}{\tilde{\lambda}}x+x_{i}\right)\left|W(x)\right|^{2}{\rm d}x.
$$

\begin{itemize}
	\item If $p<1$, then we have
	$$
	\limsup_{k\to\infty}\frac{I(a_{k})}{\epsilon_{k}^{p}}\leq\frac{1}{a^{*}}\left(\tilde{\lambda}+\frac{\tilde{\lambda}_\kappa^{p+1}}{p\tilde{\lambda}^{p}}\right),
	$$
	where 
	$$
	\tilde{\lambda}_\kappa=\left(pa^{*}\kappa\int_{\mathbb{R}^{3}}\left|x\right|^{p}\left|W(x)\right|^2{\rm d}x\right)^{\frac{1}{p+1}}.
	$$
	Thus, taking the infimum over $W\in \mathcal{G}$ and $\tilde{\lambda}>0$ we see that 
	\begin{equation}\label{limsup: I(a_{k})/epsilon-1}
	\limsup_{k\to\infty}\frac{I(a_{k})}{\epsilon_{k}^{p}}\leq\frac{p+1}{p}\cdot\frac{\inf_{W\in \mathcal{G}}\tilde{\lambda}_\kappa}{a^{*}}\leq \frac{p+1}{p}\cdot\frac{\lambda_\kappa}{a^{*}}.
	\end{equation}

	\item If $p=1$, then we have
	$$
	\limsup_{k\to\infty}\frac{I(a_{k})}{\epsilon_{k}}\leq\frac{1}{a^{*}}\left(\tilde{\lambda}+\frac{\tilde{\lambda}_\kappa^{2}}{\tilde{\lambda}}\right)
	$$
	where 
	$$
	\tilde{\lambda}_\kappa=\left(\frac{m^{2}a^{*}}{2}\|(-\Delta)^{-1/4}W\|_{L^{2}}^{2}+a^{*}\kappa\int_{\mathbb{R}^{3}}\left|x\right|\left|W(x)\right|^{2}{\rm d}x\right)^{\frac{1}{2}}.
	$$
	Thus, taking the infimum over $W\in \mathcal{G}$ and $\tilde{\lambda}>0$ we see that 
	\begin{equation}\label{limsup: I(a_{k})/epsilon-2}
	\limsup_{k\to\infty}\frac{I(a_{k})}{\epsilon_{k}}\leq2\frac{\inf_{W\in \mathcal{G}}\tilde{\lambda}_\kappa}{a^{*}}\leq 2\frac{\lambda_\kappa}{a^{*}}.
	\end{equation}
\end{itemize}

From \eqref{liminf: I(a_{k})/epsilon-1}-\eqref{liminf: I(a_{k})/epsilon-2} and \eqref{limsup: I(a_{k})/epsilon-1}-\eqref{limsup: I(a_{k})/epsilon-2} we conclude that, for $0<p\leq 1$,
$$
\lim_{k\to\infty}\frac{I(a_{k})}{\epsilon_{k}^{p}} = \frac{p+1}{p}\cdot\frac{\lambda}{a^{*}},
$$
and $$\lambda=\inf_{W\in \mathcal{G}}\tilde{\lambda}_\kappa=\lambda_\kappa,\quad 
\kappa_{j}=\min_{i:p_i=p}\kappa_{i}=\kappa
$$
which, together with $p_j=\min\{p_i:1\leq i\leq n\}=p$, implies that $x_j\in\mathcal{Z}$.
\subsubsection*{Case 2. If $p>1$.}
Let $\epsilon_{k}:=(a^{*}-a_{k})^{\frac{1}{2}}>0$, we see that $\epsilon_{k}\to 0$ as $k\to\infty$. We define $\tilde{w}_{k}(x):=\epsilon_{k}^{3/2}u_{k}(\epsilon_{k}x)$
be $L^{2}$--normalized of $u_{k}$. It follows from Lemma \ref{lem:estimate of direct} that
\begin{equation}
0<K_{1}\leq \iint_{\mathbb{R}^{3}\times\mathbb{R}^{3}}\frac{\left|\tilde{w}_{k}(x)\right|^{2}\left|\tilde{w}_{k}(y)\right|^{2}}{\left|x-y\right|}{\rm d}x{\rm d}y\leq K_{2}.\label{ineq:vanishing}
\end{equation}

We claim that there exist a sequence $\{y_{k}\}\subset \mathbb{R}^3$ and positive constant $R_{0}$ such that 
\begin{equation}
\liminf_{k\to\infty} \int_{B(y_{k},R_{0})}\left|\tilde{w}_{k}(x)\right|^{2}{\rm d}x>0.\label{eq:vanishing2}
\end{equation}
On the contrary, we assume that for any $R$ there exists a subsequence of $\left\{ a_{k}\right\} $,
still denoted by $\left\{ a_{k}\right\} $, such that 
\begin{equation} \label{eq:contr}
\lim_{k\to\infty}\sup_{y\in\mathbb{R}^3}\int_{B(y,R)}\left|\tilde{w}_{k}(x)\right|^{2}{\rm d}x=0.
\end{equation}
It follows from \cite[Lemma 9]{LeLe-11}
that $\lim_{k\to\infty}\|\tilde{w}_{k}\|_{L^{r}}=0$ for any $2\leq r<3$.
By the Hardy-Littlewood-Sobolev inequality we have 
\[
\iint_{\mathbb{R}^{3}\times\mathbb{R}^{3}}\frac{\left|\tilde{w}_{k}(x)\right|^{2}\left|\tilde{w}_{k}(y)\right|^{2}}{\left|x-y\right|}{\rm d}x{\rm d}y\to0
\]
as $k\to\infty$, which contradicts to  \eqref{ineq:vanishing}. Thus \eqref{eq:contr} does not occurs, and hence \eqref{eq:vanishing2} holds true. 

Let $w_{k}$ be non-negative $L^{2}$-normalize of $u_{k}$, defined by $$w_{k}(x)=\tilde{w}_{k}(x+y_{k})=\epsilon_{k}^{3/2}u_{k}(\epsilon_{k}x+\epsilon_{k}y_{k}).$$
It follows from \eqref{ineq:boundeness of I(a)} and Lemma \ref{lem:estimate of I(a_{k}) trapping} that there exists positive constant $C$ such that 
$$
\|(-\Delta)^{1/4}w_{k}\|_{L^{2}}^{2}\leq C.
$$
Thus ${w_{k}}$ is bounded in $H^{1/2}(\mathbb{R}^{3})$, and hence we may assume that $w_{k}\rightharpoonup w$ weakly in $H^{1/2}(\mathbb{R}^{3})$ and $w_{k}\to w$ pointwise almost everywhere. Extracting a subsequence if necessary, we have $w_{k}\to w$ strongly in $L^{r}_{{\rm loc}}(\mathbb{R}^{3})$ for $2\leq r<3$.

We recall the Lagrange multiplier $\mu_{k}$ in \eqref{eq:Lagrange mutiplier}. It follows from Lemma \ref{lem:estimate of I(a_{k}) trapping} and \ref{lem:estimate of direct} that $\epsilon_{k}\mu_{k}$ is bounded
uniformly as $k\to\infty$, and strictly negative for $a_{k}$ close
to $a^{*}$. By passing to a subsequence if necessary, we can thus
assume that $\epsilon_{k}\mu_{k}$ converges to some number \textbf{$-\lambda<0$
}as $k\to\infty$. Thus, we deduce from Lemma \ref{lem:positive ground state solution} that $w_{k}\to w$ strongly in $H^{1/2}(\mathbb{R}^3)$ where either $w\equiv0$ or $w>0$ in $\mathbb{R}^{3}$.
In the latter case we have 
$$
w(x)=\lambda^{3/2}Q(\lambda x)
\label{eq:norm of w^i}
$$
where $Q\in \mathcal{G}$. We infer from \eqref{eq:vanishing2} and the convergence of $w_{k}$ in $L^{2}_{{\rm loc}}(\mathbb{R}^{3})$ that
$$
\int_{B(0,R_{0})}\left|w(x)\right|^{2}{\rm d}x=\lim_{k\to\infty} \int_{B(0,R_{0})}\left|w_{k}(x)\right|^{2}{\rm d}x=\lim_{k\to\infty} \int_{B(y_{k},R_{0})}\left|\tilde{w}_{k}(x)\right|^{2}{\rm d}x>0.
$$
This implies that $w\not \equiv 0$, and hence $Q\not \equiv 0$.

Now we compute the exact value of $\lambda$. Since $u_{k}=\epsilon_{k}^{-3/2}w_{k}(\epsilon_{k}^{-1}x-y_{k})$ is a minimizer of $I(a_{k})$ we have
\begin{align}
I(a_{k}) &\geq m^{2}\epsilon_{k}\|((-\Delta+m^{2}\epsilon_{k}^{2})^{1/2}+(-\Delta)^{1/2})^{-1/2}w_{k}\|_{L^{2}}^{2} \nonumber\\
& \quad+\frac{\epsilon_{k}}{2}\iint_{\mathbb{R}^{3}\times\mathbb{R}^{3}}\frac{|w_{k}(x)|^{2}|w_{k}(y)|^{2}}{\left|x-y\right|}{\rm d}x{\rm d}y.\label{lim:I(a_{k})-2}
\end{align}

By Fatou's Lemma, we have 
\begin{align}
& \liminf_{k\to\infty}\|((-\Delta+m^{2}\epsilon_{k}^{2})^{1/2}+(-\Delta)^{1/2})^{-1/2}w_{k}\|_{L^{2}}^{2}\nonumber\\
& \geq\frac{1}{2}\|(-\Delta)^{-1/4}w_{k}\|_{L^{2}}^{2}=\frac{1}{2\lambda}\|(-\Delta)^{-1/4}Q\|_{L^{2}}^{2}.\label{lim:kinetic-2}
\end{align}

By the Hardy-Littlewood-Sobolev inequality, we have 
\begin{align}
& \lim_{k\to\infty}\iint_{\mathbb{R}^{3}\times\mathbb{R}^{3}}\frac{|w_{k}(x)|^{2}|w_{k}(y)|^{2}}{\left|x-y\right|}{\rm d}x{\rm d}y=\iint_{\mathbb{R}^{3}\times\mathbb{R}^{3}}\frac{\left|w_{k}(x)\right|^{2}\left|w_{k}(y)\right|^{2}}{\left|x-y\right|}{\rm d}x{\rm d}y\nonumber\\
& =\lambda\iint_{\mathbb{R}^{3}\times\mathbb{R}^{3}}\frac{\left|Q(x)\right|^{2}\left|Q(y)\right|^{2}}{\left|x-y\right|}{\rm d}x{\rm d}y=\frac{2\lambda}{a^{*}}.\label{lim:direct-2}
\end{align}

It follows from \eqref{lim:I(a_{k})}, \eqref{lim:kinetic} and \eqref{lim:direct} that
	$$
	\liminf_{k\to\infty}\frac{I(a_{k})}{\epsilon_{k}}\geq\frac{\lambda}{a^{*}}+\frac{m^{2}}{2\lambda}\|(-\Delta)^{-1/4}Q\|_{L^{2}}^{2}.
	$$
	Thus
	\begin{equation}\label{liminf: I(a_{k})/epsilon-3}
	\liminf_{k\to\infty}\frac{I(a_{k})}{\epsilon_{k}}\geq\inf_{\tilde{\lambda}>0}\left(\frac{\tilde{\lambda}}{a^{*}}+\frac{m^{2}}{2\tilde{\lambda}}\|(-\Delta)^{-1/4}Q\|_{L^{2}}^{2}\right)=m\sqrt{\frac{2}{a^{*}}}\|(-\Delta)^{-1/4}Q\|_{L^{2}}.
	\end{equation}

Now we prove the matching upper bound in \eqref{liminf: I(a_{k})/epsilon-3}, we take a trial state of the form \eqref{trial-state} for $\mathcal{E}_{a_{k}}$, and minimizes over $1 \leq i \leq n$ such that $p_i=p$, $\kappa_i=\kappa$. Since $p>1$, we obtain that
	$$
	\limsup_{k\to\infty}\frac{I(a_{k})}{\epsilon_{k}}\leq\frac{\tilde{\lambda}}{a^{*}}+\frac{m^{2}}{2\tilde{\lambda}}\|(-\Delta)^{-1/4}W\|_{L^{2}}^{2}.
	$$
	Thus, taking the infimum over $W\in \mathcal{G}$ and $\tilde{\lambda}>0$ we see that 
	\begin{equation}\label{limsup: I(a_{k})/epsilon-3}
	\limsup_{k\to\infty}\frac{I(a_{k})}{\epsilon_{k}}\leq m\sqrt{\frac{2}{a^{*}}}\inf_{W\in \mathcal{G}}\|(-\Delta)^{-1/4}W\|_{L^{2}}=m\sqrt{\frac{2}{a^{*}}}\|(-\Delta)^{-1/4}Q\|_{L^{2}}.
	\end{equation}

From \eqref{liminf: I(a_{k})/epsilon-3}-\eqref{limsup: I(a_{k})/epsilon-3} we conclude that
$$
\lim_{k\to\infty}\frac{I(a_{k})}{\epsilon_{k}} = 2\frac{\lambda}{a^{*}},
$$
and 
$$\lambda= m\sqrt{\frac{a^{*}}{2}}\|(-\Delta)^{-1/4}Q\|_{L^{2}} = m\sqrt{\frac{a^{*}}{2}}\inf_{W\in \mathcal{G}}\|(-\Delta)^{-1/4}W\|_{L^{2}}
$$

We note that, in any case, the limit of $I(a_{k})/\epsilon_{k}^{q}$, where $q=\min\{p,1\}$, is independent of the subsequence $\{a_{k}\}$. Therefore, we have the convergence of the whole family in \eqref{lim: I(a_k)/epsilon_k}. 

\subsection{Proof of Theorem \ref{thm:behavior-periodic}}

Let $p>0$, and we define $q=\min\{p,1\}$.

First, we claim that there exist postitive constants $M_{1}<M_{2}$ independent of $\{a_{k}\}$ such that
\begin{equation}\label{ineq: upper bound I(a_{k}) periodic bad}
M_{1}(a^{*}-a_{k})\leq I(a_{k})\leq M_{2}(a^{*}-a_{k})^{\frac{q}{q+1}}.
\end{equation}
Indeed, by using the trial state of the form \eqref{eq:trial state trapping-periodic}, the proof of upper bound in \eqref{ineq: upper bound I(a_{k}) periodic bad} follows from the proof of upper bound in \eqref{ineq:estimate I(a_{k})}. The lower bound in \eqref{ineq: upper bound I(a_{k}) periodic bad} follows from \eqref{eq:boson star inequality} and the fact that $\|u_{k}\|_{L^2}^{2}=1$,
$$
I(a_{k})=\mathcal{E}_{a_{k}}(u_{k})\geq\left(1-\frac{a_{k}}{a^{*}}\right)\|(-\Delta+m^{2})^{1/4}u_{k}\|_{L^{2}}^{2}\geq m\frac{a^{*}-a_{k}}{a^{*}}.
$$

Now we use the estimates of $I(a_{k})$ in \eqref{ineq: upper bound I(a_{k}) periodic bad} to prove that there exists positive constant $K$ independent of $a_{k}$ such that, as $k\to\infty$, we have
\begin{equation}\label{ineq:estimate of direct term-1}
\iint_{\mathbb{R}^{3}\times\mathbb{R}^{3}}\frac{\left|u_{k}(x)\right|^{2}\left|u_{k}(y)\right|^{2}}{\left|x-y\right|}{\rm d}x{\rm d}y\geq K.
\end{equation}
Indeed, for any $b$ such that $0<b<a_{k}$ with $b=a_{k}-\left(a^{*}-a_{k}\right)^{\epsilon}$
with $\epsilon<\frac{q}{q+1}$, we have 
$$
I(b)\leq\mathcal{E}_{b}(u_{k})=I(a_{k})+\frac{a_{k}-b}{2}\iint_{\mathbb{R}^{3}\times\mathbb{R}^{3}}\frac{\left|u_{k}(x)\right|^{2}\left|u_{k}(y)\right|^{2}}{\left|x-y\right|}{\rm d}x{\rm d}y.
$$
We deduce from the above inequality and the estimates of $I(a_{k})$ in \eqref{ineq: upper bound I(a_{k}) periodic bad} that there exist two positive
constants $M_{1}<M_{2}$ such that for any $0<b<a_{k}<a^{*}$, 
\begin{align*}
\frac{1}{2}\iint_{\mathbb{R}^{3}\times\mathbb{R}^{3}}\frac{\left|u_{k}(x)\right|^{2}\left|u_{k}(y)\right|^{2}}{\left|x-y\right|}{\rm d}x{\rm d}y
&
\geq\frac{I(b)-I(a_{k})}{a_{k}-b} \geq\frac{M_{1}\left(a^{*}-b\right)-M_{2}\left(a^{*}-a_{k}\right)^{\frac{q}{q+1}}}{a_{k}-b}\\
&
=M_{1}+M_{1}(a^{*}-a_{k})^{1-\epsilon}-M_{2}(a^{*}-a_{k})^{\frac{q}{q+1}-\epsilon}.
\end{align*}
Since $\epsilon<\frac{q}{q+1}$ and $M_{2}>M_{1}>0$, we have $M_{1}(a^{*}-a_{k})^{1-\epsilon}-M_{2}(a^{*}-a_{k})^{\frac{q}{q+1}-\epsilon}\to0$
as $k\to\infty$. Hence \eqref{ineq:estimate of direct term-1} holds true.

\begin{lem}\label{lem: important periodic} We define $\epsilon_{k}^{-1}:=\|(-\Delta)^{1/4}u_{k}\|_{L^{2}}^{2}$. Then the following statements hold true
	\begin{itemize}
		\item[(i)] $\epsilon_{k}\to0$ as $k\to\infty$.
		
		\item[(ii)] There exist a sequence $\left\{ y_{k}\right\}\subset\mathbb{R}^{3}$ and a postitive constant $R_{1}$ such that the sequence $w_{k}(x)=\epsilon_{k}^{3/2}u_{k}\left(\epsilon_{k}x+\epsilon_{k}y_{k}\right)$, satisfies 
		\begin{equation}\label{eq:vanishing6}
		\liminf_{k\to\infty}\int_{B(0,R_{1})}\left|w_{k}(x)\right|^{2}{\rm d}x>0.
		\end{equation}
		
		\item[(iii)] We write $\epsilon_{k}y_{k}=z_{k}+x_{k}$ for $z_{k}\in \mathbb{Z}^3$ and $x_{k}\in [0,1]^3$. Then there exists a subsequence of $\{a_{k}\}$, still denoted by $\{a_{k}\}$, such that $x_{k}\to x_{0}$ where $x_{0}\in\mathbb{R}^{3}$ being a minimum point of $V$, i.e. $V(x_{0})=0$.
	\end{itemize}
	
	Futhermore, we have $w_{k}(x)\to Q(x)$ strongly in $H^{1/2}(\mathbb{R}^{3})$ as $k\to\infty$, where $Q\in \mathcal{G}$. 
\end{lem}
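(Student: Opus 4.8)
The plan is to prove (i)--(iii) in turn, extracting along the way the quantitative limit $\lim_{k\to\infty}\epsilon_k D(u_k)=2/a^*$, where I abbreviate $D(v):=\iint_{\mathbb{R}^{3}\times\mathbb{R}^{3}}\frac{|v(x)|^{2}|v(y)|^{2}}{|x-y|}{\rm d}x{\rm d}y$ for brevity; this single fact drives both the non-vanishing in (ii) and the identification of the limit profile. For (i), I would start from $\mathcal{E}_{a_k}(u_k)\geq\|(-\Delta+m^{2})^{1/4}u_k\|_{L^2}^2-\|(-\Delta)^{1/4}u_k\|_{L^2}^2$, which follows from $V\geq 0$ together with $\frac{a_k}{2}D(u_k)\leq\frac{a^*}{2}D(u_k)\leq\|(-\Delta)^{1/4}u_k\|_{L^2}^2$ by \eqref{eq:boson star inequality}. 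Bounding the right-hand side below by $\frac{m^{2}}{2}\|(-\Delta+m^{2})^{-1/4}u_k\|_{L^2}^2$ through \eqref{ineq:operator}, and then by $m^{2}/(2\|(-\Delta+m^{2})^{1/4}u_k\|_{L^2}^2)$ via Cauchy--Schwarz (using $\|u_k\|_{L^2}=1$), I obtain $I(a_k)\geq m^{2}/(2\|(-\Delta+m^{2})^{1/4}u_k\|_{L^2}^2)$. Since $I(a_k)\to 0$ by \eqref{ineq: upper bound I(a_{k}) periodic bad}, the kinetic energy blows up, and because $\|(-\Delta)^{1/4}u_k\|_{L^2}^2\geq\|(-\Delta+m^{2})^{1/4}u_k\|_{L^2}^2-m$ this yields $\epsilon_k^{-1}\to\infty$, i.e. $\epsilon_k\to 0$.

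For the Coulomb limit and (ii), the upper bound $\epsilon_k D(u_k)\leq 2/a^*$ is immediate from \eqref{eq:boson star inequality}, while $I(a_k)\geq\epsilon_k^{-1}-\frac{a_k}{2}D(u_k)$ together with $\epsilon_k I(a_k)\to 0$ gives $\liminf_k\epsilon_k D(u_k)\geq 2/a^*$; hence $\epsilon_k D(u_k)\to 2/a^*$. Setting $\tilde w_k(x):=\epsilon_k^{3/2}u_k(\epsilon_k x)$, which satisfies $\|\tilde w_k\|_{L^2}=1$ and $\|(-\Delta)^{1/4}\tilde w_k\|_{L^2}^2=1$ by the choice of $\epsilon_k$, the scaling identity $D(\tilde w_k)=\epsilon_k D(u_k)$ shows $D(\tilde w_k)\to 2/a^*>0$. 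If vanishing held, i.e. $\sup_{y\in\mathbb{R}^3}\int_{B(y,R)}|\tilde w_k|^2\to 0$ for every $R$, then \cite[Lemma 9]{LeLe-11} would force $\|\tilde w_k\|_{L^r}\to 0$ for $2\leq r<3$, and the Hardy--Littlewood--Sobolev inequality would give $D(\tilde w_k)\to 0$, a contradiction. Non-vanishing therefore produces $\{y_k\}\subset\mathbb{R}^3$ and $R_1>0$ with $\liminf_k\int_{B(y_k,R_1)}|\tilde w_k|^2>0$; defining $w_k(x):=\tilde w_k(x+y_k)=\epsilon_k^{3/2}u_k(\epsilon_k x+\epsilon_k y_k)$ gives \eqref{eq:vanishing6}.

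Since $w_k$ is bounded in $H^{1/2}(\mathbb{R}^{3})$, up to a subsequence $w_k\rightharpoonup w$ weakly, $w_k\to w$ in $L^2_{\rm loc}$ and a.e.; \eqref{eq:vanishing6} then gives $\int_{B(0,R_1)}|w|^2>0$, so $w\not\equiv 0$. Using \eqref{eq:Lagrange mutiplier}, $\epsilon_k I(a_k)\to 0$ and $\epsilon_k D(u_k)\to 2/a^*$, I compute $\epsilon_k\mu_k\to -1<0$, so Lemma \ref{lem:positive ground state solution} applies with $\lambda=1$ and yields $w_k\to w$ strongly in $H^{1/2}(\mathbb{R}^{3})$ with $w=Q$ for some $Q\in\mathcal{G}$ (the positivity dichotomy in that lemma, combined with $w\not\equiv 0$, forces $w>0$). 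For (iii), \eqref{ineq:boundeness of I(a)} and $I(a_k)\to 0$ give $\int_{\mathbb{R}^{3}}V|u_k|^2\to 0$; changing variables and using the periodicity $V(\cdot+z_k)=V(\cdot)$ rewrites this as $\int_{\mathbb{R}^{3}}V(\epsilon_k x+x_k)|w_k|^2\to 0$. Choosing a subsequence with $x_k\to x_0\in[0,1]^3$ (compactness) and using $\epsilon_k\to 0$ with the continuity and boundedness of $V$, one has $V(\epsilon_k x+x_k)\to V(x_0)$ uniformly on $B(0,R_1)$, so that $V(x_0)\int_{B(0,R_1)}|w|^2=\lim_k\int_{B(0,R_1)}V(\epsilon_k x+x_k)|w_k|^2\leq 0$. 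As $V\geq 0$ and $\int_{B(0,R_1)}|w|^2>0$, this forces $V(x_0)=0$.

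The main obstacle is (iii): the compactness extracted in (ii) holds only up to an a priori uncontrolled translation $y_k$, and it must be transported back to a genuine minimum of the periodic potential. The device is to reduce $\epsilon_k y_k$ modulo $\mathbb{Z}^3$ to the fundamental cell, so that periodicity makes the rescaled potential depend only on $x_k$, and then to pass the vanishing of $\int V|u_k|^2$ through the local $L^2$ convergence of $w_k$ and the uniform-on-compacts convergence $V(\epsilon_k\cdot+x_k)\to V(x_0)$. Throughout, the quantitative limit $\epsilon_k D(u_k)\to 2/a^*$ is the common engine: it rules out vanishing in (ii) and fixes $\lambda=1$, so that Lemma \ref{lem:positive ground state solution} identifies the blow-up profile as an element of $\mathcal{G}$.
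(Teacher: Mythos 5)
Your proof is correct, and outside of part (i) it follows essentially the same route as the paper: you extract the Coulomb-term limit (writing $D(v)$ for the double integral, $\epsilon_k D(u_k)\to 2/a^*$, which is \eqref{eq:convergence of direct term and epsilon 1} in the paper) from the upper bound in \eqref{ineq: upper bound I(a_{k}) periodic bad}, rule out vanishing of $\tilde w_k$ via \cite[Lemma 9]{LeLe-11} plus Hardy--Littlewood--Sobolev to get \eqref{eq:vanishing6}, compute $\epsilon_k\mu_k\to-1$ so that Lemma \ref{lem:positive ground state solution} applies with $\lambda=1$, and deduce $V(x_0)=0$ from $\int_{\mathbb{R}^3}V|u_k|^2\leq I(a_k)\to 0$ via \eqref{ineq:boundeness of I(a)} (you use uniform continuity of the periodic $V$ on a fixed ball where the paper invokes Fatou's lemma; both are sound). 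The genuine difference is part (i). The paper argues by contradiction: if $\|(-\Delta)^{1/4}u_k\|_{L^2}$ stayed bounded, then non-vanishing (which needs the lower bound \eqref{ineq:estimate of direct term-1} on the Coulomb term) and the concentration-compactness argument of Section \ref{subsec:Periodic-potential}, run with $\lim_k I(a_k)=I(a^*)$, would produce a minimizer of $I(a^*)$, contradicting Theorem \ref{thm:existence of minimizers}(ii). Your argument is direct and quantitative: from $V\geq 0$, inequality \eqref{eq:boson star inequality}, the operator bound \eqref{ineq:operator} and Cauchy--Schwarz you get $I(a_k)\geq m^2/\bigl(2\|(-\Delta+m^2)^{1/4}u_k\|_{L^2}^2\bigr)$, so $I(a_k)\to 0$ forces kinetic blow-up, even with an explicit rate $\epsilon_k^{-1}\gtrsim (a^*-a_k)^{-q/(q+1)}$. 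This is precisely the computation the paper deploys elsewhere (the $p>1$ lower bound in Lemma \ref{lem:estimate of I(a_{k}) trapping}), repurposed; it is more elementary, avoids rerunning concentration-compactness and avoids both the non-existence theorem and \eqref{ineq:estimate of direct term-1}, at the cost of leaning essentially on $m>0$ (the bound degenerates as $m\to 0$, though the paper's route also uses $m>0$ through Theorem \ref{thm:existence of minimizers}(ii)).
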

\begin{proof}
	(i) On the contrary we assume that there exists a subsequence $\left\{ a_{k}\right\}$, still denoted by $\left\{ a_{k}\right\} $, such that $\left\{ u_{k}\right\} $ is uniformly bounded in $H^{1/2}(\mathbb{R}^{3})$. Thus, there exists $u\in H^{1/2}(\mathbb{R}^{3})$ such that $u_{k}\rightharpoonup u$ weakly in $H^{1/2}(\mathbb{R}^{3})$ and $u_{k}\to u$ a.e. in $\mathbb{R}^3$. Extracting a subsequence if necessary, we have $u_{k}\to u$ strong in $L_{{\rm loc}}^{r}(\mathbb{R}^3)$ for $2\leq r<3$. We claim that there exists sequence $\left\{ y_{k}\right\}\subset \mathbb{R}^3$ and positive constant $R_{0}$ such that 
	\begin{equation}
	\liminf_{k\to\infty}\int_{B(y_{k},R_{0})}\left|u_{k}(x)\right|^{2}{\rm d}x>0.\label{eq:vanishing4}
	\end{equation}
	On the contrary, we assume that for any $R>0$ there exists a sequence of $\left\{ a_{k}\right\}$, still denoted by $\left\{ a_{k}\right\}$, such that 
	$$
	\lim_{k\to\infty}\sup_{y\in\mathbb{R}^3}\int_{B(y,R)}\left|u_{k}(x)\right|^{2}{\rm d}x=0.
	$$
	It follows from \cite[Lemma 9]{LeLe-11} that $\lim_{k\to\infty}\|u_{k}\|_{L^{q}}\to0$ for $2\leq q<3$. By the Hardy-Littlewood-Sobolev inequality we have 
	$$
	\lim_{k\to\infty}\iint_{\mathbb{R}^{3}\times\mathbb{R}^{3}}\frac{\left|u_{k}(x)\right|^{2}\left|u_{k}(y)\right|^{2}}{\left|x-y\right|}{\rm d}x{\rm d}y=0.
	$$
	This contradicts \eqref{ineq:estimate of direct term-1}. Hence \eqref{eq:vanishing4} holds true. 
	
	From \eqref{eq:vanishing4} and using the same arguments of the proof of Theorem \ref{thm:existence of minimizers} (iii) in Section \ref{subsec:Periodic-potential} (with $a$ replaced by $a_{k}$ and using $\lim_{k\to\infty}I(a_{k})=I(a^*)$), we also arrive at a contradiction to Theorem \ref{thm:existence of minimizers} (ii). Hence, we conclude that $\epsilon_{k}\to0$ as $k\to\infty$. 
	
	(ii) We define $\tilde{w}_{k}(x):=\epsilon_{k}^{3/2}u_{k}\left(\epsilon_{k}x\right)$ be $L^{2}$--normalized of $u_{k}$. From \eqref{eq:boson star inequality} we have 
	$$
	0\leq \|(-\Delta)^{1/4}u_{k}\|_{L^{2}}^{2}-\frac{a_{k}}{2}\iint_{\mathbb{R}^{3}\times\mathbb{R}^{3}}\frac{\left|u_{k}(x)\right|^{2}\left|u_{k}(y)\right|^{2}}{\left|x-y\right|}{\rm d}x{\rm d}y\leq \mathcal{E}_{a_{k}}(u_{k}) = I(a_{k}).
	$$
	Thus, we deduce from the upper bound of $I(a_{k})$ in
	\eqref{ineq: upper bound I(a_{k}) periodic bad} that 
	\begin{equation}\label{eq:convergence of direct term and epsilon 1}
	\iint_{\mathbb{R}^{3}\times\mathbb{R}^{3}}\frac{\left|\tilde{w}_{k}(x)\right|^{2}\left|\tilde{w}_{k}(y)\right|^{2}}{\left|x-y\right|}{\rm d}x{\rm d}y = \epsilon_{k}\iint_{\mathbb{R}^{3}\times\mathbb{R}^{3}}\frac{\left|u_{k}(x)\right|^{2}\left|u_{k}(y)\right|^{2}}{\left|x-y\right|}{\rm d}x{\rm d}y\to\frac{2}{a^{*}}
	\end{equation}
	as $k\to \infty$.
	
	By the same arguments of proof of \eqref{eq:vanishing4} (in replacing contradiction \eqref{ineq:estimate of direct term-1} by \eqref{eq:convergence of direct term and epsilon 1}), we can prove that there exists sequence $\left\{ y_{k}\right\}\subset \mathbb{R}^3$
	and positive constant $R_{1}$ such that 
	\begin{equation}
	\liminf_{k\to\infty}\int_{B(y_{k},R_{1})}\left|\tilde{w}_{k}(x)\right|^{2}{\rm d}x>0.\label{eq:vanishing5}
	\end{equation}
	Therefore, \eqref{eq:vanishing6} follows from \eqref{eq:vanishing5} and definition of $w_{k}$ and $\tilde{w}_{k}$.
	
	(iii) 	On the contrary, we assume that $V(x_{0})>0$. We know $V$ is continuous periodic function. Thus, we deduce from \eqref{eq:vanishing6} and Fatou's lemma that 
	$$
	\lim_{k\to\infty}\int_{\mathbb{R}^{3}}V(\epsilon_{k}x+\epsilon_{k}y_{k})\left|w_{k}(x)\right|^{2}{\rm d}x\geq V(x_{0})\lim_{k\to\infty}\int_{\mathbb{R}^{3}}\left|w_{k}(x)\right|^{2}{\rm d}x>0
	$$
	which is impossible since 
	$$
	0\leq\int_{\mathbb{R}^{3}}V\left(\epsilon_{k}x+\epsilon_{k}y_{k}\right)\left|w_{k}(x)\right|^{2}{\rm d}x=\int_{\mathbb{R}^{3}}V(x)\left|u_{k}(x)\right|^{2}{\rm d}x\leq I(a_{k}),
	$$
	and $I(a_{k})\to0$ as $k\to \infty$, thanks to the upper bound of $I(a_{k})$ in \eqref{ineq: upper bound I(a_{k}) periodic bad}.
	
	Thus $x_{k}\to x_{0}$ as $k\to\infty$ where $x_{0}\in\mathbb{R}^{3}$ such that $V(x_{0})=0$.
	
	Finally, we recall the Lagrange multiplier $\mu_{k}$ in \eqref{eq:Lagrange mutiplier}. It follows from \eqref{eq:convergence of direct term and epsilon 1} and the upper bound of $I(a_{k})$ in \eqref{ineq: upper bound I(a_{k}) periodic bad} that $\epsilon_{k}\mu_{k}\to-1$ as $k\to\infty$. Note that, by definition of $w_{k}$ and $\epsilon_{k}$ we have
	$$
	\|(-\Delta)^{1/4}w_{k}\|_{L^{2}}^{2}=\epsilon_{k}\|(-\Delta)^{1/4}u_{k}\|_{L^{2}}^{2}=1,
	$$
	which implies that $w_{k}$ is bounded in $H^{1/2}(\mathbb{R}^3)$. Hence, we deduce from Lemma \ref{lem:positive ground state solution} that $w_{k}\to w$ strongly in $H^{1/2}(\mathbb{R}^3)$, where either $w\equiv0$ or $w>0$ in $\mathbb{R}^{3}$. Extracting a subsequence if necessary, we have $w_{k}\to w$ strong in $L_{{\rm loc}}^{r}(\mathbb{R}^3)$ for $2\leq r<3$. Thus it follows from \eqref{eq:vanishing6} that
	$$
	\int_{B(0,R_{0})}\left|w(x)\right|^{2}{\rm d}x=\lim_{k\to\infty}\int_{B(0,R_{0})}\left|w_{k}(x)\right|^{2}{\rm d}x>0.
	$$
	This implies that $w\not\equiv0$, and hence $w>0$. Hence, up to translations, we have $w(x)=Q(x)$ where $Q\in \mathcal{G}$.
\end{proof}

\begin{lem}\label{lem:V/epsilon periodic} For any $0<p\leq 1$, there exists a constant $C_{0}>0$ independent of $\left\{ a_{k}\right\}$ such that 
	\begin{equation}\label{ineq: refine V periodic}
	\lim_{k\to\infty}\frac{1}{\epsilon_{k}^{p}}\int_{\mathbb{R}^{3}}V(\epsilon_{k}x+\epsilon_{k}y_{k})\left|w_{k}(x)\right|^{2}{\rm d}x\geq C_{0}.
	\end{equation}
\end{lem}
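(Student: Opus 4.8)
The plan is to use the $\mathbb{Z}^3$-periodicity of $V$ to reduce the shift to its fractional part $x_k$, and then combine the polynomial behaviour of $V$ near its minimum with the strong convergence $w_k\to Q$ obtained in Lemma~\ref{lem: important periodic}. Since $\epsilon_k y_k=z_k+x_k$ with $z_k\in\mathbb{Z}^3$, periodicity gives $V(\epsilon_k x+\epsilon_k y_k)=V(\epsilon_k x+x_k)$, while Lemma~\ref{lem: important periodic}(iii) yields $x_k\to x_0$ with $V(x_0)=0$. Set $\delta_k:=x_k-x_0\to0$. By the hypothesis $\lim_{x\to x_0}V(x)/|x-x_0|^p=\kappa>0$, for each $\eta\in(0,\kappa)$ there is $\rho>0$ with $V(y)\ge(\kappa-\eta)|y-x_0|^p$ for $|y-x_0|\le\rho$. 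Discarding the non-negative part of the integrand outside a fixed ball $B_R$ and rescaling (note $|\epsilon_k x+\delta_k|\le\epsilon_k R+|\delta_k|\to0$ uniformly on $B_R$, so the local bound applies there for $k$ large), I would obtain, for each fixed $R>0$ and all $k$ large,
\[
\frac{1}{\epsilon_k^p}\int_{\mathbb{R}^3}V(\epsilon_k x+x_k)\,|w_k(x)|^2\,{\rm d}x\ \ge\ (\kappa-\eta)\int_{B_R}\Bigl|x+\tfrac{\delta_k}{\epsilon_k}\Bigr|^{p}\,|w_k(x)|^2\,{\rm d}x.
\]

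To pass to the limit I would argue along subsequences; it suffices to bound the $\liminf$ of the left-hand side from below, so choose a subsequence realizing it and extract further so that $\delta_k/\epsilon_k$ converges either to some $c\in\mathbb{R}^3$ or in modulus to $+\infty$. Along it $w_k\to Q$ strongly in $H^{1/2}(\mathbb{R}^3)$, hence $|w_k|^2\to|Q|^2$ in $L^1(B_R)$. If $|\delta_k/\epsilon_k|\to\infty$, the weight tends to $+\infty$ uniformly on $B_R$ while $\int_{B_R}|w_k|^2\to\int_{B_R}|Q|^2>0$, so the right-hand side diverges and the bound is trivial. If $\delta_k/\epsilon_k\to c$, the weight converges uniformly on $B_R$; letting $k\to\infty$, then $R\to\infty$, and finally $\eta\to0$, gives
\[
\liminf_{k\to\infty}\frac{1}{\epsilon_k^p}\int_{\mathbb{R}^3}V(\epsilon_k x+x_k)\,|w_k|^2\,{\rm d}x\ \ge\ \kappa\int_{\mathbb{R}^3}|x+c|^{p}\,|Q(x)|^{2}\,{\rm d}x.
\]

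The main obstacle is that $C_0$ must be independent of $\{a_k\}$, hence of the (possibly non-unique) profile $Q\in\mathcal{G}$ and of the translation $c$. I would settle this by a translation-insensitive lower bound. Each $Q\in\mathcal{G}$ satisfies $\|Q\|_{L^2}=\|(-\Delta)^{1/4}Q\|_{L^2}=1$ by \eqref{eq:boson star optimizer}, so the Sobolev embedding $H^{1/2}(\mathbb{R}^3)\hookrightarrow L^3(\mathbb{R}^3)$ gives $\|Q\|_{L^3}\le C_S$ with $C_S$ uniform over $\mathcal{G}$. By Hölder's inequality, for any $c\in\mathbb{R}^3$ and any $r_0>0$,
\[
\int_{|x+c|<r_0}|Q|^2\,{\rm d}x\ \le\ |B_{r_0}|^{1/3}\|Q\|_{L^3}^{2}\ \le\ C_S^{2}\,|B_{r_0}|^{1/3},
\]
so a choice of $r_0$ depending only on $C_S$ makes this at most $1/2$, whence
\[
\int_{\mathbb{R}^3}|x+c|^{p}|Q|^2\,{\rm d}x\ \ge\ r_0^{p}\int_{|x+c|\ge r_0}|Q|^2\,{\rm d}x\ =\ r_0^{p}\Bigl(1-\int_{|x+c|<r_0}|Q|^2\,{\rm d}x\Bigr)\ \ge\ \frac{r_0^{p}}{2}.
\]
This yields the claim with $C_0=\kappa\, r_0^{p}/2>0$, independent of the sequence. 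The divergent alternative $|\delta_k/\epsilon_k|\to\infty$ is harmless for the present lower bound; it will be excluded later once the upper bound on $I(a_k)$ in \eqref{ineq: upper bound I(a_{k}) periodic bad} fixes the length scale $\epsilon_k\sim(a^*-a_k)^{1/(p+1)}$.
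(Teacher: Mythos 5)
Your proof is correct, and it differs from the paper's in how the key difficulty --- the drift $(x_k-x_0)/\epsilon_k$ --- is handled. The paper first proves that this quantity stays bounded, arguing by contradiction: if $\left|(x_k-x_0)/\epsilon_k\right|\to\infty$ along a subsequence, Fatou's lemma makes the potential term exceed $M\epsilon_k^p$ for arbitrarily large $M$, and inserting this into the energy decomposition of the minimizer forces $I(a_k)\geq C_2 M^{1/(p+1)}(a^*-a_k)^{p/(p+1)}$, contradicting the upper bound \eqref{ineq: upper bound I(a_{k}) periodic bad}; it then extracts $(x_k-x_0)/\epsilon_k\to C_3$ and concludes with $C_0=\kappa\int_{\mathbb{R}^3}|x+C_3|^p|Q(x)|^2\,{\rm d}x$. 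You instead observe that the divergent case is harmless for the lower bound itself (the quantity tends to $+\infty$), so no exclusion is needed for the lemma as stated, and you handle the convergent case the same way. Your additional Sobolev-embedding step, giving $\int_{\mathbb{R}^3}|x+c|^p|Q(x)|^2\,{\rm d}x\geq r_0^p/2$ uniformly in the translation $c\in\mathbb{R}^3$ and in $Q\in\mathcal{G}$ (using only $\|Q\|_{L^2}=\|(-\Delta)^{1/4}Q\|_{L^2}=1$), is a genuine strengthening: the constant $C_0$ produced by the paper's proof, as written, depends on the subsequence through $C_3$ and through the profile $Q$ (whose uniqueness in $\mathcal{G}$ is open), whereas yours depends only on $\kappa$, $p$ and the Sobolev constant, so it honestly fulfills the ``independent of $\{a_k\}$'' claim in the statement. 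What the paper's contradiction argument buys, and your proof defers, is the boundedness of $(x_k-x_0)/\epsilon_k$ itself, which the paper reuses later (after Lemma \ref{lem: estimate of direct term periodic}) to locate the blow-up point and compute $\lambda$; as you correctly note, that exclusion is recovered once the refined bounds on $I(a_k)$ fix the length scale, so nothing is lost in the overall scheme.
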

\begin{proof}	
	We first show that $\frac{x_{k}-x_{0}}{\epsilon_{k}}$ is uniformly
	bounded as $k\to\infty$. On contrary, we assume that there exists a subsequence of $\left\{ a_{k}\right\}$, still denoted by $\left\{ a_{k}\right\}$, such that $\left|\frac{x_{k}-x_{0}}{\epsilon_{k}}\right|\to\infty$ as $k\to\infty$. By Fatou's Lemma we have, for any $0<p\leq 1$ and large constant $M>0$,
	\begin{align*}
	& \lim_{k\to\infty}\frac{1}{\epsilon_{k}^{p}}\int_{\mathbb{R}^{3}}V(\epsilon_{k}x+\epsilon_{k}y_{k})\left|w_{k}(x)\right|^{2}{\rm d}x\geq\int_{\mathbb{R}^{3}}\lim_{k\to\infty}\frac{1}{\epsilon_{k}^{p}}V(\epsilon_{k}x+x_{k})\left|w_{k}(x)\right|^{2}{\rm d}x\\
	& \geq\int_{\mathbb{R}^{3}}\lim_{k\to\infty}\kappa\left|x+\frac{x_{k}-x_{0}}{\epsilon_{k}}\right|^{p}\left|Q(x)\right|^{2}{\rm d}x\geq M.
	\end{align*}
	From \eqref{eq:boson star inequality} and the fact that $\epsilon_{k}^{-3/2}w_{k}(\epsilon_{k}^{-1}x-y_{k})=u_{k}(x)$ is a minimizer of $I(a_{k})$, we have
	\begin{align*}
	I(a_{k}) & =\frac{1}{\epsilon_{k}}\left(\|(-\Delta+m^{2}\epsilon_{k}^{2})^{1/4}w_{k}\|_{L^{2}}^{2}-\frac{a^{*}}{2}\iint_{\mathbb{R}^{3}\times\mathbb{R}^{3}}\frac{\left|w_{k}(x)\right|^{2}\left|w_{k}(y)\right|^{2}}{\left|x-y\right|}{\rm d}x{\rm d}y\right)\\
	& \quad+\frac{a^{*}-a_{k}}{2\epsilon_{k}}\iint_{\mathbb{R}^{3}\times\mathbb{R}^{3}}\frac{\left|w_{k}(x)\right|^{2}\left|w_{k}(y)\right|^{2}}{\left|x-y\right|}{\rm d}x{\rm d}y+\int_{\mathbb{R}^{3}}V(\epsilon_{k}x+z_{k})\left|w_{k}(x)\right|^{2}{\rm d}x\\
	&\geq C_{1}\frac{a^{*}-a_{k}}{\epsilon_{k}}+M\epsilon_{k}^{p}\geq C_{2}M^{\frac{1}{p+1}}\left(a^{*}-a_{k}\right)^{\frac{p}{p+1}}.
	\end{align*}
	The above lower bound of $I(a_{k})$ holds true for any $0<p\leq 1$ and for an arbitrary large $M>0$. This contradicts the upper bound of $I(a_{k})$ in \eqref{ineq: upper bound I(a_{k}) periodic bad}.
	Thus $\frac{x_{k}-x_{0}}{\epsilon_{k}}$ is uniformly bounded, and hence there exists a constant $C_{3}$ such that $\frac{x_{k}-x_{0}}{\epsilon_{k}}\to C_{3}$
	as $k\to\infty$. Hence, by Fatou's Lemma we obtain, for $0<p\leq 1$, 
	\begin{align*}
	&
	\lim_{k\to\infty}\frac{1}{\epsilon_{k}^{p}}\int_{\mathbb{R}^{3}}V(\epsilon_{k}x+\epsilon_{k}y_{k})\left|w_{k}(x)\right|^{2}{\rm d}x \geq\int_{\mathbb{R}^{3}}\lim_{k\to\infty}\frac{1}{\epsilon_{k}^{p}}V(\epsilon_{k}x+x_{k})\left|w_{k}(x)\right|^{2}{\rm d}x\\
	&
	\geq\int_{\mathbb{R}^{3}}\lim_{k\to\infty}\kappa\left|x+\frac{x_{k}-x_{0}}{\epsilon_{k}}\right|^{p}\left|Q(x)\right|^{2}{\rm d}x\geq \int_{\mathbb{R}^{3}}\kappa\left|x+C_{3}\right|^{p}\left|Q(x)\right|^{2}{\rm d}x=C_{0},
	\end{align*}
	for some constant $C_{0}>0$ independent of $a_{k}$. 
\end{proof}

Now we are able to improve the lower bound of $I(a_{k})$ in \eqref{ineq: upper bound I(a_{k}) periodic bad}.
\begin{lem}\label{lem: estimate I(a_{k}) periodic}	There exist positive constants $M_{1}<M_{2}$ independent of $a_{k}$ such that 
	\begin{equation}\label{ineq: upper bound I(a_{k}) periodic}
	M_{1}(a^{*}-a_{k})^{\frac{q}{q+1}}\leq I(a_{k})\leq M_{2}(a^{*}-a_{k})^{\frac{q}{q+1}}.
	\end{equation}
\end{lem}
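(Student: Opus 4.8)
The upper bound in \eqref{ineq: upper bound I(a_{k}) periodic} is already contained in \eqref{ineq: upper bound I(a_{k}) periodic bad}, so the only task is to sharpen the lower bound from the crude order $(a^{*}-a_{k})$ to the sharp order $(a^{*}-a_{k})^{q/(q+1)}$. The plan is to reproduce the energy decomposition \eqref{lim:I(a_{k})} from the trapping case, now centered at the translate $\epsilon_{k}y_{k}$ produced in Lemma \ref{lem: important periodic}. Writing $u_{k}(x)=\epsilon_{k}^{-3/2}w_{k}(\epsilon_{k}^{-1}x-y_{k})$ and rescaling each term of $\mathcal{E}_{a_{k}}(u_{k})$, then splitting $a_{k}=a^{*}-(a^{*}-a_{k})$ and applying \eqref{eq:boson star inequality} together with the operator inequality \eqref{ineq:operator}, I would arrive at
\begin{align*}
I(a_{k}) &\geq m^{2}\epsilon_{k}\bigl\|\bigl((-\Delta+m^{2}\epsilon_{k}^{2})^{1/2}+(-\Delta)^{1/2}\bigr)^{-1/2}w_{k}\bigr\|_{L^{2}}^{2}\\
&\quad +\frac{a^{*}-a_{k}}{2\epsilon_{k}}\iint_{\mathbb{R}^{3}\times\mathbb{R}^{3}}\frac{|w_{k}(x)|^{2}|w_{k}(y)|^{2}}{|x-y|}\,{\rm d}x\,{\rm d}y+\int_{\mathbb{R}^{3}}V(\epsilon_{k}x+\epsilon_{k}y_{k})|w_{k}(x)|^{2}\,{\rm d}x,
\end{align*}
where all three terms are nonnegative. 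The decisive point is that the precise relation between $\epsilon_{k}$ and $a^{*}-a_{k}$ is never needed: whatever value $\epsilon_{k}>0$ takes, a weighted arithmetic--geometric (Young) inequality turns a sum of the shape $\alpha\epsilon_{k}^{-1}+\beta\epsilon_{k}^{p}$ (respectively $\alpha\epsilon_{k}^{-1}+\beta\epsilon_{k}$) into a lower bound proportional to $\alpha^{p/(p+1)}\beta^{1/(p+1)}$ (respectively $\sqrt{\alpha\beta}$), uniformly in $\epsilon_{k}$.

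By Lemma \ref{lem: important periodic} we have $w_{k}\to Q\in\mathcal{G}$ strongly in $H^{1/2}(\mathbb{R}^{3})$, so the Hardy--Littlewood--Sobolev inequality gives $\iint|w_{k}|^{2}|w_{k}|^{2}|x-y|^{-1}\to 2/a^{*}$, and hence the middle term is bounded below by $c(a^{*}-a_{k})/\epsilon_{k}$ for $k$ large. For $p\leq1$ (so $q=p$) I would discard the kinetic term and combine the interaction term with the potential term: Lemma \ref{lem:V/epsilon periodic} yields $\int V(\epsilon_{k}x+\epsilon_{k}y_{k})|w_{k}|^{2}\geq \tfrac{C_{0}}{2}\epsilon_{k}^{p}$ for $k$ large, whence
$$
I(a_{k})\geq \frac{c(a^{*}-a_{k})}{\epsilon_{k}}+\frac{C_{0}}{2}\epsilon_{k}^{p}\geq M_{1}(a^{*}-a_{k})^{\frac{p}{p+1}}
$$
by Young's inequality. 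For $p>1$ (so $q=1$), where Lemma \ref{lem:V/epsilon periodic} is unavailable, I would instead discard the potential term and combine the kinetic term with the interaction term; using Fatou's lemma as in \eqref{lim:kinetic} gives $\liminf_{k}\|((-\Delta+m^{2}\epsilon_{k}^{2})^{1/2}+(-\Delta)^{1/2})^{-1/2}w_{k}\|_{L^{2}}^{2}\geq\tfrac12\|(-\Delta)^{-1/4}Q\|_{L^{2}}^{2}\geq\tfrac12$, so for $k$ large
$$
I(a_{k})\geq c'm^{2}\epsilon_{k}+\frac{c(a^{*}-a_{k})}{\epsilon_{k}}\geq M_{1}(a^{*}-a_{k})^{\frac12}
$$
by the arithmetic--geometric mean inequality. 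Since $\tfrac{p}{p+1}\leq\tfrac12$ for $p\leq1$, the two cases together give the lower bound $M_{1}(a^{*}-a_{k})^{q/(q+1)}$.

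The main obstacle, as usual in this type of argument, is the careful bookkeeping of the $o(1)$ corrections: those coming from the strong convergence $w_{k}\to Q$, and especially the one from Lemma \ref{lem:V/epsilon periodic}, which supplies only a limiting inequality. I must check that the extracted constants $c,c',C_{0}$ remain bounded away from zero for all $k$ large enough, so that the resulting $M_{1}$ is genuinely independent of the subsequence $\{a_{k}\}$. The conceptual heart of the proof, however, is the Young/AM--GM step, which is exactly what makes the estimate insensitive to the unknown blow-up rate of $\epsilon_{k}$ and thereby upgrades the crude bound $(a^{*}-a_{k})$ to the sharp $(a^{*}-a_{k})^{q/(q+1)}$.
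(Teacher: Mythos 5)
Your proof is correct, and for $0<p\le 1$ it is essentially the paper's proof: the paper likewise invokes the decomposition underlying Lemma \ref{lem:V/epsilon periodic} to get $I(a_k)\ge C_1(a^*-a_k)\epsilon_k^{-1}+C_0\epsilon_k^{p}$ and then optimizes in $\epsilon_k$, which is exactly your Young-inequality step. For $p>1$, however, you take a genuinely different route. The paper does not return to the rescaled decomposition there; it recycles the trapping-case argument of Lemma \ref{lem:estimate of I(a_{k}) trapping}: since $V\ge 0$, the already-established upper bound $I(a_k)\le M_2(a^*-a_k)^{1/2}$ forces $\|(-\Delta+m^{2})^{1/4}u_{k}\|_{L^{2}}^{2}\le M_2 a^*(a^*-a_k)^{-1/2}$, and then \eqref{eq:boson star inequality}, \eqref{ineq:operator} and H\"older's inequality give $I(a_k)\ge \tfrac{m^{2}}{2}\|(-\Delta+m^{2})^{-1/4}u_{k}\|_{L^{2}}^{2}\ge m^{2}\|u_{k}\|_{L^{2}}^{4}\big/\bigl(2\|(-\Delta+m^{2})^{1/4}u_{k}\|_{L^{2}}^{2}\bigr)\ge M_1(a^*-a_k)^{1/2}$. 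That argument is purely variational and needs no compactness input, whereas your $p>1$ argument leans on the strong convergence $w_k\to Q$ from Lemma \ref{lem: important periodic} (Fatou for the kinetic remainder, Hardy--Littlewood--Sobolev for the interaction term); this is legitimate and non-circular, since that lemma is proved beforehand using only \eqref{ineq: upper bound I(a_{k}) periodic bad} and \eqref{ineq:estimate of direct term-1}. What your version buys is uniformity of treatment --- a single decomposition for both cases, with the AM--GM step making the unknown size of $\epsilon_k$ irrelevant; what the paper's route buys is economy, avoiding the concentration-compactness machinery for $p>1$. The caveat you flag yourself --- that the constants produced by Fatou/HLS hold along the subsequence on which $w_k\to Q$ --- is real but harmless: since the claim concerns the function $a\mapsto I(a)$, one upgrades to the full family by contradiction (any sequence violating the bound would admit a subsequence on which your argument applies), and the paper's own proof for $p\le 1$, resting on the subsequence-based Lemma \ref{lem:V/epsilon periodic}, carries exactly the same caveat.
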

\begin{proof}
	By \eqref{ineq: upper bound I(a_{k}) periodic bad}, we only need to prove the lower bound in \eqref{ineq: upper bound I(a_{k}) periodic}. Since $I(a_k)$ is decreasing and uniformly bounded for $0\leq a_k\leq a^*$, it suffices to consider the case when $a_k$ is close to $a^*$. As usual, the proof is divided into two case.
	\begin{itemize}
		\item If $0<p\leq 1$, then $q=p$. By the same argument of the proof of \eqref{ineq: refine V periodic} we have 
		$$
		I(a_{k})\geq C_{1}\frac{a^{*}-a_{k}}{\epsilon_{k}}+C_{0}\epsilon_{k}^{p}\geq C_{2}C_{0}^{\frac{1}{p+1}}\left(a^{*}-a_{k}\right)^{\frac{p}{p+1}} = M_{1} \left(a^{*}-a_{k}\right)^{\frac{p}{p+1}}.
		$$
	\end{itemize}
	
	\begin{itemize}
		\item If $p>1$, then $q=1$. By the same arguments of proof of lower bound in \eqref{ineq:estimate I(a_{k})}, we arrive at the lower bound in \eqref{ineq: upper bound I(a_{k}) periodic}.
	\end{itemize}
\end{proof}

By Lemma \ref{lem: estimate I(a_{k}) periodic} and using a similar procedures as the proof of Lemma \ref{lem:estimate of direct}, we obtain the
following estimates for minimizers of $I(a_{k})$.
\begin{lem}\label{lem: estimate of direct term periodic} There exist positive constants $K_{1}<K_{2}$ independent of $a_{k}$ such that 
	$$
	K_{1}(a^{*}-a_{k})^{-\frac{1}{q+1}}\leq\iint_{\mathbb{R}^{3}\times\mathbb{R}^{3}}\frac{|u(x)|^{2}|u(y)|^{2}}{\left|x-y\right|}{\rm d}x{\rm d}y\leq K_{2}(a^{*}-a_{k})^{-\frac{1}{q+1}}.
	$$
\end{lem}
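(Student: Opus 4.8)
The plan is to imitate the proof of Lemma~\ref{lem:estimate of direct} essentially verbatim, the sole change being that the two-sided control of the ground state energy is now supplied by the sharpened estimate \eqref{ineq: upper bound I(a_{k}) periodic} of Lemma~\ref{lem: estimate I(a_{k}) periodic} in place of \eqref{ineq:estimate I(a_{k})}. Throughout I abbreviate $D(u):=\iint_{\mathbb{R}^{3}\times\mathbb{R}^{3}}|u(x)|^{2}|u(y)|^{2}|x-y|^{-1}\,{\rm d}x\,{\rm d}y$, and I let $u_{b}$ denote a non-negative minimizer of $I(b)$ for $0\le b<a^{*}$, which exists by Theorem~\ref{thm:existence of minimizers}. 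The crucial algebraic point that makes the argument go through unchanged is that both inequalities in \eqref{ineq: upper bound I(a_{k}) periodic} now carry the \emph{same} exponent $\tfrac{q}{q+1}$, so that the gain/loss exponents match after differentiating in the coupling constant.

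For the upper bound, the interpolation inequality \eqref{eq:boson star inequality} (together with $V\ge 0$ and $\|(-\Delta+m^{2})^{1/4}u_{k}\|_{L^{2}}\ge\|(-\Delta)^{1/4}u_{k}\|_{L^{2}}$) gives, exactly as in Lemma~\ref{lem:estimate of direct},
\begin{equation*}
I(a_{k})=\mathcal{E}_{a_{k}}(u_{k})\ge\frac{a^{*}-a_{k}}{2}\,D(u_{k}),
\end{equation*}
whence $D(u_{k})\le 2(a^{*}-a_{k})^{-1}I(a_{k})\le 2M_{2}(a^{*}-a_{k})^{\frac{q}{q+1}-1}=2M_{2}(a^{*}-a_{k})^{-\frac{1}{q+1}}$ by the upper bound in \eqref{ineq: upper bound I(a_{k}) periodic}; this provides $K_{2}$.

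For the lower bound I would use $u_{k}$ and $u_{b}$ as competitors for $I(b)$ and $I(a_{k})$ respectively. Since $\mathcal{E}_{b}(u)=\mathcal{E}_{a_{k}}(u)+\tfrac{a_{k}-b}{2}D(u)$, testing $I(b)\le\mathcal{E}_{b}(u_{k})$ and $I(a_{k})\le\mathcal{E}_{a_{k}}(u_{b})$ yields on the one hand the monotonicity $D(u_{k})\ge D(u_{b})$ for $0\le b\le a_{k}$, and on the other hand
\begin{equation*}
D(u_{k})\ge 2\,\frac{I(b)-I(a_{k})}{a_{k}-b}\ge 2\,\frac{M_{1}(a^{*}-b)^{\frac{q}{q+1}}-M_{2}(a^{*}-a_{k})^{\frac{q}{q+1}}}{a_{k}-b},
\end{equation*}
again by \eqref{ineq: upper bound I(a_{k}) periodic}. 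Choosing $b=a_{k}-\gamma(a^{*}-a_{k})$ with $\gamma>0$ turns the right-hand side into $(a^{*}-a_{k})^{-\frac{1}{q+1}}\cdot\tfrac{2}{\gamma}\bigl(M_{1}(1+\gamma)^{\frac{q}{q+1}}-M_{2}\bigr)$, and since $\tfrac{q}{q+1}>0$ the bracket is strictly positive once $\gamma$ is fixed large enough; this handles $a_{k}$ close to $a^{*}$. For the remaining values of $a_{k}$, which are bounded away from $a^{*}$, the factor $(a^{*}-a_{k})^{-\frac{1}{q+1}}$ is bounded while $D(u_{k})\ge D(u_{b})$ is bounded below by a positive constant, so a suitable constant again works. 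Taking $K_{1}$ to be the minimum of the two constants so produced finishes the proof.

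The only point requiring care---and the sole genuine obstacle---is that the choice $b=a_{k}-\gamma(a^{*}-a_{k})$ must satisfy $0\le b<a^{*}$ so that $u_{b}$ exists and \eqref{ineq: upper bound I(a_{k}) periodic} may legitimately be applied to $I(b)$. This is automatic for $\gamma$ fixed and $a_{k}$ sufficiently near $a^{*}$, which is precisely why the argument is split into a regime near $a^{*}$ (handled by the differentiated inequality) and a compact remainder (handled by monotonicity). Everything else is the routine scaling computation of Lemma~\ref{lem:estimate of direct}, now valid for periodic $V$ because the periodicity entered only through the energy bounds, which have already been established in Lemma~\ref{lem: estimate I(a_{k}) periodic}.
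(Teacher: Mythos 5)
Your proposal is correct and follows essentially the same route as the paper: the paper proves this lemma by invoking Lemma \ref{lem: estimate I(a_{k}) periodic} together with the procedure of Lemma \ref{lem:estimate of direct} (i.e., comparing $\mathcal{E}_{b}(u_{k})$ and $\mathcal{E}_{a_{k}}(u_{b})$, differentiating in the coupling constant, and choosing $b=a_{k}-\gamma(a^{*}-a_{k})$), which is exactly what you have written out in detail. Your observation that the argument only works because the improved lower bound in \eqref{ineq: upper bound I(a_{k}) periodic} carries the same exponent $\frac{q}{q+1}$ as the upper bound is precisely the reason the paper establishes Lemma \ref{lem: estimate I(a_{k}) periodic} before this one.
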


Now let $\epsilon_{k}:=(a^{*}-a_{k})^{\frac{1}{q+1}}>0$, we see that $\epsilon_{k}\to 0$ as $k\to\infty$. We define $\tilde{w}_{k}(x):=\epsilon_{k}^{3/2}u_{k}(\epsilon_{k}x)$ be $L^{2}$--normalized of $u_{k}$. It follows from Lemma \ref{lem: estimate of direct term periodic} that
\begin{equation}\label{ineq:estimate direct term of w_k-0}
0<K_{1}\leq\iint_{\mathbb{R}^{3}\times\mathbb{R}^{3}}\frac{\left|\tilde{w}_{k}(x)\right|^{2}\left|\tilde{w}_{k}(y)\right|^{2}}{\left|x-y\right|}{\rm d}x{\rm d}y\leq K_{2}.
\end{equation}

By the same arguments of proof of Lemma \ref{lem: important periodic} (ii), we can prove (in replacing contradiction \eqref{eq:convergence of direct term and epsilon 1} by \eqref{ineq:estimate direct term of w_k-0})
that there exist sequence $\left\{ y_{k}\right\}\subset\mathbb{R}^{3}$ and positive constant $R_{1}$ such that 
\begin{equation}\label{eq:vanishing7}
\liminf_{k\to\infty}\int_{B(y_{k},R_{1})}\left|\tilde{w}_{k}(x)\right|^{2}{\rm d}x>0.
\end{equation}
Moreover, we can write $\epsilon_{k}y_{k}=z_{k}+x_{k}$ such that $z_{k}\in\mathbb{Z}^3$ and $x_{k}\in [0,1]^3$.

By the same arguments of proof of Lemma \ref{lem:V/epsilon periodic}, we can prove that there exists a constant $C_{0}$ such that $\frac{x_{k}-x_{0}}{\epsilon_{k}}\to C_{0}$
as $k\to\infty$. Thus, by Fatou's Lemma we have, for $0<p\leq 1$,
\begin{align*}
&
\lim_{k\to\infty}\frac{1}{\epsilon_{k}^{p}}\int_{\mathbb{R}^{3}}V(\epsilon_{k}x+\epsilon_{k}y_{k})\left|w_{k}(x)\right|^{2}{\rm d}x \geq\int_{\mathbb{R}^{3}}\lim_{k\to\infty}\kappa\left|x+\frac{x_{k}-x_{0}}{\epsilon_{k}}\right|^{p}\left|w_{k}(x)\right|^{2}{\rm d}x\\
&
=\int_{\mathbb{R}^{3}}\kappa\left|x+C_{0}\right|^{p}\left|w(x)\right|^{2}{\rm d}x=\frac{\kappa}{\lambda^{p}}\int_{\mathbb{R}^{3}}\left|x+C_{0}\lambda\right|^{p}\left|Q(x)\right|^{2}{\rm d}x\\
& \geq\frac{\kappa}{\lambda^{p}}\int_{\mathbb{R}^{3}}\left|x\right|^{p}\left|Q(x)\right|^{2}{\rm d}x,
\end{align*}
since $Q\in \mathcal{G}$ is a radial decreasing function.

By the same arguments of proof of Theorem \ref{thm:behavior-trapping} we can prove that the following strongly convergences hold true in $H^{1/2}(\mathbb{R}^3)$.
 \begin{itemize}
	\item If $0<p\leq 1$, then
	$$
	\lim_{k\to\infty}\left(a^{*}-a_{k}\right)^{\frac{3}{2\left(p+1\right)}}u_{a_{k}}\left(x_{0}+z_{k}+x\left(a^{*}-a_{k}\right)^{\frac{1}{p+1}}\right)=\lambda^{\frac{3}{2}}Q\left(\lambda x\right)
	$$
	
	\item If $p> 1$, then there exists a sequence $\{x_{k}\}\subset [0,1]^3$ such that
	$$
	\lim_{k\to\infty}\left(a^{*}-a_{k}\right)^{\frac{3}{4}}u_{a_{k}}\left(x_{k}+z_{k}+x\left(a^{*}-a_{k}\right)^{\frac{1}{2}}\right)=\lambda^{\frac{3}{2}}Q\left(\lambda x\right)
	$$
\end{itemize}
where $\lambda$ is determined similarly as in Theorem \ref{thm:behavior-trapping}.

\subsection{Proof of Theorem \ref{thm:behavior-ring-shaped}}

Let $p>0$, and we define $q=\min\{p,1\}$.

First, it follows from the fact that $||x|-1|\leq |x-1|$ and the same argument of the proof of upper bound in \eqref{ineq:estimate I(a_{k})}, that there exists a constant $M_{2}>0$ such that 
\begin{equation}\label{upper bound I(a_{k}) ring-shape}
I(a_{k})\leq M_{2}(a^{*}-a_{k})^{\frac{q}{q+1}}.
\end{equation}

From \eqref{eq:boson star inequality} we have 
$$
0\leq \|(-\Delta)^{1/4}u_{k}\|_{L^{2}}^{2}-\frac{a_{k}}{2}\iint_{\mathbb{R}^{3}\times\mathbb{R}^{3}}\frac{\left|u_{k}(x)\right|^{2}\left|u_{k}(y)\right|^{2}}{\left|x-y\right|}{\rm d}x{\rm d}y\leq I(a_{k}).
$$
Thus, we deduce from the upper bound of $I(a_{k})$ in
\eqref{upper bound I(a_{k}) ring-shape} that 
\begin{equation}\label{eq:convergence of direct term and epsilon}
\lim_{k\to\infty}\|(-\Delta)^{1/4}u_{k}\|_{L^{2}}^{-2}\iint_{\mathbb{R}^{3}\times\mathbb{R}^{3}}\frac{\left|u_{k}(x)\right|^{2}\left|u_{k}(y)\right|^{2}}{\left|x-y\right|}{\rm d}x{\rm d}y=\frac{2}{a^{*}}.
\end{equation}
\begin{lem}\label{lem: important ring-shape} We define $\epsilon_{k}^{-1}:=\|(-\Delta)^{1/4}u_{k}\|_{L^{2}}^{2}$. Then the following statements hold true
	\begin{itemize}
		\item[(i)] $\epsilon_{k}\to0$ as $k\to\infty$.
		
		\item[(ii)] There exist a sequence $\left\{ y_{k}\right\}\subset\mathbb{R}^{3}$
		and a postitive constant $R_{2}$ such that the sequence $w_{k}(x)=\epsilon_{k}^{3/2}u_{k}\left(\epsilon_{k}x+\epsilon_{k}y_{k}\right)$ satisfies 
		\begin{equation}\label{eq:vanishing1}
		\liminf_{k\to\infty}\int_{B(0,R_{2})}\left|w_{k}(x)\right|^{2}{\rm d}x>0.
		\end{equation}
		
		\item[(iii)] Let $x_{k}:=\epsilon_{k}y_{k}$, then the sequence $\left\{ x_{k}\right\} $ is bounded uniformly for $k\to\infty$. Moreover $x_{k}\to x_{0}$ as $k\to\infty$, for some $x_{0}\in\mathbb{R}^3$ such that $\left|x_{0}\right|=1$.
	\end{itemize}
	
	Futhermore, we have $w_{k}(x)\to Q(x)$
	strongly in $H^{1/2}(\mathbb{R}^{3})$ as $k\to\infty$, where $Q\in \mathcal{G}$. 
\end{lem}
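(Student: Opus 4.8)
The proof runs in close parallel to Lemma \ref{lem: important periodic}, the crucial new input being that the ring structure of $V$ forces the concentration point onto the unit sphere. For (i), I argue by contradiction: if $\epsilon_k\not\to 0$, then along a subsequence $\|(-\Delta)^{1/4}u_k\|_{L^2}^2$ stays bounded, so $\{u_k\}$ is bounded in $H^{1/2}(\mathbb{R}^3)$. Since $V(x)=\left||x|-1\right|^p$ is itself trapping (it tends to $\infty$ as $|x|\to\infty$) and $\int_{\mathbb{R}^3}V|u_k|^2\le I(a_k)\to 0$ is bounded, Lemma \ref{lem:compact embedding} yields $u_k\to u$ strongly in $L^r(\mathbb{R}^3)$ for $2\le r<3$, so that $\|u\|_{L^2}=1$. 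Passing to the limit in $\mathcal{E}_{a_k}(u_k)=I(a_k)$ via weak lower semicontinuity of the kinetic term, Fatou's lemma for $\int V|u_k|^2$, the Hardy--Littlewood--Sobolev convergence of the interaction term, and $I(a_k)\to I(a^*)=\inf V=0$, I obtain $\mathcal{E}_{a^*}(u)\le 0$, hence $u$ is a minimizer for $I(a^*)$, contradicting Theorem \ref{thm:existence of minimizers}(ii).

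For (ii), I rescale $\tilde w_k(x):=\epsilon_k^{3/2}u_k(\epsilon_k x)$, which satisfies $\|\tilde w_k\|_{L^2}=1$ and $\|(-\Delta)^{1/4}\tilde w_k\|_{L^2}^2=\epsilon_k\|(-\Delta)^{1/4}u_k\|_{L^2}^2=1$, so $\{\tilde w_k\}$ is bounded in $H^{1/2}(\mathbb{R}^3)$. By \eqref{eq:convergence of direct term and epsilon}, its interaction term equals $\epsilon_k\iint|u_k|^2|u_k|^2/|x-y|\to 2/a^*>0$. Exactly as in Lemma \ref{lem: important periodic}(ii), a vanishing alternative is excluded: if $\sup_{y}\int_{B(y,R)}|\tilde w_k|^2\to 0$, then \cite[Lemma 9]{LeLe-11} forces $\|\tilde w_k\|_{L^r}\to 0$ for $2\le r<3$, hence the interaction term vanishes by Hardy--Littlewood--Sobolev, a contradiction. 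This produces $\{y_k\}$ and $R_2$ with the stated non-vanishing, and setting $w_k(x):=\tilde w_k(x+y_k)$ gives \eqref{eq:vanishing1}.

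The heart of the matter is (iii). Writing $x_k:=\epsilon_k y_k$, the change of variables $z=\epsilon_k x+x_k$ gives the identity $\int_{\mathbb{R}^3}V(\epsilon_k x+x_k)|w_k(x)|^2\,dx=\int_{\mathbb{R}^3}V|u_k|^2\le I(a_k)\to 0$. First I show $\{x_k\}$ is bounded: if $|x_k|\to\infty$, then for each fixed $x$ one has $V(\epsilon_k x+x_k)=\left||\epsilon_k x+x_k|-1\right|^p\to\infty$, while $w_k\to w\not\equiv 0$ a.e. (the limit being nontrivial by (ii) and strong $L^2_{\rm loc}$ convergence), so Fatou's lemma forces the integral to blow up, contradicting its vanishing. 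Hence $x_k\to x_0$ along a subsequence. Applying Fatou once more with the pointwise limit $V(\epsilon_k x+x_k)\to V(x_0)$ gives $0\ge \left||x_0|-1\right|^p\|w\|_{L^2}^2$, and since $\|w\|_{L^2}>0$ this pins $|x_0|=1$. For the final strong convergence I identify the scaled multiplier: from \eqref{eq:Lagrange mutiplier}, $\epsilon_k\mu_k=\epsilon_k I(a_k)-\tfrac{a_k}{2}\,\epsilon_k\iint|u_k|^2|u_k|^2/|x-y|\to 0-\tfrac{a^*}{2}\cdot\tfrac{2}{a^*}=-1$, using \eqref{eq:convergence of direct term and epsilon} and $\epsilon_k I(a_k)\to 0$. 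Since $\{w_k\}$ is bounded in $H^{1/2}$ and $\epsilon_k\mu_k\to-1<0$, Lemma \ref{lem:positive ground state solution} (with $\lambda=1$) gives $w_k\to w$ strongly in $H^{1/2}(\mathbb{R}^3)$ with either $w\equiv 0$ or $w>0$; the non-vanishing from (ii) rules out the former, so after absorbing the translation into $y_k$ we get $w=Q$ for some $Q\in\mathcal{G}$.

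The main obstacle is part (iii): the two applications of Fatou's lemma driven by the vanishing $\int V|u_k|^2\to 0$, which must simultaneously confine the mass centers $x_k$ to a bounded set and pin the limiting concentration point exactly to the sphere $|x_0|=1$. The remaining steps are transcriptions of the periodic argument, the one genuine simplification being that the ring potential is trapping, so that (i) may invoke Lemma \ref{lem:compact embedding} directly rather than a full concentration-compactness dichotomy.
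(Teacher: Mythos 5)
Your proof is correct and follows essentially the same route as the paper's: part (i) by contradiction using Lemma \ref{lem:compact embedding} and the non-existence of a minimizer at $a^{*}$ from Theorem \ref{thm:existence of minimizers} (ii); part (ii) by rescaling, the limit \eqref{eq:convergence of direct term and epsilon}, and the vanishing-exclusion argument of Lemma \ref{lem: important periodic} (ii); and part (iii) by the two Fatou arguments played against $\int V|u_{k}|^{2}\leq I(a_{k})\to 0$, followed by the multiplier limit $\epsilon_{k}\mu_{k}\to-1$ and Lemma \ref{lem:positive ground state solution}. The only difference is cosmetic: you make explicit a few steps (the a.e.\ nontriviality of the limit $w$, the computation of $\epsilon_{k}\mu_{k}$ from \eqref{eq:Lagrange mutiplier}) that the paper handles by referring back to the proof of Lemma \ref{lem: important periodic}.
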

\begin{proof}
	(i) On the contrary, we assume that there exists subsequence of $\left\{ a_{k}\right\} $,
	still denoted by $\left\{ a_{k}\right\} $, such that $\left\{ u_{k}\right\} $ is bounded in $H^{1/2}(\mathbb{R}^{3})$.
	We infer from Lemma \ref{lem:compact embedding} that there exist $u\in H^{1/2}(\mathbb{R}^{3})$
	such that $u_{k}\rightharpoonup u$ weakly in $H^{1/2}(\mathbb{R}^{3})$
	and $u_{k}\to u$ strongly in $L^{r}(\mathbb{R}^{3})$ for $2\leq r<3$.
	By the Hardy-Littlewood-Sobolev inequality we have 
	$$
	\lim_{k\to\infty}\iint_{\mathbb{R}^{3}\times\mathbb{R}^{3}}\frac{\left|u_{k}(x)\right|^{2}\left|u_{k}(y)\right|^{2}}{\left|x-y\right|}{\rm d}x{\rm d}y=\iint_{\mathbb{R}^{3}\times\mathbb{R}^{3}}\frac{|u(x)|^{2}|u(y)|^{2}}{\left|x-y\right|}{\rm d}x{\rm d}y.
	$$
	Consequently, we have 
	$$
	0=I(a^{*})\leq\mathcal{E}_{a^{*}}\left(u\right)\leq\lim_{k\to\infty}\mathcal{E}_{a_{k}}\left(u_{k}\right)=\lim_{k\to\infty}I(a_{k})=0.
	$$
	This indicates that $u$ is a minimizer of $I(a^{*})$,
	which contradicts to Theorem \ref{thm:existence of minimizers} (ii).
	Thus the claim (i) holds true.
	
	(ii) The proof of \eqref{eq:vanishing1} follows from the same arguments of proof of \eqref{eq:vanishing6} in Lemma \ref{lem: important periodic} (ii).	
	
	(iii) On contrary, we suppose $\left\{ x_{k}\right\} $ is unbounded as $k\to\infty$. Then	there exists a subsequence of $\left\{ a_{k}\right\} $, still denoted	by $\left\{ a_{k}\right\} $, such that $|x_{k}|\to\infty$ as $k\to\infty$.	Since $V(x)\to\infty$ as $\left|x\right|\to\infty$, we derive from	\eqref{eq:vanishing1} and Fatou's Lemma that for any $C>0$,
	$$
	\lim_{k\to\infty}\int_{\mathbb{R}^{3}}V(\epsilon_{k}x+x_{k})\left|w_{k}(x)\right|^{2}{\rm d}x\geq C>0
	$$
	which is impossible since 
	$$
	0\leq \int_{\mathbb{R}^{3}}V\left(\epsilon_{k}x+x_{k}\right)\left|w_{k}(x)\right|^{2}{\rm d}x=\int_{\mathbb{R}^{3}}V(x)\left|u_{k}(x)\right|^{2}{\rm d}x\leq I(a_{k})
	$$
	and $I(a_{k})\to0$ as $k\to \infty$, thanks to the uppper bound of $I(a_{k})$ in \eqref{upper bound I(a_{k}) ring-shape}.
	
	Thus $\left\{ x_{k}\right\} $
	is bounded uniformly for $k\to\infty$. Therefore, for any sequence
	$\left\{ a_{k}\right\}$ there exists a subsequence of $\left\{ a_{k}\right\}$, still denoted by $\left\{ a_{k}\right\} $, such that $x_{k}\to x_{0}$ as $k\to\infty$ for some point $x_{0}\in\mathbb{R}^{3}$.
	We claim that $\left|x_{0}\right|=1$. Otherwise, if $\left|x_{0}\right|\ne1$ then $V(x_{0})>0$. We deduce from \eqref{eq:vanishing1} and Fatou's Lemma that 
	$$
	\lim_{k\to\infty}\int_{\mathbb{R}^{3}}V(\epsilon_{k}x+x_{k})\left|w_{k}(x)\right|^{2}{\rm d}x\geq V(x_{0})\lim_{k\to\infty}\int_{\mathbb{R}^{3}}\left|w_{k}(x)\right|^{2}{\rm d}x>0
	$$
	which is clearly impossible as we have seen before. Thus $\left|x_{0}\right|=1$.
	
	Finally, by the same arguments of proof in Lemma \ref{lem: important periodic}, we can show that $w_{k}\to Q$ strongly in $H^{1/2}(\mathbb{R}^3)$ where $Q\in \mathcal{G}$.
\end{proof}
\begin{lem}
	\label{lem:V/epsilon ring-shape}For any $0<p\leq 1$, there exists a constant $C_{0}>0$ independent of $\left\{ a_{k}\right\}$	such that 	\begin{equation}\label{ineq:refine V ring-shape}
	\lim_{k\to\infty}\frac{1}{\epsilon_{k}^{p}}\int_{\mathbb{R}^{3}}V(\epsilon_{k}x+x_{k})\left|w_{k}(x)\right|^{2}{\rm d}x\geq C_{0}.
	\end{equation}
\end{lem}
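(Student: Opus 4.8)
The plan is to follow the proof of Lemma~\ref{lem:V/epsilon periodic} closely; the only genuinely new ingredient is the local geometry of the ring $\{|x|=1\}$ in place of an isolated minimum. Recall from Lemma~\ref{lem: important ring-shape} that $x_k\to x_0$ with $|x_0|=1$, that $w_k\to Q$ strongly in $H^{1/2}(\mathbb{R}^3)$ (hence a.e.\ along a subsequence) for some $Q\in\mathcal{G}$, and that $Q>0$. Writing $V(\epsilon_k x+x_k)=||\epsilon_k x+x_k|-1|^p$ and using the change of variables $\int V(x)|u_k|^2=\int V(\epsilon_k x+x_k)|w_k|^2$, the heart of the matter is the pointwise behaviour of
\[
\frac{|\epsilon_k x+x_k|-1}{\epsilon_k}=\frac{|\epsilon_k x+x_k|-|x_k|}{\epsilon_k}+\frac{|x_k|-1}{\epsilon_k}.
\]
For the first summand I would rationalise, $\frac{|\epsilon_k x+x_k|-|x_k|}{\epsilon_k}=\frac{2x_k\cdot x+\epsilon_k|x|^2}{|\epsilon_k x+x_k|+|x_k|}$, and let $k\to\infty$ using $x_k\to x_0$, $\epsilon_k\to0$, $|x_k|\to1$ to obtain the \emph{tangential} limit $x_0\cdot x$. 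Thus everything reduces to controlling the \emph{radial} quantity $t_k:=(|x_k|-1)/\epsilon_k$, which plays exactly the role of $(x_k-x_0)/\epsilon_k$ in the periodic case.

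First I would show $t_k$ is bounded. Arguing by contradiction, assume $|t_k|\to\infty$ along a subsequence; then by the decomposition above $\big||\epsilon_k x+x_k|-1|^p/\epsilon_k^p\to\infty$ for every fixed $x$, while $|w_k|^2\to|Q|^2>0$ a.e. Fatou's lemma then forces $\epsilon_k^{-p}\int_{\mathbb{R}^3}V(\epsilon_k x+x_k)|w_k|^2\,{\rm d}x\to\infty$, i.e.\ $\int_{\mathbb{R}^3}V|u_k|^2\geq M\epsilon_k^p$ eventually, for arbitrarily large $M$. Inserting this into the energy identity for the minimizer exactly as in the proof of \eqref{ineq: refine V periodic}, and using \eqref{eq:boson star inequality} and \eqref{ineq:operator}, I would get $I(a_k)\geq C_1(a^*-a_k)/\epsilon_k+M\epsilon_k^p\geq C_2M^{1/(p+1)}(a^*-a_k)^{p/(p+1)}$ after optimising the two terms by the arithmetic--geometric mean inequality. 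For $M$ large this contradicts the upper bound $I(a_k)\leq M_2(a^*-a_k)^{q/(q+1)}$ in \eqref{upper bound I(a_{k}) ring-shape} (here $q=p$). Hence $t_k$ is bounded, and passing to a further subsequence $t_k\to t_0\in\mathbb{R}$.

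With $t_k\to t_0$, the decomposition gives $(|\epsilon_k x+x_k|-1)/\epsilon_k\to x_0\cdot x+t_0$ for every $x$, so $\epsilon_k^{-p}V(\epsilon_k x+x_k)\to|x_0\cdot x+t_0|^p$. Combining this with $w_k\to Q$ a.e., Fatou's lemma yields
\[
\lim_{k\to\infty}\frac{1}{\epsilon_k^p}\int_{\mathbb{R}^3}V(\epsilon_k x+x_k)|w_k(x)|^2\,{\rm d}x\geq\int_{\mathbb{R}^3}|x_0\cdot x+t_0|^p|Q(x)|^2\,{\rm d}x=:C_0.
\]
Since $Q>0$ everywhere and $|x_0\cdot x+t_0|^p$ vanishes only on the null hyperplane $\{x_0\cdot x=-t_0\}$, the integrand is positive a.e.\ and $C_0>0$. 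To see that a sequence-independent positive lower bound is available (in particular independent of $t_0$), I would invoke the rearrangement/bathtub argument used after Lemma~\ref{lem:V/epsilon periodic}: as $Q$ is radially symmetric decreasing, its marginal in the direction $x_0$ is symmetric decreasing, whence $\int|x_0\cdot x+t_0|^p|Q|^2\geq\int|x_0\cdot x|^p|Q|^2>0$, a quantity depending only on $p$ and $Q$.

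I expect the main obstacle to be precisely the boundedness of $t_k=(|x_k|-1)/\epsilon_k$: unlike the single-well case, the minimum set is the whole sphere, so one must separate the harmless tangential displacement (which produces the finite limit $x_0\cdot x$ via the rationalisation identity) from the radial displacement, and show that only the latter can cause the potential energy to blow up. Once the tangential/radial splitting is in hand, the contradiction with the energy upper bound runs verbatim as in the periodic case.
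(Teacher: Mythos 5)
Your proposal is correct and follows essentially the same route as the paper's proof: the same tangential/radial decomposition of $(|\epsilon_k x+x_k|-1)/\epsilon_k$, the same Fatou-plus-energy-upper-bound contradiction to show that $(|x_k|-1)/\epsilon_k$ is bounded (hence convergent along a subsequence), and the same final Fatou limit $\int_{\mathbb{R}^3}|x_0\cdot x+t_0|^p|Q(x)|^2\,{\rm d}x$. Your two refinements --- the explicit rationalisation identity giving the tangential limit $x_0\cdot x$, and the rearrangement inequality $\int|x_0\cdot x+t_0|^p|Q|^2\,{\rm d}x\geq\int|x_0\cdot x|^p|Q|^2\,{\rm d}x$ that makes $C_0$ independent of the subsequence --- are precisely the details the paper leaves implicit (the latter appears there only when the lemma is subsequently applied, justified by $Q$ being radial decreasing), so they sharpen rather than change the argument.
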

\begin{proof} We first show that $\frac{\left|x_{k}\right|-1}{\epsilon_{k}}$	is uniformly bounded as $k\to\infty$. On contrary, we assume that there exists a subsequence of $\left\{ a_{k}\right\}$, still denoted by $\left\{ a_{k}\right\}$, such that $\left|\frac{\left|x_{k}\right|-1}{\epsilon_{k}}\right|\to\infty$ as $k\to\infty$. By Fatou's Lemma we have, for any $0<p\leq 1$ and large constant $C>0$,
	\begin{align*}
	& \lim_{k\to\infty}\frac{1}{\epsilon_{k}^{p}}\int_{\mathbb{R}^{3}}V(\epsilon_{k}x+x_{k})\left|w_{k}(x)\right|^{2}{\rm d}x\geq\int_{\mathbb{R}^{3}}\lim_{k\to\infty}\frac{1}{\epsilon_{k}^{p}}\left|\left|\epsilon_{k}x+x_{k}\right|-1\right|^{p}\left|w_{k}(x)\right|^{2}{\rm d}x\\
	& =\int_{\mathbb{R}^{3}}\lim_{k\to\infty}\left|\left|x+\frac{x_{k}}{\epsilon_{k}}\right|-\frac{1}{\epsilon_{k}}\right|^{p}\left|Q(x)\right|^{2}{\rm d}x\geq C.
	\end{align*}
	From \eqref{eq:boson star inequality} and the fact that $\epsilon_{k}^{-3/2}w_{k}(\epsilon_{k}^{-1}x-y_{k})=u_{k}(x)$ is a minimizer of $I(a_{k})$, we have
	\begin{align*}
	I(a_{k}) & =\frac{1}{\epsilon_{k}}\left(\|(-\Delta+m^{2}\epsilon_{k}^{2})^{1/4}w_{k}\|_{L^{2}}^{2}-\frac{a^{*}}{2}\iint_{\mathbb{R}^{3}\times\mathbb{R}^{3}}\frac{\left|w_{k}(x)\right|^{2}\left|w_{k}(y)\right|^{2}}{\left|x-y\right|}{\rm d}x{\rm d}y\right)\\
	& \quad+\frac{a^{*}-a_{k}}{2\epsilon_{k}}\iint_{\mathbb{R}^{3}\times\mathbb{R}^{3}}\frac{\left|w_{k}(x)\right|^{2}\left|w_{k}(y)\right|^{2}}{\left|x-y\right|}{\rm d}x{\rm d}y+\int_{\mathbb{R}^{3}}V(\epsilon_{k}x+x_{k})\left|w_{k}(x)\right|^{2}{\rm d}x\\
	&\geq C_{1}\frac{a^{*}-a_{k}}{\epsilon_{k}}+C\epsilon_{k}^{p}\geq C_{2}C^{\frac{1}{p+1}}\left(a^{*}-a_{k}\right)^{\frac{p}{p+1}}.
	\end{align*}
	Since the above lower bound of $I(a_{k})$ holds true for any $0<p\leq 1$ and for an arbitrary large $C>0$, the above estimates contradicts the
	upper bound of $I(a_{k})$ in \eqref{upper bound I(a_{k}) ring-shape}.
	Thus $\frac{\left|x_{k}\right|-1}{\epsilon_{k}}$ is uniformly bounded, and hence there exists a constant $C_{3}$ such that $\frac{\left|x_{k}\right|-1}{\epsilon_{k}}\to C_{3}$
	as $k\to\infty$. Thus, by Fatou's Lemma again we obtain, for any $0<p\leq 1$,
	\begin{align*}
	& \lim_{k\to\infty}\frac{1}{\epsilon_{k}^{p}}\int_{\mathbb{R}^{3}}V(\epsilon_{k}x+x_{k})\left|w_{k}(x)\right|^{2}{\rm d}x\\
	\geq & \int_{\mathbb{R}^{3}}\lim_{k\to\infty}\left|\frac{\left|\epsilon_{k}x+x_{k}\right|-1}{\epsilon_{k}}+\frac{\left|x_{k}\right|-1}{\epsilon_{k}}\right|^{p}\left|w_{k}(x)\right|^{2}{\rm d}x\\
	= & \int_{\mathbb{R}^{3}}\left|x_{0}\cdot x+C_{3}\right|^{p}\left|Q(x)\right|^{2}{\rm d}x=C_{0},
	\end{align*}
	for some constant $C_{0}>0$ independent of $a_{k}$. 
\end{proof}

We are now able to establish the
lower bound of $I(a_{k})$ in \eqref{upper bound I(a_{k}) ring-shape}.
\begin{lem}\label{lem: estimate I(a_{k}) ring-shape}
	There exist positive constants $M_{1}<M_{2}$ independent of $a_{k}$ such that 
	\begin{equation}\label{ineq: upper bound I(a_{k}) ring-shape}
	M_{1}(a^{*}-a_{k})^{\frac{q}{q+1}}\leq I(a_{k})\leq M_{2}(a^{*}-a_{k})^{\frac{q}{q+1}}.
	\end{equation}
\end{lem}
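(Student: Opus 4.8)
The plan is to prove the lower bound in \eqref{ineq: upper bound I(a_{k}) ring-shape}, since the matching upper bound is already recorded in \eqref{upper bound I(a_{k}) ring-shape}. Following the template of Lemma~\ref{lem: estimate I(a_{k}) periodic}, I would separate the argument into the two regimes $0<p\leq 1$ (where $q=p$) and $p>1$ (where $q=1$).

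For $0<p\leq 1$, I would run the same computation that produced the chain of inequalities in the proof of Lemma~\ref{lem:V/epsilon ring-shape}. Writing $u_{k}(x)=\epsilon_{k}^{-3/2}w_{k}(\epsilon_{k}^{-1}x-y_{k})$ with $\epsilon_{k}^{-1}=\|(-\Delta)^{1/4}u_{k}\|_{L^{2}}^{2}$ and expanding $\mathcal{E}_{a_{k}}(u_{k})$ in the rescaled variables, the interpolation inequality \eqref{eq:boson star inequality} bounds the bracketed kinetic-minus-interaction term from below by zero, so that $I(a_{k})\geq C_{1}(a^{*}-a_{k})/\epsilon_{k}+C_{0}\epsilon_{k}^{p}$, where $C_{0}>0$ is the constant furnished by \eqref{ineq:refine V ring-shape}. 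Minimizing the two-term expression $t\mapsto C_{1}(a^{*}-a_{k})/t+C_{0}t^{p}$ over $t>0$ by the arithmetic--geometric mean inequality then yields $I(a_{k})\geq M_{1}(a^{*}-a_{k})^{p/(p+1)}$.

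For $p>1$, the ring structure of $V$ plays no role and I would argue exactly as in the second case of Lemma~\ref{lem:estimate of I(a_{k}) trapping}. Combining \eqref{eq:boson star inequality} with the upper bound \eqref{upper bound I(a_{k}) ring-shape} gives $\|(-\Delta+m^{2})^{1/4}u_{k}\|_{L^{2}}^{2}\leq M_{2}a^{*}(a^{*}-a_{k})^{-1/2}$; then the operator inequality \eqref{ineq:operator}, the normalization $\|u_{k}\|_{L^{2}}=1$, and H\"older's inequality combine to give $I(a_{k})\geq\tfrac{m^{2}}{2}\|(-\Delta+m^{2})^{-1/4}u_{k}\|_{L^{2}}^{2}\geq M_{1}(a^{*}-a_{k})^{1/2}$.

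The main obstacle is confined to the regime $0<p\leq 1$, where the lower bound hinges entirely on the strictly positive potential contribution $C_{0}\epsilon_{k}^{p}$. Establishing $C_{0}>0$ relies on the two facts proved in Lemma~\ref{lem:V/epsilon ring-shape}: that the concentration point satisfies $(|x_{k}|-1)/\epsilon_{k}\to C_{3}$ for some finite $C_{3}$, and that the limiting density $|Q|^{2}$ gives $\int_{\mathbb{R}^{3}}|x_{0}\cdot x+C_{3}|^{p}|Q(x)|^{2}\,{\rm d}x>0$; this last positivity is immediate since $Q>0$ everywhere. The case $p>1$ is by contrast routine, as it uses only the coercivity of the kinetic energy and is insensitive to the shape of $V$.
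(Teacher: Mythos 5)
Your proposal is correct and takes essentially the same approach as the paper: both reduce the lemma to the lower bound via the already-established upper bound \eqref{upper bound I(a_{k}) ring-shape}, then for $0<p\leq 1$ reuse the chain $I(a_{k})\geq C_{1}(a^{*}-a_{k})/\epsilon_{k}+C_{0}\epsilon_{k}^{p}$ from the proof of Lemma \ref{lem:V/epsilon ring-shape} and optimize, while for $p>1$ they repeat the kinetic-energy/H\"older argument from the lower bound of \eqref{ineq:estimate I(a_{k})} in Lemma \ref{lem:estimate of I(a_{k}) trapping}. Your write-up in fact spells out the details (the role of \eqref{eq:boson star inequality}, the operator inequality \eqref{ineq:operator}, and why $C_{0}>0$) that the paper only cites by reference.
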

\begin{proof}
	By \eqref{upper bound I(a_{k}) ring-shape}, we only need to prove the lower bound in \eqref{ineq: upper bound I(a_{k}) ring-shape}. Since $I(a_k)$ is decreasing and uniformly bounded for $0\leq a_k\leq a^*$, it suffices to consider the case when $a_k$ is close to $a^*$. As usual, the proof is divided into two case.
	\begin{itemize}
		\item If $0<p\leq 1$, then $q=p$. By the same argument of the proof of \eqref{ineq:refine V ring-shape} we have 
		$$
		I(a_{k})\geq C_{1}\frac{a^{*}-a_{k}}{\epsilon_{k}}+C_{0}\epsilon_{k}^{p}\geq C_{2}C_{0}^{\frac{1}{p+1}}\left(a^{*}-a_{k}\right)^{\frac{p}{p+1}} = M_{1} \left(a^{*}-a_{k}\right)^{\frac{p}{p+1}}.
		$$
	\end{itemize}
	
	\begin{itemize}
		\item If $p>1$, then $q=1$. By the same arguments of proof of lower bound in \eqref{ineq:estimate I(a_{k})}, we arrive at the lower bound in \eqref{ineq: upper bound I(a_{k}) ring-shape}.
	\end{itemize}
\end{proof}

By Lemma \ref{lem: estimate I(a_{k}) ring-shape} and using a similar procedures as the proof of Lemma 4 in \cite{Ng-17}, we obtain the
following estimates for minimizers of $I(a_{k})$.
\begin{lem}\label{lem: estimate of direct term ring-shape} There exist positive constants $K_{1}<K_{2}$ independent of $a_{k}$ such that 
	$$
	K_{1}(a^{*}-a_{k})^{-\frac{1}{q+1}}\leq\iint_{\mathbb{R}^{3}\times\mathbb{R}^{3}}\frac{\left|u_{k}(x)\right|^{2}\left|u_{k}(y)\right|^{2}}{\left|x-y\right|}{\rm d}x{\rm d}y\leq K_{2}(a^{*}-a_{k})^{-\frac{1}{q+1}}.
	$$
\end{lem}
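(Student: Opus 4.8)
The plan is to reuse, essentially verbatim, the two–sided scheme already carried out for Lemma \ref{lem:estimate of direct} (and for the periodic case), now feeding in the sharp energy bounds of Lemma \ref{lem: estimate I(a_{k}) ring-shape} in place of those of Lemma \ref{lem:estimate of I(a_{k}) trapping}. The specific ring–shaped structure of $V$ plays no role at this stage: only the two–sided estimate \eqref{ineq: upper bound I(a_{k}) ring-shape} for $I(a_k)$ enters, together with the interpolation inequality \eqref{eq:boson star inequality} and elementary comparison of the functionals $\mathcal{E}_b$ for different coupling constants.

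For the upper bound I would start from the lower bound on the energy that is immediate from \eqref{eq:boson star inequality},
\[
I(a_k)=\mathcal{E}_{a_k}(u_k)\geq \frac{a^{*}-a_k}{2}\iint_{\mathbb{R}^{3}\times\mathbb{R}^{3}}\frac{|u_k(x)|^{2}|u_k(y)|^{2}}{|x-y|}\,{\rm d}x\,{\rm d}y,
\]
and combine it with the upper bound $I(a_k)\leq M_{2}(a^{*}-a_k)^{q/(q+1)}$ of \eqref{ineq: upper bound I(a_{k}) ring-shape}. Dividing yields the double integral $\leq 2M_{2}(a^{*}-a_k)^{q/(q+1)-1}=2M_{2}(a^{*}-a_k)^{-1/(q+1)}$, so $K_{2}=2M_{2}$ works.

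For the lower bound the key is a comparison trick. For any $0\le b<a_k$, testing the minimizer $u_k$ against $\mathcal{E}_b$ gives $I(b)\leq \mathcal{E}_b(u_k)=I(a_k)+\frac{a_k-b}{2}\iint |u_k|^2|u_k|^2/|x-y|$, hence the double integral is bounded below by $2\big(I(b)-I(a_k)\big)/(a_k-b)$. Inserting $I(b)\geq M_{1}(a^{*}-b)^{q/(q+1)}$ and $I(a_k)\leq M_{2}(a^{*}-a_k)^{q/(q+1)}$ from \eqref{ineq: upper bound I(a_{k}) ring-shape}, and then setting $b=a_k-\gamma(a^{*}-a_k)$ with a fixed $\gamma>0$, I obtain
\[
\iint_{\mathbb{R}^{3}\times\mathbb{R}^{3}}\frac{|u_k(x)|^{2}|u_k(y)|^{2}}{|x-y|}\,{\rm d}x\,{\rm d}y\;\geq\;(a^{*}-a_k)^{-\frac{1}{q+1}}\,\frac{2\big(M_{1}(1+\gamma)^{\frac{q}{q+1}}-M_{2}\big)}{\gamma}.
\]
Since $q/(q+1)>0$, the bracket $M_{1}(1+\gamma)^{q/(q+1)}-M_{2}$ is positive once $\gamma$ is taken large enough, which delivers a lower bound of the required form for all $a_k$ close enough to $a^{*}$ (so that the chosen $b$ remains nonnegative).

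The only remaining bookkeeping — and the mild obstacle — is to cover the range of $a_k$ bounded away from $a^{*}$, where $(a^{*}-a_k)^{-1/(q+1)}$ is bounded and one only needs the double integral bounded below by a fixed positive constant. Here I would exploit the monotonicity of the direct term in the coupling: combining the comparison $I(b)\leq\mathcal{E}_b(u_k)$ with $I(a_k)\leq\mathcal{E}_{a_k}(u_b)$ shows that $\iint|u_{a_k}|^2|u_{a_k}|^2/|x-y|$ is nondecreasing in $a_k$, so its value at any fixed coupling constant furnishes a uniform positive lower bound on the compact parameter range. Taking $K_{1}$ to be the minimum of the constant produced near $a^{*}$ and the one produced on the bounded range completes the proof, exactly as in Lemma \ref{lem:estimate of direct} and in \cite{Ng-17}.
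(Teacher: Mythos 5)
Your proposal is correct and takes essentially the same route as the paper: the paper proves this lemma by combining the two-sided energy bound of Lemma \ref{lem: estimate I(a_{k}) ring-shape} with the coupling-comparison procedure of Lemma 4 in \cite{Ng-17} (the same scheme as Lemma \ref{lem:estimate of direct}), which is exactly your argument --- upper bound from \eqref{eq:boson star inequality} and $I(a_k)\leq M_2(a^*-a_k)^{q/(q+1)}$, lower bound from testing $u_k$ in $\mathcal{E}_b$ with $b=a_k-\gamma(a^*-a_k)$ and $\gamma$ large, and monotonicity of the direct term in the coupling to handle $a_k$ bounded away from $a^*$.
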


Now let $\epsilon_{k}:=(a^{*}-a_{k})^{\frac{1}{q+1}}>0$, we see that $\epsilon_{k}\to 0$ as $k\to\infty$. We define $\tilde{w}_{k}(x):=\epsilon_{k}^{3/2}u_{k}(\epsilon_{k}x)$ be $L^{2}$--normalized of $u_{k}$. It follows from Lemma \ref{lem: estimate of direct term ring-shape} that
\begin{equation}\label{ineq:estimate direct term of w_k}
0<K_{1}\leq\iint_{\mathbb{R}^{3}\times\mathbb{R}^{3}}\frac{\left|\tilde{w}_{k}(x)\right|^{2}\left|\tilde{w}_{k}(y)\right|^{2}}{\left|x-y\right|}{\rm d}x{\rm d}y\leq K_{2}.
\end{equation}

By the same arguments of proof of Lemma \ref{lem: important periodic} (ii), we can prove (in replacing contradiction \eqref{eq:convergence of direct term and epsilon 1} by \eqref{ineq:estimate direct term of w_k})
that there exist sequence $\left\{ y_{k}\right\}\subset\mathbb{R}^{3}$ and positive constant $R_{2}$ such that 
\begin{equation}\label{eq:vanishing3}
\liminf_{k\to\infty}\int_{B(y_{k},R_{2})}\left|\tilde{w}_{k}(x)\right|^{2}{\rm d}x>0,
\end{equation}

By the same arguments of proof of Lemma \ref{lem:V/epsilon ring-shape}, we can prove that there exists a constant $C_{0}$ such that $\frac{\left|x_{k}\right|-1}{\epsilon_{k}}\to C_{0}$
as $k\to\infty$. Thus, by Fatou's Lemma we have, for $0<p\leq 1$,
\begin{align*}
& \lim_{k\to\infty}\frac{1}{\epsilon_{k}^{p}}\int_{\mathbb{R}^{3}}V\left(\epsilon_{k}x+x_{k}\right)\left| w_{k}(x)\right|^{2}{\rm d}x \\
& \geq\int_{\mathbb{R}^{3}}\lim_{k\to\infty}\left|\frac{\left|x\epsilon_{k}+x_{k}\right|-\left|x_{k}\right|}{\epsilon_{k}}+\frac{\left|x_{k}\right|-1}{\epsilon_{k}}\right|^{p}\left| w_{k}(x)\right|^{2}{\rm d}x \\
& =\int_{\mathbb{R}^{3}}\left| x_{0}\cdot x+C_{0}\right|^{p}\left| w(x)\right|^{2}{\rm d}x =\frac{1}{\lambda^{p}}\int_{\mathbb{R}^{3}}\left| x_{0}\cdot x + C_{0}\lambda\right|^{p}\left| Q(x)\right|^{2}{\rm d}x\\
& \geq\frac{1}{\lambda^{p}}\int_{\mathbb{R}^{3}}\left| x_{0}\cdot x\right|^{p}\left|Q(x)\right|^{2}{\rm d}x,
\end{align*}
since $Q\in \mathcal{G}$ is a radial decreasing function.

By the same arguments of proof of Theorem \ref{thm:behavior-trapping} we can prove that 
$$
\lim_{k\to\infty}\left(a^{*}-a_{k}\right)^{\frac{3}{2\left(q+1\right)}}u_{a_{k}}\left(x_k+x\left(a^{*}-a_{k}\right)^{\frac{1}{q+1}}\right)=\lambda^{\frac{3}{2}}Q(\lambda x)
$$
strongly in $H^{1/2}(\mathbb{R}^{3})$, where $\lambda$ is determined as in Theorem \ref{thm:behavior-ring-shaped}.

\section*{Acknowledgement}
The author would like to thank P.T. Nam for helpful discussions.

\end{document}